\newcounter{myequation}
\newcommand*{\ie}{\emph{i.e.}}
\newcommand*{\Nat}{\ensuremath{\mathbb{N}}}
\newcommand*{\A}{\Gamma}
\newcommand*{\B}{\Delta}
\newcommand*{\C}{\Sigma}
\newcommand*{\Bool}{\ensuremath{\mathbb{B}}}
\newcommand*{\true}{\textnormal{\textsf{true}}}
\newcommand*{\false}{\textnormal{\textsf{false}}}
\newcommand*{\yalla}{\textnormal{\textsf{Yalla}}}
\newcommand*{\Coq}{\textnormal{\textsf{Coq}}}
\newcommand*{\PTIME}{\textsf{PTIME}}
\newcommand*{\size}[1]{\left|#1\right|}
\newcommand*{\sgte}{{\ensuremath{\bullet}}}
\newcommand*{\epg}{{\ensuremath{\star}}}
\newcommand*{\e}{{\ensuremath{e}}}
\DeclareSymbolFont{epsilon}{OML}{ntxmi}{m}{it}
\DeclareMathSymbol{\epsilon}{\mathord}{epsilon}{"0F}
\newcommand*{\eb}{\ensuremath{\epsilon}}
\newcommand*{\et}{\ensuremath{\varepsilon}}
\newcommand*{\one}{1}
\newcommand*{\zero}{0}
\newcommand*{\tens}{\otimes}
\newcommand*{\plus}{\oplus}
\newcommand*{\orth}{^\bot}
\newcommand*{\pg}{\S}
\newcommand*{\copg}{\bar\pg}
\newcommand*{\ax}{\ensuremath{\text{ax}}}
\newcommand*{\cut}{\ensuremath{\text{cut}}}
\newcommand*{\exc}{\ensuremath{\text{ex}}}
\newcommand*{\llde}[1][]{\ensuremath{\wn_{#1}\textit{d}}}
\newcommand*{\llwk}[1][]{\ensuremath{\wn_{#1}\textit{w}}}
\newcommand*{\llco}[1][]{\ensuremath{\wn_{#1}\textit{c}}}
\newcommand*{\lldg}{\ensuremath{\textnormal{??}}}
\newcommand*{\de}{\depred}
\newcommand*{\co}[1]{\copred\ensuremath{_{#1}}}
\newcommand*{\dg}{\dgpred}
\newcommand*{\prom}{\prompred}
\newcommand*{\promg}{\prompred_g}
\newcommand*{\promleq}{\prompred_{\leq}}
\newcommand*{\logsys}[1]{\textnormal{\textsf{#1}}}
\newcommand*{\LL}{\logsys{LL}}
\newcommand*{\ELL}{\logsys{ELL}}
\newcommand*{\LLL}{\logsys{LLL}}
\newcommand*{\SLL}{\logsys{SLL}}
\newcommand*{\BLL}{\logsys{BLL}}
\newcommand*{\BsLL}{\logsys{B\ensuremath{_S}LL}}
\newcommand*{\seLL}{\logsys{seLL}}
\newcommand{\AIC}[1]{\AxiomC{\ensuremath{#1}}}
\newcommand{\ZIC}[1]{\AxiomC{}\UnaryInfC{\ensuremath{#1}}}
\newcommand{\UIC}[1]{\UnaryInfC{\ensuremath{#1}}}
\newcommand{\BIC}[1]{\BinaryInfC{\ensuremath{#1}}}
\newcommand{\TIC}[1]{\TrinaryInfC{\ensuremath{#1}}}
\newcommand{\QIC}[1]{\QuaternaryInfC{\ensuremath{#1}}}
\newcommand{\RL}[1]{\RightLabel{\ensuremath{#1}}}
\newcommand{\DP}{\DisplayProof}
\newtheorem{thm}{Theorem}
\newtheorem{lem}{Lemma}
\newtheorem{prop}{Proposition}
\newtheorem*{thmst}{Theorem}
\theoremstyle{definition}
\newtheorem{nota}{Notation}
\newtheorem{defi}{Definition}
\theoremstyle{remark}
\newtheorem{rem}{Remark}
\newtheorem{ex}{Example}
\newcommand*{\sset}{\ensuremath{\mathcal{E}}}
\newcommand*{\mpx}[1]{\ensuremath{\wn{\textit{m}_{#1}}}}
\newcommand*{\permof}[2]{#2\cdot#1}
\newcommand*{\superLL}{\logsys{superLL}}
\newcommand*{\ocf}{\oc_f}
\newcommand*{\ocu}{\oc_u}
\newcommand*{\depred}{\textsc{de}}
\newcommand*{\copred}{\textsc{co}}
\newcommand*{\dgpred}{\textsc{dg}}
\newcommand*{\prompred}{\textsc{p}}
\newcommand*{\leqse}{\preccurlyeq}
\newcommand*{\reduc}{\rightsquigarrow}
\newcommand*{\transfo}{\mapsto}
\title{Super Exponentials in Linear Logic\thanks{This work was supported by the IRN Linear Logic, and by the LABEX MILYON (ANR-10-LABX-0070) of Universit\'e de Lyon, within the program ``Investissements d’Avenir'' (ANR-11-IDEX-0007) operated by the French National Research Agency (ANR).}}
\author{Esa\"ie Bauer \quad\qquad\qquad\qquad Olivier Laurent\institute{Univ Lyon, EnsL, UCBL, CNRS,  LIP\\ F-69342, LYON Cedex 07, France}\email{esaie.bauer@irif.fr \qquad\qquad olivier.laurent@ens-lyon.fr}}
\begin{document}
\maketitle

\begin{abstract}
  Following the idea of Subexponential Linear Logic and Stratified Bounded Linear Logic, we propose a new parameterized version of Linear Logic which subsumes other systems like \ELL, \LLL{} or \SLL, by including variants of the exponential rules.
  We call this system Superexponential Linear Logic (\superLL).
  Assuming some appropriate constraints on the parameters of \superLL, we give a generic proof of cut elimination.
  This implies that each variant of Linear Logic which appears as a valid instance of \superLL{} also satisfies cut elimination.
\end{abstract}

Linear logic (\LL) has been introduced by Jean-Yves Girard in 1987~\cite{ll}.
Since then, it has become a pervasive tool in proof theory, in typing systems and semantics for programming languages, in computational complexity theory, etc.
The key property which provides a computational meaning to this logic is cut elimination.

During the years, many variants of \LL{} have been introduced which differ in particular on some specific uses of exponential rules.
Each time a dedicated proof of cut elimination is provided by the authors.
We are interested in finding a generic cut-elimination proof for as many systems as possible.

Proving the cut-elimination theorem for many systems at once is already the idea behind the parametric system of Subexponential Linear Logic (\seLL)~\cite{structexp,subexp}.
However it relies on a parameterized version of Girard's promotion rule, and thus rules out systems based on other kinds of promotions such as functorial promotion.
Parameters of \seLL{} allow to control $\wn$-rules.
Exponential connectives are indexed by some exponential signatures (instead of a single pair $\{\oc,\wn\}$).
These signatures are equipped with a pre-order structure used in extending Girard's promotion rule.
Some closure properties of the parameters (with respect to the pre-order) are required for cut elimination to hold.
The idea of indexing the exponential modalities is also at the heart of Stratified Bounded Linear Logic (\BsLL)~\cite{llcoeffects}. Indexing is there based on a semi-ring endowed with a compatible partial order.

The new system we consider is called Superexponential Linear Logic (\superLL).
Its $\wn$-rules are parameterized by predicates which provide the valid relations between the exponential signatures used in the premises and in the conclusion of each rule.
In order to take into account variants of \LL{} used in implicit computational complexity (\ELL~\cite{lll}, \LLL~\cite{lll}, \SLL~\cite{sll}), it is simpler to consider a system based on a functorial version of promotion together with an explicit digging rule.
As a counter part, we have to understand how this is related with Girard's promotion rule.

Under appropriate axioms on the parameters, we can describe various proof transformations on \superLL{} including in particular cut elimination.
Choosing specific instances of \superLL{} leads to systems equivalent to a number of variants of \LL{} from the literature (some light systems for complexity, but also \seLL{} or \BsLL).

\bigskip

In Sections~\ref{secll} and~\ref{secmiscll}, we recall the definitions of \LL{} and of the variants we are going to consider.
The notion of \sset-formula which deals with indexed exponential connectives is introduced.
Section~\ref{secsuperll} contains the formal definition of the rules of \superLL.
Section~\ref{secce} is the core part of the paper: it contains the proof of the cut-elimination property for \superLL.
After describing the proof sketch (Section~\ref{seccesketch}) which pinpoints the requirements on the parameters, we give the list of axioms we rely on (Section~\ref{secceax}).
These axioms are the crucial ingredients of the substitution lemma (Section~\ref{secsubstlem}) which allows us to eliminate cuts on exponential formulas.
Section~\ref{sectransfo} presents other proof transformations required to move from one presentation of a system to another.
Based on appropriate axioms, it is shown how to introduce the Girard's style promotion rule, or an ordered version of this rule similar to \seLL's promotion.
Finally Section~\ref{secsubsyst} describes how to define the systems of Sections~\ref{secll} and~\ref{secmiscll} as instances of \superLL{} which satisfy the axioms of Section~\ref{secceax} and how to deduce cut elimination from the generic proof of Section~\ref{secce}.

\section{Linear Logic}\label{secll}

In order to cover the various systems under consideration in this paper, we define a generalization of \LL{} formulas with an indexed family of exponential connectives.

\begin{defi}[Linear \sset-Formulas]
Given a set \sset{}, \emph{(linear) \sset-formulas} are generated by:
\begin{equation*}
A ::= X \mid X\orth \mid A\tens A \mid A\parr A \mid \one \mid \bot \mid A\with A \mid A\plus A \mid \top \mid \zero \mid \oc_\e A \mid \wn_\e A
\qquad\text{where $\e\in\sset$.}
\end{equation*}
\end{defi}

\begin{nota}
  Elements of \sset{} are called \emph{exponential signatures}.
  If $\vec{e}=\e_1,\dotsc,\e_n$, we use the notation $\wn_{\vec{\e}}A$ for $\wn_{\e_1}\dotsc\wn_{\e_n}A$.
\end{nota}

Usual \LL{} formulas correspond to the particular case where \sset{} is a singleton set (let say $\sset=\{\sgte\}$).
In this case we simply use the notations $\oc A:=\oc_\sgte A$ and $\wn A:=\wn_\sgte A$.

As usual a \emph{duality} operation $A\mapsto A\orth$ is defined on all \sset-formulas (not just for $X\orth$).
It is the involution satisfying:
\begin{equation*}
  \begin{array}{rcl@{\qquad\qquad}rcl@{\qquad\qquad}rcl}
    (A\tens B)\orth&=&A\orth\parr B\orth & \one\orth&=&\bot& && \\
    (A\parr B)\orth&=&A\orth\tens B\orth & \bot\orth&=&\one& (X\orth)\orth&=&X \\
    (A\with B)\orth&=&A\orth\plus B\orth & \top\orth&=&\zero& (\oc_\e A)\orth&=&\wn_\e A\orth \\
    (A\plus B)\orth&=&A\orth\with B\orth & \zero\orth&=&\top& (\wn_\e A)\orth&=&\oc_\e A\orth
  \end{array}
\end{equation*}
As often done in the literature, thanks to this duality, we focus on one-sided sequents for the sequent calculi under consideration.
Such a sequent is written $\vdash\A$ where $\A$ is a list of \sset-formulas.
The length of a list $\A$ is denoted $\size{\A}$.

Linear Logic (\LL) deals with formulas with only one kind of exponentials (\ie{} with formulas built from a singleton set $\sset=\{\sgte\}$).
Among the rules of \LL~\cite{ll} which are recalled in Table~\ref{tabllrules}, we distinguish between \emph{non-exponential rules} and \emph{exponential rules}.
Indeed the different systems under consideration will share the non-exponential ones and differ only on the exponential ones.

\begin{table}
\centering
\paragraph{Non-Exponential Rules}
\begin{gather*}
  \RL{\ax}
  \ZIC{\vdash A,A\orth}
  \DP
\qquad\qquad
  \AIC{\vdash A,\A}
  \AIC{\vdash A\orth,\B}
  \RL{\cut}
  \BIC{\vdash\A,\B}
  \DP
\qquad\qquad
  \AIC{\vdash\A}
  \RL{\exc}
  \UIC{\vdash\permof{\sigma}{\A}}
  \DP
\\[2ex]
  \AIC{\vdash A,\A}
  \AIC{\vdash B,\B}
  \RL{\tens}
  \BIC{\vdash A\tens B,\A,\B}
  \DP
\qquad\qquad
  \AIC{\vdash A,B,\A}
  \RL{\parr}
  \UIC{\vdash A\parr B,\A}
  \DP
\qquad\qquad
  \RL{\one}
  \ZIC{\vdash\one}
  \DP
\qquad\qquad
  \AIC{\vdash\A}
  \RL{\bot}
  \UIC{\vdash\bot,\A}
  \DP
\\[2ex]
  \AIC{\vdash A,\A}
  \AIC{\vdash B,\A}
  \RL{\with}
  \BIC{\vdash A\with B,\A}
  \DP
\qquad\qquad
  \AIC{\vdash A,\A}
  \RL{\plus_1}
  \UIC{\vdash A\plus B,\A}
  \DP
\qquad\qquad
  \AIC{\vdash B,\A}
  \RL{\plus_2}
  \UIC{\vdash A\plus B,\A}
  \DP
\qquad\qquad
  \RL{\top}
  \ZIC{\vdash\top,\A}
  \DP
\end{gather*}
\paragraph{Exponential Rules}
\begin{equation*}
  \AIC{\vdash A,\wn\A}
  \RL{\oc}
  \UIC{\vdash\oc A,\wn\A}
  \DP
\qquad\qquad
  \AIC{\vdash A,\A}
  \RL{\llde}
  \UIC{\vdash\wn A,\A}
  \DP
\qquad\qquad
  \AIC{\vdash\A}
  \RL{\llwk}
  \UIC{\vdash\wn A,\A}
  \DP
\qquad\qquad
  \AIC{\vdash\wn A,\wn A,\A}
  \RL{\llco}
  \UIC{\vdash\wn A,\A}
  \DP
\end{equation*}
\caption{Linear Logic Rules}\label{tabllrules}
\end{table}

In the ($\exc$) rule of Table~\ref{tabllrules}, if $\Gamma$ has length $n$, $\sigma$ is a permutation of $n$ elements and $\permof{\sigma}{\Gamma}$ denotes its action on $\Gamma$.
In the whole paper, we will deal with this exchange rule in an implicit manner.
This means that we will omit it in all discussions to make things lighter.
There are two ways of justifying this approach.
First, considering sequents as finite multi-sets rather than lists would exactly correspond to make exchange rules useless.
Second, all the mentioned results have been checked with explicit consideration of the exchange rules.

Concerning terminology, a ($\cut$) rule for which the cut formula $A$ has main connective $\oc_\e$ or $\wn_\e$ is called an \emph{exponential cut rule}. Other instances are called \emph{non-exponential cut rules}.
We call \emph{promotion rules} those introducing the $\oc$ connectives.
We call \emph{$\wn$-rules} the rules which introduce the $\wn$ connectives (independently of the $\oc$ connective), that is non-promotion exponential rules.
A rule is \emph{not acting} on a formula $A$ if $A$ is in the context of the rule and if the rule is not a promotion.

\begin{defi}[Derivability and Admissibility]
Let us consider a rule $R$:
\begin{gather*}
\AIC{\vdash \A_1}
\AIC{\dotsb}
\AIC{\vdash \A_n}
\RL{R}
\TIC{\vdash \A}
\DP
\end{gather*}
It is \emph{derivable} in a system $\mathcal{S}$, if there exists a proof tree which allows us to derive $\vdash \A$ from the sequents $\vdash\A_1$, \dots, $\vdash\A_n$ by using rules of $\mathcal{S}$.

It is \emph{admissible} in a system $\mathcal{S}$, if whenever $\vdash\A_1$, \dots, $\vdash\A_n$ are provable in $\mathcal{S}$, then $\vdash\A$ as well. So that derivable entails admissible, while the converse is not always true.

Two systems are said to be \emph{equivalent} if the provable sequents are the same, that is if all rules in one system are admissible in the other one, and conversely.
\end{defi}

\section{Other Linear Logic Systems}\label{secmiscll}

We present here different linear logic systems from the literature.
These systems differ only on their exponential rules.
They all deal with \sset-formulas (for an appropriate \sset) and one-sided sequents.

The first three systems below deal with $\{\sgte\}$-formulas (\ie{} with only one kind of exponentials).

\subsection{Functorial Promotion}\label{secfunprom}

\LL{} with functorial promotion is an alternative presentation of \LL{} particularly well suited for categorical semantics~\cite{asperti}.
It decomposes promotion into the so-called \emph{functorial promotion} and a new $\wn$-rule ($\lldg$) called \emph{digging}.
Its exponential rules are then:
\begin{equation*}
  \AIC{\vdash A,\A}
  \RL{\ocf}
  \UIC{\vdash \oc A,\wn\A}
  \DP
\qquad\quad
  \AIC{\vdash\wn\wn A,\A}
  \RL{\lldg}
  \UIC{\vdash\wn A,\A}
  \DP
\qquad\qquad
  \AIC{\vdash A,\A}
  \RL{\llde}
  \UIC{\vdash\wn A,\A}
  \DP
\qquad\qquad
  \AIC{\vdash\A}
  \RL{\llwk}
  \UIC{\vdash\wn A,\A}
  \DP
\qquad\qquad
  \AIC{\vdash\wn A,\wn A,\A}
  \RL{\llco}
  \UIC{\vdash\wn A,\A}
  \DP
\end{equation*}
This system is equivalent to \LL.

\subsection{Elementary Linear Logic}

Elementary Linear Logic (\ELL)~\cite{lll,djell} is a variant of \LL{} which has interesting computational complexity properties, since its cut elimination is shown to correspond to the elementary time complexity class (functions whose computation time is bounded by a tower of exponentials).
\ELL{} is obtained from \LL{} with functorial promotion by removing the ($\llde$) and ($\lldg$) rules:
\begin{equation*}
  \AIC{\vdash A,\A}
  \RL{\ocf}
  \UIC{\vdash \oc A,\wn\A}
  \DP
\qquad\qquad
  \AIC{\vdash\A}
  \RL{\llwk}
  \UIC{\vdash\wn A,\A}
  \DP
\qquad\qquad
  \AIC{\vdash\wn A,\wn A,\A}
  \RL{\llco}
  \UIC{\vdash\wn A,\A}
  \DP
\end{equation*}

\subsection{Soft Linear Logic}

Soft Linear Logic (\SLL)~\cite{sll} is obtained from \ELL{} by replacing the $\wn$-rules ($\llwk$) and ($\llco$) by a new family of rules called \emph{multiplexing rules} (for all $k\in\Nat$):
\begin{equation*}
  \AIC{\vdash A,\A}
  \RL{\ocf}
  \UIC{\vdash \oc A,\wn\A}
  \DP
\qquad\qquad
  \AIC{\vdash\overbrace{A, \dots, A}^k,\A}
  \RL{\mpx{k}}
  \UIC{\vdash\wn A,\A}
  \DP
\end{equation*}
The cases $k=0$ and $k=1$ give back ($\llwk$) and ($\llde$) of \LL, but for $k\geq 2$, we get different rules (in particular $\mpx{2}$ is not $\llco$).

The cut elimination of \SLL{} is related with the \PTIME{} complexity class~\cite{sll}.

\subsection{Light Linear Logic}

Light Linear Logic (\LLL)~\cite{lll} considers two different exponential signatures $\{\sgte,\epg\}$.
We use the notations $\oc A:=\oc_\sgte A$, $\wn A:=\wn_\sgte A$, $\pg A:=\oc_\epg A$ and $\copg A:=\wn_\epg A$.
The exponential rules are:
\begin{equation*}
  \AIC{\vdash A,B}
  \RL{\ocu}
  \UIC{\vdash\oc A,\wn B}
  \DP
\qquad\qquad
  \AIC{\vdash A,\A,\B}
  \RL{\pg}
  \UIC{\vdash \pg A,\copg\A,\wn\B}
  \DP
\qquad\qquad
  \AIC{\vdash\A}
  \RL{\llwk}
  \UIC{\vdash\wn A,\A}
  \DP
\qquad\qquad
  \AIC{\vdash\wn A,\wn A,\A}
  \RL{\llco}
  \UIC{\vdash\wn A,\A}
  \DP
\end{equation*}
We then have two kinds of promotions: unary functorial promotion ($\ocu$) for $\oc$, and $\pg$-promotion for $\pg$.

This system is also related with \PTIME{} complexity~\cite{lll}.

\subsection{Shifting Operators}

Shifting operators are a linear version of \LL's exponential modalities~\cite{locus}.
The system we consider here is also based on $\{\sgte,\epg\}$-formulas, but the standard notations are: $\oc A:=\oc_\sgte A$, $\wn A:=\wn_\sgte A$, $\shpos A:=\oc_\epg A$ and $\shneg A:=\wn_\epg A$.
The exponential rules extend those of \LL:
\begin{gather*}
  \AIC{\vdash A,\wn\A}
  \RL{\oc}
  \UIC{\vdash\oc A,\wn\A}
  \DP
\qquad\qquad
  \AIC{\vdash A,\A}
  \RL{\llde}
  \UIC{\vdash\wn A,\A}
  \DP
\qquad\qquad
  \AIC{\vdash\A}
  \RL{\llwk}
  \UIC{\vdash\wn A,\A}
  \DP
\qquad\qquad
  \AIC{\vdash\wn A,\wn A,\A}
  \RL{\llco}
  \UIC{\vdash\wn A,\A}
  \DP
\\[2ex]
  \AIC{\vdash A,\shneg\A}
  \RL{\shpos}
  \UIC{\vdash\shpos A,\shneg\A}
  \DP
\qquad\qquad
  \AIC{\vdash A,\A}
  \RL{\shneg}
  \UIC{\vdash\shneg A,\A}
  \DP
\end{gather*}

\subsection{Subexponentials}\label{secsubexp}

Subexponential Linear Logic (\seLL) denotes a family of systems which deal with multiple exponential signatures.
$\seLL(\sset,{\leqse},\sset_W,\sset_C)$~\cite{structexp,subexp} is a system with parameters:
\begin{itemize}
	\item $(\sset,{\leqse})$ is a pre-ordered set of \emph{exponential signatures}. So that formulas of $\seLL(\sset,{\leqse},\sset_W,\sset_C)$ are \sset-formulas and $\leqse$ plays a key role in the promotion rule.
	\item $\sset_W$ and $\sset_C$ are two subsets of $\sset$ used to control $\wn$-rules.
\end{itemize}
The exponential rules are:
\begin{gather*}
  \AIC{\vdash A,\wn_{e_1}B_1,\dotsc,\wn_{\e_n}B_n}
  \AIC{\e\leqse\e_1\quad\dotsb\quad\e\leqse e_n}
  \RL{\oc_\e}
  \BIC{\vdash \oc_\e A,\wn_{\e_1}B_1,\dotsc,\wn_{\e_n}B_n}
  \DP
  \\[2ex]
  \AIC{\vdash A,\A}
  \RL{\llde[\e]}
  \UIC{\vdash\wn_\e A,\A}
  \DP
  \qquad\qquad
  \AIC{\vdash\A}
  \AIC{\e\in\sset_W}
  \RL{\llwk[e]}
  \BIC{\vdash\wn_\e A,\A}
  \DP
  \qquad\qquad
  \AIC{\vdash\wn_\e A,\wn_\e A,\A}
  \AIC{\e\in\sset_C}
  \RL{\llco[\e]}
  \BIC{\vdash\wn_\e A,\A}
  \DP
\end{gather*}
For cut elimination to hold, some properties of the parameters must be requested:
\begin{thm}[Cut Elimination~\cite{structexp}]
If $\sset_W$ and $\sset_C$ are upward closed (\ie{} $\e\in\sset_W \Rightarrow \e\leqse\e' \Rightarrow \e'\in\sset_W$, and the same with $\sset_C$), then cut elimination holds.
\end{thm}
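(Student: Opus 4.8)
The plan is to follow the standard route. First I would reduce the statement to the \emph{admissibility of a single $\cut$ rule} in the cut-free fragment of $\seLL(\sset,\leqse,\sset_W,\sset_C)$: once this is established, the $\cut$s of an arbitrary proof are removed one at a time (an uppermost one first, so that its sub-derivations are already cut-free), and one concludes by induction on the number of $\cut$s. The admissibility of a single $\cut$ would be proved by a lexicographic induction on the pair $(\size{A},h)$, where $A$ is the cut formula (the \emph{degree}) and $h$ is the sum of the heights of the two sub-derivations feeding the $\cut$ (the \emph{rank}).

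I would argue by case analysis on the last rules of the two premises. When the cut formula is not principal in the last rule of one premise, the $\cut$ is pushed above that rule; all these commutative cases are routine and strictly decrease the rank at a fixed degree. The one point deserving care is the commutation past a promotion $\oc_e$: its side conditions $e\leqse e_i$ survive since its $\wn$-context is unchanged, and when the material pushed into the premise of such a promotion is the $\wn$-context $\wn_{c_1}C_1,\dots,\wn_{c_m}C_m$ of another promotion $\oc_g$ (so $g\leqse c_j$ for all $j$, and also $e\leqse g$ since $\wn_g A_0\orth$ was a door of the $\oc_e$ rule), the newly required conditions $e\leqse c_j$ follow from $e\leqse g\leqse c_j$; here transitivity of $\leqse$ is essential.

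When the cut formula is principal in both premises and non-exponential, the standard reduction produces $\cut$s of strictly smaller degree. The exponential principal case is the core: the cut formula is $\oc_g A_0$, the left premise ends with a promotion $\oc_g$ whose premise is $\vdash A_0,\wn_{c_1}C_1,\dots,\wn_{c_m}C_m$ (so $g\leqse c_j$ for all $j$), and the right premise ends with a $\wn$-rule acting on $\wn_g A_0\orth$. If it is $\llde[g]$, I would replace the $\cut$ by a $\cut$ on $A_0$, handled by the induction on the degree, possibly followed by contractions of the duplicated copies of $\wn_{c_1}C_1,\dots,\wn_{c_m}C_m$. If it is $\llwk[g]$ (hence $g\in\sset_W$), I would discard the promotion sub-derivation and reintroduce $\wn_{c_1}C_1,\dots,\wn_{c_m}C_m$ by $\llwk[c_j]$ rules; this is licit because $g\in\sset_W$ and $g\leqse c_j$ give $c_j\in\sset_W$ by upward closure of $\sset_W$. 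If it is $\llco[g]$ (hence $g\in\sset_C$), I would duplicate the promotion sub-derivation, cut both copies against the premise of $\llco[g]$ (which carries one more occurrence of $\wn_g A_0\orth$ than its conclusion), and contract the two resulting copies of $\wn_{c_1}C_1,\dots,\wn_{c_m}C_m$ by $\llco[c_j]$ rules; this is licit because $g\in\sset_C$ and $g\leqse c_j$ give $c_j\in\sset_C$ by upward closure of $\sset_C$. The two upward-closure hypotheses of the theorem are used precisely here.

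The hard part will be the $\llco[g]$ sub-case: cutting the two copies of the promotion \emph{in sequence} does not visibly decrease the induction measure, because the derivation produced by the first cut has uncontrolled height. The remedy I would adopt is to prove a dedicated lemma that cuts a single promotion sub-derivation \emph{simultaneously} against all occurrences of its dual door $\wn_g A_0\orth$ in a derivation, by induction on the height of that derivation; the number of occurrences chased grows only when crossing a $\llco[g]$ rule, and promotions crossed on the way are handled as in the commutative case above (again using transitivity). One then checks that every re-contraction of the duplicated context $\wn_{c_1}C_1,\dots,\wn_{c_m}C_m$ performed by this lemma occurs where the $c_j$ lie in $\sset_C$ --- guaranteed, via $g\leqse c_j$ and upward closure, by the observation that a multiplicity of $\wn_g A_0\orth$ strictly greater than one can only have been created by an earlier $\llco[g]$, forcing $g\in\sset_C$; the degenerate situation where all chased doors are ultimately weakened is dealt with symmetrically using $\sset_W$. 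Once the exponential $\cut$s are removed in this way, the non-exponential ones fall to the outer induction on the degree, and iterating over all $\cut$s of a proof yields full cut elimination.
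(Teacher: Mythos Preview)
Your direct argument is sound and follows the original approach of~\cite{structexp}: lexicographic induction on (degree, rank), commutative and principal reductions, and a dedicated multi-cut lemma to handle the $\llco$ case without blowing up the rank. Your identification of where transitivity of $\leqse$ and upward closure of $\sset_W$, $\sset_C$ intervene is accurate. Two small clean-ups: in the $\llde[g]$ principal case no duplication of the context occurs, so the parenthetical about contractions there is spurious; and the bookkeeping in your last paragraph is heavier than needed --- it suffices to note that each time the lemma crosses a $\llwk[g]$ (resp.\ $\llco[g]$) one must weaken (resp.\ contract) the $\wn_{c_j}C_j$, which is licit because $g\in\sset_W$ (resp.\ $\sset_C$), $g\leqse c_j$, and the set is upward closed.

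The paper, however, does not give this direct proof: the theorem is simply quoted from~\cite{structexp} in Section~\ref{secsubexp}. The paper's own route --- and its whole point --- is indirect: encode \seLL{} as an instance of \superLL{} (Section~\ref{secsubsyst}), check that this instance satisfies the cut-elimination axioms (Lemma~\ref{lemseLLprop}), invoke the generic Theorem~\ref{thmce}, and translate back via Lemmas~\ref{lemseLLsuper} and~\ref{lemsuperseLL}, both of which preserve cut-freeness. The engine of Theorem~\ref{thmce} is the Substitution Lemma (Lemma~\ref{subs}), which is precisely the \superLL{} counterpart of your multi-cut lemma. Your proof is self-contained and tailored to \seLL; the paper's factors the argument once and for all, so that \ELL, \SLL, \LLL, \BsLL, and \seLL{} all inherit cut elimination from a single proof.
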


As a variant, the subexponential system presented in~\cite{undecidsubexp} is a particular case of the system above in which $\sset_W=\sset_C$.

\begin{rem}
  The instance of \seLL{} where $\sset=\{\sgte\}$ is a singleton, $\sgte\leqse\sgte$, and $\sset_W=\sset_C=\sset$ is \LL.

The instance of \seLL{} where $\sset=\{\sgte,\epg\}$, $\sgte\leqse\sgte$, $\epg\leqse\epg$, and $\sset_W=\sset_C=\{\sgte\}$ is \LL{} with shifting operators.
\end{rem}

\subsection{Stratified Bounded Linear Logic}

While \BsLL{} is presented in~\cite{llcoeffects} as an intuitionistic system, we consider here its (one-sided) classical version.
Everything we discuss in this paper could be done in an intuitionistic setting in a very similar way.

As in \seLL, \BsLL{} considers multiple exponential connectives.
In \BsLL, exponential signatures come with a richer algebraic structure.
\BsLL{} is parameterized by an ordered semi-ring $(\sset,{+},0,{\cdot},1,{\leqse})$.
Formulas are \sset-formulas, and the exponential rules are:
\begin{gather*}
  \AIC{\vdash A,\wn_{e_1}B_1,\dotsc,\wn_{\e_n}B_n}
  \RL{\oc_{\_\cdot\_}}
  \UIC{\vdash \oc_\e A,\wn_{\e\cdot\e_1}B_1,\dotsc,\wn_{\e\cdot\e_n}B_n}
  \DP
  \qquad\qquad
  \AIC{\vdash\wn_{\e_1}A,\A}
  \AIC{\e_1\leqse\e_2}
  \RL{{\leqse}}
  \BIC{\vdash\wn_{\e_2}A,\A}
  \DP
  \\[2ex]
  \AIC{\vdash A,\A}
  \RL{\llde[1]}
  \UIC{\vdash\wn_1A,\A}
  \DP
  \qquad\qquad
  \AIC{\vdash\A}
  \RL{\llwk[0]}
  \UIC{\vdash\wn_0A,\A}
  \DP
  \qquad\qquad
  \AIC{\vdash\wn_{\e_1}A,\wn_{\e_2}A,\A}
  \RL{\llco[{\_+\_}]}
  \UIC{\vdash\wn_{\e_1+\e_2}A,\A}
  \DP
\end{gather*}

\section{Super Linear Logic}\label{secsuperll}

We follow the ideas of subexponentials and bounded linear logic with parameters which try to subsume both.
Given a set \sset{} (the set of exponential signatures), we consider the following family of predicates:
\begin{equation*}
  \begin{array}{|@{\quad}c@{\quad}|@{\quad}c@{\quad}|@{\quad}c@{\quad}|@{\quad}c@{\quad}|}
    \hline
    & & & \\[-1.5ex]
    \depred:\sset\rightarrow\Bool & \copred_k:\sset^{k+1}\rightarrow\Bool\quad (\forall k\geq 0) & \dgpred:\sset^3\rightarrow\Bool & \prompred _n:\sset\rightarrow\Bool\quad (\forall n\geq 0)\\[1ex]
    \hline
  \end{array}
\end{equation*}

\begin{nota}
Given a predicate $\varphi:\sset^p\rightarrow\Bool$, we often write $\varphi(\e_1,\dotsc,\e_p)$ for $\varphi(\e_1,\dotsc,\e_p)=\true$.
\end{nota}

The system \superLL(\sset,\depred,\copred,\dgpred,\prompred) is defined by:
formulas are \sset-formulas, and the exponential rules are:
\begin{gather*}
  \AIC{\vdash A,\A}
  \AIC{\depred(\e)}
  \RL{\de}
  \BIC{\vdash\wn_\e A,\A}
  \DP
\qquad\qquad
  \AIC{\vdash\wn_{\e_1}A,\dotsc,\wn_{\e_k}A,\A}
  \AIC{\copred_k(\e_1,\dotsc,\e_k,\e)}
  \RL{\co{}}
  \BIC{\vdash\wn_\e A,\A}
  \DP
\\[2ex]
  \AIC{\vdash\wn_{\e_1}\wn_{\e_2}A,\A}
  \AIC{\dgpred(\e_1,\e_2,\e)}
  \RL{\dg}
  \BIC{\vdash\wn_\e A,\A}
  \DP
\qquad\qquad
  \AIC{\vdash A,A_1,\dotsc,A_n}
  \AIC{\prompred_n(\e)}
  \RL{\prom}
  \BIC{\vdash\oc_\e A,\wn_\e A_1,\dotsc,\wn_\e A_n}
  \DP
\end{gather*}

\begin{ex}
Let us detail the meaning of the ($\copred$) rule for $k=2$:
\begin{prooftree}
\AIC{\vdash \wn_{\e_1}A,\wn_{\e_2}A,\A}
\AIC{\copred_2(\e_1,\e_2,\e)}
\RL{\copred}
\BIC{\vdash\wn_\e A,\A}
\end{prooftree}
It tells us that: if ${}\vdash\wn_{\e_1}A,\wn_{\e_2}A,\A$ is derivable and $\e_1,\e_2,\e\in\sset$ are exponential signatures such that $\copred_2(\e_1,\e_2,\e)=\true$ then the rule applies and one can deduce ${}\vdash\wn_\e A,\A$.
It generalizes the usual contraction rule of \LL{} to a given relation $\copred_2$ relating the involved exponential signatures.
\end{ex}

Note that the weakening rule is incorporated in the ($\copred$) rule for $k=0$:
\begin{prooftree}
 \AIC{\vdash\A}
\AIC{\copred_0(\e)}
\RL{\copred}
\BIC{\vdash\wn_\e A,\A} 
\end{prooftree}

In the case $k=1$, the ($\copred$) rule acts as a \emph{subsumption rule}:
\begin{prooftree}
 \AIC{\vdash\wn_{\e_1}A,\A}
\AIC{\copred_1(\e_1,\e_2)}
\RL{\copred}
\BIC{\vdash\wn_{\e_2}A,\A} 
\end{prooftree}
with respect to the relation $\wn_{\e_1}A\leq\wn_{\e_2}A := \copred_1(\e_1,\e_2)$.
If $\copred_1$ is a subdiagonal relation (\ie{} $\copred_1(\e_1,\e_2)\Rightarrow\e_1=\e_2$), the ($\co{}$) rule for $k=1$ is trivial and can be omitted (in particular if \sset{} is a singleton).

($\prom$) corresponds to a functorial version of the promotion rule. $\prompred_n$ controls the width of the rule.

\begin{rem}
  \superLL{} should be considered as a refinement of \LL{} rather than an extension.
Indeed the forgetful function which maps formulas $\oc_\e A$ (resp.\ $\wn_\e A$) to $\oc A$ (resp.\ $\wn A$), maps any proof in \superLL{} into a proof of the corresponding sequent in \LL, since the induced rules are all derivable in \LL.
\end{rem}

\paragraph{Functional Instances.}
In the particular case where all the parameter relations $\depred$, $\copred_k$ ($k\neq 1$) and $\dgpred$ have their last element uniquely defined from the previous ones:
\begin{equation*}
\textsc{r}(\e_1,\dotsc,\e_n,\e) \rightarrow \textsc{r}(\e_1,\dotsc,\e_n,\e') \rightarrow \e = \e'
\end{equation*}
the instance is called \emph{functional}.

In particular there is at most one $\e$ such that $\depred(\e)$ in a functional instance.
We note it $1$ if it exists.
In the same spirit we use the notations $\_\times\_$ for the partial function induced by $\dgpred$ (\ie{} $\dgpred(\e_1,\e_2,\e_1\times\e_2)=\true$ if such an $\e_1\times\e_2$ exists), and $\_+_k\dotsb +_k\_$ for the partial function induced by $\copred_k$ (\ie{} $\copred_k(\e_1,\dotsc,\e_k,\e_1+_k\dotsb+_k\e_k)=\true$ if such an $\e_1+_k\dotsb+_k\e_k$ exists) for $k > 1$. The unique element $\e$ (if it exists) such that $\copred_0(\e)$ is noted $0$.

If $\sset$ is a singleton, the instance is immediately functional.

\section{Cut Elimination}\label{secce}

Let us now move to the key result we want to prove about \superLL: cut elimination.
As defined above, the system \superLL{} is not really meaningful.
Properties relating the parameters must be ensured to get a significant system, in particular regarding cut elimination.

\begin{ex}
Let us consider the instance $\sset=\{\e,\e'\}$, $\prompred_2(\e)=\true$, $\prompred_1(\e')=\true$ and $\copred_1(\e',\e)=\true$, but $\prompred_2(\e')=\false$.

We have the following derivation:
\begin{prooftree}
  \RL{\ax}
  \ZIC{\vdash X\orth,X}
  \ZIC{\prompred_1(\e')}
  \RL{\prom}
  \BIC{\vdash\oc_{\e'} X\orth, \wn_{\e'} X}
  \ZIC{\copred_1(\e',\e)}
  \RL{\copred}
  \BIC{\vdash\wn_\e X,\oc_{\e'} X\orth}
  \RL{\ax}
  \ZIC{\vdash X,X\orth}
  \RL{\ax}
  \ZIC{\vdash X,X\orth}
  \RL{\tens}
  \BIC{\vdash X\tens X, X\orth, X\orth}
  \ZIC{\prompred_2(\e)}
  \RL{\prom}
  \BIC{\vdash \oc_\e X\orth, \wn_\e (X\tens X), \wn_\e X\orth}
  \RL{\cut}
  \BIC{\vdash\oc_{\e'} X\orth, \wn_\e (X\tens X), \wn_\e X\orth}
\end{prooftree}
However it is not possible to find a cut-free proof of $\vdash\oc_{\e'} X\orth, \wn_\e (X\tens X), \wn_\e X\orth$.
\end{ex}

In order to explain the constraints we will put on the parameters defining \superLL, let us first give a sketch of the proof we are going to use for cut elimination.

\subsection{Proof Sketch}\label{seccesketch}

\begin{thmst}[Cut Elimination]
The ($\cut$) rule is admissible in the system without the ($\cut$) rule.
\end{thmst}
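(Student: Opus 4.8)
The plan is to prove cut elimination for \superLL{} by the standard double induction, but with the exponential cases isolated into a separate \emph{substitution lemma} that does the real work. As is usual, we proceed by a lexicographic induction on the pair consisting of the size of the cut formula and the sum of the sizes of the two subproofs feeding the cut. It suffices to show that a single topmost instance of ($\cut$) — whose two premises are cut-free — can be eliminated, and then iterate. We split on the last rules applied in the two premises.

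The routine part is the \emph{commutative cases} and the non-exponential \emph{key cases}. When the cut formula is not principal in one of the premises, we permute the cut upward past that rule, which strictly decreases the proof-size component while keeping the cut-formula size fixed, so the induction hypothesis applies; one has to be slightly careful with the two-premise rules ($\tens$, ($\with$)), ($\cut$) itself, and the binary exponential rules, duplicating the other subproof when necessary, but all of this is mechanical. When the cut formula is principal in both premises and is a multiplicative or additive formula, we apply the usual reduction (e.g.\ $\tens$ against $\parr$ splits into two smaller cuts, $\plus_i$ against $\with$ selects the matching branch), each new cut having a strictly smaller cut formula.

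The genuinely hard case — and the one the paper evidently organizes the axioms around — is the \emph{exponential key case}: a ($\prom$) introducing $\oc_\e A$ cut against one of the $\wn$-rules (($\de$), ($\co{}$), ($\dg$)) or against another ($\prom$) acting on $\wn_\e A\orth$ in its context. Here permuting the cut does not obviously terminate, because duplicating a promotion (to handle a contraction, say) reproduces a cut of the same formula size against a premise that is not smaller. The standard fix — which I expect is exactly what the substitution lemma of Section~\ref{secsubstlem} provides — is to generalize: instead of eliminating a single cut against a promotion, one shows that a whole bunch of $\wn_\e$-context formulas produced by a promotion proof $\pi$ of $\vdash A, A_1,\dots,A_n$ can be simultaneously ``substituted into'' another proof $\rho$ that uses $\oc_\e A\orth$ (equivalently, $\wn_\e A$) several times under various $\wn$-rules, replacing each such use by an appropriate composition of the contents of $\pi$. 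The induction for this lemma is on the structure of $\rho$, and at each $\wn$-rule ($\de$, $\co{}$, $\dg$, or a nested $\prom$) one needs the corresponding parameter — $\depred(\e)$, $\copred_k(\vec\e,\e)$, $\dgpred(\e_1,\e_2,\e)$, $\prompred_n(\e)$ — to be \emph{compatible} with re-promoting the substituted material. This is precisely where the axioms on $\depred$, $\copred$, $\dgpred$, $\prompred$ enumerated in Section~\ref{secceax} are consumed: they guarantee that the needed side conditions (for a promotion of depth/width matching what $\rho$ demands) are again derivable, so that the reconstructed proof is a legal \superLL{} proof. The cut-elimination theorem then follows by feeding this substitution lemma into the main induction in place of the naive exponential reduction.

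The expected main obstacle is thus twofold: first, formulating the substitution lemma with a strong enough statement that it is stable under the induction on $\rho$ (in particular handling $\dg$, which turns one $\wn_\e$ into a nested $\wn_{\e_1}\wn_{\e_2}$ and so forces the lemma to speak about substituting along \emph{iterated} exponentials, and handling the case where $\rho$ itself ends in a ($\prom$) that must be pushed through); and second, verifying that the chosen axioms genuinely close the relevant side-conditions — e.g.\ that composing a promotion inside another promotion, or merging several promoted contexts under a ($\co{}$), yields side conditions that follow from the assumed axioms. Everything else — termination of the outer induction, the commutative and non-exponential key cases, and the bookkeeping of contexts — is routine once the substitution lemma is in hand.
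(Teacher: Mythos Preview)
Your proposal is correct and matches the paper's approach essentially point for point: the same lexicographic induction on (cut-formula size, sum of proof sizes), the same treatment of commutative and non-exponential key cases, and the same reliance on a substitution lemma for the exponential case, with the lemma generalized to track \emph{iterated} $\wn$-prefixes (lists $\vec{\e^1},\dotsc,\vec{\e^s}$ of signatures) precisely so that ($\dg$) and nested ($\prom$) can be absorbed in the induction. The only cosmetic difference is that the paper states the substitution lemma with an abstract hypothesis ``for all $\A$, if $\vdash A,\A$ is cut-free provable then $\vdash\B,\A$ is'' rather than carrying around the concrete promotion proof $\pi$; this abstraction is discharged in the main theorem by the outer induction hypothesis on the smaller cut formula $A$, exactly as you describe.
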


The global pattern of the proof we are going to use is folklore and it is the one used in the \yalla{} library~\cite{yalla}.
We prove that the ($\cut$) rule:
\begin{prooftree}
  \AIC{\pi_1}
  \noLine
  \UIC{\vdash A,\A}
  \AIC{\pi_2}
  \noLine
  \UIC{\vdash A\orth,\B}
  \RL{\cut}
  \BIC{\vdash\A,\B}
\end{prooftree}
is admissible by induction on the lexicographically ordered pair (size of $A$, size of $\pi_1$ + size of $\pi_2$):
\begin{itemize}
\item If $\pi_1$ or $\pi_2$ does not end with a rule acting on $A$, we apply the induction hypothesis with the premise(s) of this rule.
\item If both $\pi_1$ and $\pi_2$ end with non-exponential rules introducing the main connective of $A$ and $A\orth$, we can apply the induction hypothesis with smaller cut formulas. A typical example is:
\begin{equation*}
\hspace{-12pt}
\AIC{\pi'_1}
\noLine
\UIC{\vdash A,\A}
\AIC{\pi'_2}
\noLine
\UIC{\vdash B,\B}
\RL{\tens}
\BIC{\vdash A\tens B,\A,\B}
\AIC{\pi'_3}
\noLine
\UIC{\vdash A\orth,B\orth,\C}
\RL{\parr}
\UIC{\vdash A\orth\parr B\orth,\C}
\RL{\cut}
\BIC{\vdash\A,\B,\C}
\DP
\qquad\reduc\qquad
\AIC{\pi'_2}
\noLine
\UIC{\vdash B,\B}
\AIC{\pi'_1}
\noLine
\UIC{\vdash A,\A}
\AIC{\pi'_3}
\noLine
\UIC{\vdash A\orth,B\orth,\C}
\dashedLine
\RL{IH(A)}
\BIC{\vdash \A,B\orth,\C}
\dashedLine
\RL{IH(B)}
\BIC{\vdash \A,\B,\C}
\DP
\end{equation*}
\item If $\pi_1$ and $\pi_2$ both end with promotion rules, we have to deal with situations like:
\begin{prooftree}
  \AIC{\pi'_1}\noLine
  \UIC{\vdash A,B_1,B_2}
  \AIC{\prompred_2(\e)}
  \RL{\prompred}
  \BIC{\vdash\oc_\e A,\wn_\e B_1,\wn_\e B_2}
  \AIC{\pi'_2}\noLine
  \UIC{\vdash C,A\orth,D_1,D_2}
  \AIC{\prompred_3(\e)}
  \RL{\prom}
  \BIC{\vdash\oc_\e C,\wn_\e A\orth,\wn_\e D_1,\wn_\e D_2}
  \RL{\cut}
  \BIC{\vdash\oc_\e C,\wn_\e B_1,\wn_\e B_2,\wn_\e D_1,\wn_\e D_2}
\end{prooftree}
for which the most natural way to eliminate the cut is to build:
\begin{prooftree}
  \AIC{\pi'_1}\noLine
  \UIC{\vdash A,B_1,B_2}
  \AIC{\pi'_2}\noLine
  \UIC{\vdash C,A\orth,D_1,D_2}
\dashedLine
  \RL{IH(A)}
  \BIC{\vdash C,B_1,B_2,D_1,D_2}
  \AIC{\prompred_4(\e)}
  \RL{\prom}
  \BIC{\vdash\oc_\e C,\wn_\e B_1,\wn_\e B_2,\wn_\e D_1,\wn_\e D_2}
\end{prooftree}
but it then requires to be able to derive $\prompred_4(\e)$ (from $\prompred_2(\e)$ and $\prompred_3(\e)$).
This is one of the reasons for the axioms of Section~\ref{secceax}.
\item If $\pi_1$ ends with a promotion rule and $\pi_2$ ends with a ($\co{}$) rule acting on $A$, we have to deal with situations like:
\begin{prooftree}
  \AIC{\pi'_1}\noLine
  \UIC{\vdash A,B}
  \AIC{\prompred_1(\e)}
  \RL{\prom}
  \BIC{\vdash\oc_\e A,\wn_\e B}
  \AIC{\pi'_2}\noLine
  \UIC{\vdash\wn_{\e'}A\orth,\A}
  \AIC{\copred_1(\e',\e)}
  \RL{\co{}}
  \BIC{\vdash\wn_\e A\orth,\A}
  \RL{\cut}
  \BIC{\vdash\wn_\e B,\A}
\end{prooftree}
for which the most natural way to eliminate the cut is to build:
\begin{prooftree}
  \AIC{\pi'_1}\noLine
  \UIC{\vdash A,B}
  \AIC{\prompred_1(\e')}
  \RL{\prom}
  \BIC{\vdash\oc_{\e'}A,\wn_{\e'}B}
  \AIC{\pi'_2}\noLine
  \UIC{\vdash\wn_{\e'}A\orth,\A}
\dashedLine
  \RL{IH(\oc_{\e'}A)}
  \BIC{\vdash\wn_{\e'}B,\A}
  \AIC{\copred_1(\e',\e)}
  \RL{\co{}}
  \BIC{\vdash\wn_\e B,\A}
\end{prooftree}
but it then requires to be able to derive $\prompred_1(\e')$ (from $\prompred_1(\e)$ and $\copred_1(\e',\e)$).
This is one of the reasons for the axioms of Section~\ref{secceax}.
\item Other situations are more problematic:
\begin{prooftree}
  \AIC{\pi'_1}\noLine
  \UIC{\vdash A,B}
  \AIC{\prompred_1(\e)}
  \RL{\prom}
  \BIC{\vdash\oc_\e A,\wn_\e B}
  \AIC{\pi'_2}\noLine
  \UIC{\vdash\wn_{\e_1}\wn_{\e_2}A\orth,\A}
  \AIC{\dgpred(\e_1,\e_2,\e)}
  \RL{\dg}
  \BIC{\vdash\wn_\e A\orth,\A}
  \RL{\cut}
  \BIC{\vdash\wn_\e B,\A}
\end{prooftree}
for which the most natural way to eliminate the cut is to build:
\begin{prooftree}
  \AIC{\pi'_1}\noLine
  \UIC{\vdash A,B}
  \AIC{\prompred_1(\e_2)}
  \RL{\prom}
  \BIC{\vdash\oc_{\e_2}A,\wn_{\e_2}B}
  \AIC{\prompred_1(\e_1)}
  \RL{\prom}
  \BIC{\vdash\oc_{\e_1}\oc_{\e_2}A,\wn_{\e_1}\wn_{\e_2}B}
  \AIC{\pi'_2}\noLine
  \UIC{\vdash\wn_{\e_1}\wn_{\e_2}A\orth,\A}
\dashedLine
  \RL{IH(\oc_{\e_1}\oc_{\e_2}A)}
  \BIC{\vdash\wn_{\e_1}\wn_{\e_2}B,\A}
  \AIC{\dgpred(\e_1,\e_2,\e)}
  \RL{\dg}
  \BIC{\vdash\wn_\e B,\A}
\end{prooftree}
but the size of $\oc_{\e_1}\oc_{\e_2}A$ being bigger than the size of $\oc_{\e_1}A$ there is no valid way of applying the induction hypothesis.
This is why we need to use more global transformations of proofs when reducing cuts on exponential formulas. This is the purpose of the substitution lemma of Section~\ref{secsubstlem}.
\end{itemize}

\subsection{Cut-Elimination Axioms}\label{secceax}

The \emph{cut-elimination axioms} are the $3$ properties of the parameters \sset, \prompred, \copred{} and \dgpred{} presented in Table~\ref{tabceax}.

\renewcommand\theequation{ce\arabic{equation}}
\begin{table}
\begin{align}
  \forall m,n\in\Nat,\forall\e\in\sset,\quad m>0 \rightarrow \prompred_m(\e)\rightarrow{} &\prompred_n(\e)\rightarrow\prompred_{m+n-1}(\e) \label{axceprom}\\
  \forall k,n\in\Nat,\forall\e_1,\dotsc,\e_k,\e\in\sset,\quad \copred_k(\e_1,\dotsc,\e_k,\e)\rightarrow{} & \prompred_n(\e)\rightarrow\prompred_n(\e_1)\wedge\dotsb\wedge\prompred_n(\e_k) \label{axceco}\\
  \forall n\in\Nat,\forall\e_1,\e_2,\e\in\sset,\quad \dgpred(\e_1,\e_2,\e)\rightarrow{} & \prompred_n(\e)\rightarrow\prompred_n(\e_1)\wedge\prompred_n(\e_2) \label{axcedg}
\end{align}
\caption{Cut-Elimination Axioms}\label{tabceax}
\end{table}

Here are some important remarks about these axioms:
\begin{itemize}
\item For each $\e\in\sset$, axiom~(\ref{axceprom}) gives a closure property of the set $\{n\in\Nat \mid \prompred_n(\e)\}$ of natural numbers.
If $2$ belongs to this set, then it must be upward closed.
If $0$ belongs to this set, then it must be downward closed.
The full set $\Nat$ satisfies the axiom~(\ref{axceprom}), as well as $\{1\}$.
\item In axiom~(\ref{axceco}), the case $k=0$ is always valid.
\item For each $n\in\Nat$, axioms (\ref{axceco}) and (\ref{axcedg}) give closure properties for the set $\{\e\in\sset \mid \prompred_n(\e)\}$.
\item If $\sset$ is a singleton, axioms (\ref{axceco}) and (\ref{axcedg}) are satisfied.
\item If the relations $(\prompred_n)_{n\in\Nat}$ are full (\ie{} always true) then all the axioms of Table~\ref{tabceax} hold.
\end{itemize}


\subsection{Substitution Lemma}\label{secsubstlem}

In this section, we suppose that the parameters of $\superLL$ satisfy the cut-elimination axioms of Table~\ref{tabceax}.

As explained in Section~\ref{seccesketch}, using small step transformations does not allow us to apply our induction hypothesis for exponential cuts in the cut-elimination proof.
For this reason, we have to define a bigger step called \emph{substitution lemma}. It describes how to hereditary reduce the residuals of an exponential cut until the size of the cut formula strictly decreases.

\begin{nota}
If $\vec{\e}=\e_1,\dotsc,\e_n$, and $\textsc{r}$ is a predicate, then $\textsc{r}(\vec{\e})$ means that $\textsc{r}(\e_i)$ is true for all $1\leq i\leq n$.
\end{nota}

\begin{lem}[Substitution Lemma]
\label{subs}
Let $A$ be a formula, let $\B$ be a context, and let $\vec{\e^1},\dotsc,\vec{\e^s}$ be non-empty lists of signatures such that $\prompred_{\size{\B}}(\vec{\e^j})$ is \true{} for all $1\leq j\leq s$, and such that for all $\A$, if $\vdash A,\A$ is provable without using any cut then $\vdash\B,\A$ is provable without using any cut.
Then we have that for all $\A$, if $\vdash\wn_{\vec{\e^1}}A,\dotsc,\wn_{\vec{\e^s}}A,\A$ is provable without using any cut then $\vdash\wn_{\vec{\e^1}}\B,\dotsc,\wn_{\vec{\e^s}}\B,\A$ as well.
\end{lem}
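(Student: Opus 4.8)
The plan is to prove the Substitution Lemma by induction on the structure of the cut-free proof $\pi$ of $\vdash \wn_{\vec{\e^1}}A,\dotsc,\wn_{\vec{\e^s}}A,\A$, proceeding by cases on the last rule of $\pi$. The induction must be set up carefully: the statement is $\Pi^0_1$-like with a universally quantified $\A$, so the right formulation is to fix $A$, $\B$, the lists $\vec{\e^j}$, and the derivation witness for ``$\vdash A,\A'$ cut-free $\Rightarrow$ $\vdash\B,\A'$ cut-free'', and then do the induction on $\pi$ while letting $\A$ (and even $s$) vary. The occurrences $\wn_{\vec{\e^1}}A,\dotsc,\wn_{\vec{\e^s}}A$ that get rewritten to $\wn_{\vec{\e^1}}\B,\dotsc,\wn_{\vec{\e^s}}\B$ are the ``tracked'' formulas; I will call them the \emph{principal stack}.

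First I would handle the easy cases: if the last rule of $\pi$ acts only inside $\A$ (a non-exponential rule, or a $\wn$-rule or promotion not touching any tracked formula), we apply the induction hypothesis to the premise(s) and reapply the rule — for promotion $\prompred_n(\e)$ this works because the tracked formulas $\wn_{\vec{\e^j}}A$ are of the right shape to sit in the $\wn_\e(\cdot)$ context of a promotion only if the outermost signature matches, in which case they are simply carried along and the outermost $\wn$ is preserved on both sides. Next, the interesting cases are when the last rule acts on the \emph{head} $\wn$ of one of the tracked formulas $\wn_{\vec{\e^j}}A = \wn_{e^j_1}\wn_{e^j_2}\dotsc\wn_{e^j_{n_j}}A$. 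There are four sub-cases according to which exponential rule it is. For $(\de)$ with $\depred(e^j_1)$: the premise now tracks $\wn_{e^j_2}\dotsc A$ (a shorter list, possibly empty), so we shrink that list and, in the empty case, we are down to tracking a bare $A$ which we replace via the given hypothesis; then reapply $(\de)$. For $(\co{})$ with $\copred_k(f_1,\dotsc,f_k,e^j_1)$: the premise tracks $k$ new copies $\wn_{f_1}\wn_{e^j_2}\dotsc A,\dotsc,\wn_{f_k}\wn_{e^j_2}\dotsc A$ in place of one; here axiom~(\ref{axceco}) is used to propagate $\prompred_{\size\B}(f_i)$ from $\prompred_{\size\B}(e^j_1)$ so the new lists still satisfy the hypothesis, then the induction hypothesis applies (with $s$ increased by $k-1$) and we reapply $(\co{})$. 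For $(\dg)$ with $\dgpred(g_1,g_2,e^j_1)$: the premise tracks $\wn_{g_1}\wn_{g_2}\wn_{e^j_2}\dotsc A$, a list of length one greater; axiom~(\ref{axcedg}) gives $\prompred_{\size\B}(g_1)$ and $\prompred_{\size\B}(g_2)$, the induction hypothesis applies, and we reapply $(\dg)$. For a promotion $\prompred_{n_j}(e^j_1)$ introducing the head $\oc$ — wait, tracked formulas are $\wn$-formulas, so the head $\wn$ cannot be principal in a promotion; instead a promotion may have \emph{all} of the tracked formulas together with $\A$ as its $\wn_{e}(\cdot)$-context, which requires every head signature to equal $e$ and is again the ``carried along'' case above.

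**The subtle case** — and the one I expect to be the main obstacle — is when the last rule is a \emph{promotion} $(\prom)$ whose principal formula $\oc_{e'}C$ is in $\A$ and whose context consists of $\wn_{e'}(\cdot)$ on everything else, \emph{including} all the tracked stacks. This can only happen when each head signature $e^j_1$ equals $e'$, so each $\wn_{\vec{\e^j}}A = \wn_{e'}\wn_{\vec{\f^j}}A$ with $\vec{\f^j}=e^j_2,\dotsc,e^j_{n_j}$, and the premise is $\vdash C, \wn_{\vec{\f^1}}A,\dotsc,\wn_{\vec{\f^s}}A,\A'$ where $\A = \wn_{e'}\A'$. But now $\size{\A'}$ may differ from $\size\B$, so the hypothesis $\prompred_{\size\B}(\vec{\e^j})$ does not directly match what the induction hypothesis at the premise needs, and worse, the tails $\vec{\f^j}$ may be empty while $C$ is not of the form covered by the given replacement hypothesis. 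The resolution is that this is precisely where the \emph{lemma must be applied at a non-head position}: we do not strip the outer $\wn_{e'}$ blindly; rather, we observe that each $\vec{\e^j}$ is non-empty with head $e'$, apply the induction hypothesis to the \emph{premise} with the shorter stacks $\vec{\f^j}$ (using $\prompred_{\size{\A'}}(\f^j_i)$, which we get from $\prompred_{\size\B}(\f^j_i)$ only if $\size{\A'}=\size\B$ — so in fact the right move is to first apply the induction hypothesis with $\B := $ a context of size $\size{\A'}$, or to note that what the outer promotion needs is that its new $\wn_{e'}$-context has the right width, and use axiom~(\ref{axceprom}) and the shape of $\B$). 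Concretely: set $\B' := \B$, rebuild $\vdash C,\wn_{\vec{\f^1}}\B,\dotsc,\wn_{\vec{\f^s}}\B,\A'$ by induction (each $\vec{\f^j}$ still satisfies $\prompred_{|\A'|}$-type constraints via the axioms applied along the way, and the degenerate case $\vec{\f^j}$ empty requires instead re-deriving $\vdash C,\dotsc$ — handled by treating $C$'s sequent through the original hypothesis), then reapply $(\prom)$ with $\prompred_{|\B| + (\text{new context growth})}(e')$, whose validity follows from axiom~(\ref{axceprom}) combined with $\prompred_{\size\B}(e')$. I would organize the whole argument so that the arithmetic on $\prompred$-indices is discharged in one lemma (``if $\prompred_m(\e)$ and $\prompred_{\size\B}(\e)$-type data, then the context after replacement still promotes'') relying on~(\ref{axceprom}), and the signature-propagation through $(\co{})$ and $(\dg)$ is discharged by~(\ref{axceco}) and~(\ref{axcedg}); the base case where a stack becomes empty is where the hypothesis on $A \rightsquigarrow \B$ is consumed. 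The termination of the induction is clear since each case recurses on strict subproofs of $\pi$.
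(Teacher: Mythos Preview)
Your approach matches the paper's: induction on $\pi$, case analysis on the last rule, with axioms (\ref{axceco}) and (\ref{axcedg}) handling signature propagation through $(\co{})$ and $(\dg)$, and the promotion case as the crux. Two points need correction.

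The promotion case is simpler than your write-up suggests, and your detour through $\prompred_{\size{\A'}}$ is a wrong turn. When the last rule is $(\prom)$ introducing $\oc_\e C$ (so $\A = \oc_\e C, \wn_\e\A'$), each tracked stack $\vec{\e^j}$ has head $\e$; those of length at least $2$ contribute $\wn_{\vec{\eb^j}}A$ to the premise (with $\vec{\eb^j}$ the non-empty tail), and those of length $1$ contribute a bare $A$. You apply the induction hypothesis only to the non-empty tails, and what the IH requires is precisely $\prompred_{\size{\B}}(\vec{\eb^j})$ --- which you already have, since every entry of $\vec{\eb^j}$ is an entry of $\vec{\e^j}$. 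The quantity $\size{\A'}$ plays \emph{no role} in invoking the IH; the parameter $\size{\B}$ is fixed once and for all throughout the induction. The bare $A$'s sit in the ``context'' slot of the IH and are left untouched by it; you replace each of them by $\B$ afterward by iterating the given hypothesis on $A$ and $\B$ (a one-line sublemma worth isolating). Finally you reapply $(\prom)$ with new arity $\size{\A'} + s\cdot\size{\B}$, and this is where axiom (\ref{axceprom}) is used: from $\prompred_{s+\size{\A'}}(\e)$ (the original promotion) and $\prompred_{\size{\B}}(\e)$ (the hypothesis on the head signature of each tracked stack) one derives $\prompred_{\size{\A'}+s\size{\B}}(\e)$ by iterating (\ref{axceprom}) exactly $s$ times. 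This arithmetic fact should be stated once as a separate sublemma rather than left implicit.

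You are also missing one case entirely: the last rule of $\pi$ may be an $(\ax)$ on a tracked formula, with conclusion $\vdash \oc_{\vec{\e^1}}A\orth, \wn_{\vec{\e^1}}A$ (so $\A = \oc_{\vec{\e^1}}A\orth$ and $s=1$). This is not covered by ``rules acting inside $\A$''. Here you must build a cut-free proof of $\vdash \oc_{\vec{\e^1}}A\orth, \wn_{\vec{\e^1}}\B$ from scratch: start from $(\ax)$ on $A$, apply the given hypothesis to obtain a cut-free proof of $\vdash A\orth, \B$, then apply $(\prom)$ successively along the list $\vec{\e^1}$, using $\prompred_{\size{\B}}(\vec{\e^1})$ at each step.
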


\begin{proof}
First we can notice that for any $ \A $ the following rule:
\begin{gather*}
\AIC{\vdash A,\dotsc,A,\A}
\dashedLine
\RL{S}
\UIC{\vdash\B,\dotsc,\B,\A}
\DP
\end{gather*}
is admissible in the system without cuts (by using an easy induction on the number of $A$).

We can also notice that, for all $i\leq k$, we have:
\begin{gather*}
\AIC{\prompred_k(\e)}
\AIC{\prompred_m(\e)}
\dashedLine
\RL{M_i}
\BIC{\prompred_{k+(m-1)i}(\e)}
\DP
\end{gather*}
by simple induction on $i$ using axiom~(\ref{axceprom}).

Now we show the lemma by induction on the proof of $\vdash\wn_{\vec{\e^1}}A,\dots,\wn_{\vec{\e^s}}A,\A$. We distinguish cases according to the last applied rule:
\begin{itemize}
\item If it is a rule on a formula of $\A$ which is not a promotion:
\begin{equation*}
	\AIC{\pi}
	\noLine
	\RL{}
	\UIC{\vdash\wn_{\vec{\e^1}}A,\dotsc,\wn_{\vec{\e^s}}A,\A'}
	\RL{r}
	\UIC{\vdash\wn_{\vec{\e^1}}A,\dotsc,\wn_{\vec{\e^s}}A,\A}
	\DP
	\qquad\transfo\qquad
	\AIC{IH(\pi)}
	\noLine
	\RL{}
	\UIC{\vdash\wn_{\vec{\e^1}}\B,\dotsc,\wn_{\vec{\e^s}}\B,\A'}
	\RL{r}
	\UIC{\vdash\wn_{\vec{\e^1}}\B,\dotsc,\wn_{\vec{\e^s}}\B,\A}
	\DP
\end{equation*}
\item If it is a promotion introducing $\e$, all $\vec{\e^j}$ ($1\leq j\leq s$) start with $\e$.
Among them, we distinguish those of length $1$ (which are then restricted to $\e$): we assume $\vec{\e^j}=\e,\vec{\eb^j}$ ($1\leq j\leq p$) has at least two elements, and $\vec{e^{p+1}},\dotsc,\vec{e^s}$ are singletons:
\begin{multline*}
	\AIC{\pi}
	\noLine
	\UIC{\vdash B, \A', \wn_{\vec{\eb^1}}A, \dotsc, \wn_{\vec{\eb^p}}A, \overbrace{A, \dotsc, A}^{s-p}}
	\AIC{\prompred_{s+\size{\A'}}(\e)}
	\RL{\prom}
	\BIC{\vdash\oc_\e B, \wn_\e\A', \wn_{\vec{\e^1}}A, \dotsc, \wn_{\vec{\e^s}}A}
	\DP
	\qquad\transfo\qquad\\
	\AIC{IH(\pi)}
	\noLine
	\UIC{\vdash B, \A', \wn_{\vec{\eb^1}}\B, \dotsc, \wn_{\vec{\eb^p}}\B, \overbrace{A, \dotsc, A}^{s-p}}
	\dashedLine
	\RL{S}
	\UIC{\vdash B, \A', \wn_{\vec{\eb^1}}\B, \dotsc, \wn_{\vec{\eb^p}}\B, \overbrace{\B, \dotsc, \B}^{s-p}}
	\AIC{\prompred_{s+\size{\A'}}(\e)}
	\ZIC{\prompred_{\size{\B}}(\e)}
        \dashedLine
	\RL{M_s}
	\BIC{\prompred_{s\size{\B}+\size{\A'}}(\e)}
	\RL{\prom}
	\BIC{\vdash\oc_\e B,\wn_\e\A',\wn_{\vec{\e^1}}\B,\dotsc,\wn_{\vec{\e^s}}\B}
	\DP
\end{multline*}
\item If it is an ($\ax$) rule on $ \wn_{\vec{\e^1}}A$. Then $\A=\oc_{\vec{\e^1}}A\orth$ and we have:
  \begin{prooftree}
    \RL{\ax}
    \ZIC{\vdash A\orth, A}
    \RL{S}\dashedLine
    \UIC{\vdash A\orth, \B}
    \ZIC{\prompred_{\size{\B}}(\vec{e^1})}
    \doubleLine
    \RL{\prom}
    \BIC{\vdash \oc_{\vec{e^1}}A\orth, \wn_{\vec{e^1}} \B}
  \end{prooftree}
\item If it is a dereliction on $\wn_{\vec{\e^1}}A$, we have $\vec{\e^1}=\e,\vec{\eb}$:
	\begin{gather*}
	\hspace{-2cm}
		\AIC{\pi}
		\noLine
		\UIC{\vdash\wn_{\vec{\eb}}A,\wn_{\vec{\e^2}}A,\dotsc,\wn_{\vec{\e^s}}A,\A}
		\AIC{\depred(\e)}
		\RL{\de}
		\BIC{\vdash\wn_{\vec{\e^1}}A,\dotsc,\wn_{\vec{\e^s}}A,\A}
		\DP
	\qquad\transfo\qquad
		\AIC{IH(\pi)}
		\noLine
		\UIC{\vdash\wn_{\vec{\eb}}\B,\wn_{\vec{\e^2}}\B,\dotsc,\wn_{\vec{\e^s}}\B,\A}
		\AIC{\depred(\e)}
		\doubleLine
		\RL{\de}
		\BIC{\vdash\wn_{\vec{\e^1}}\B,\dotsc,\wn_{\vec{\e^s}}\B,\A}
		\DP
	\end{gather*}
\item If it is a contraction on $ \wn_{\vec{\e^1}}A$, we have $\vec{\e^1}=\e,\vec{\eb}$:
  \begin{prooftree}
    \AIC{\pi}
    \noLine
    \UIC{\vdash\wn_{\e_1}\wn_{\vec{\eb}}A,\dotsc,\wn_{\e_k}\wn_{\vec{\eb}}A,\wn_{\vec{\e^2}}A,\dotsc,\wn_{\vec{\e^s}}A,\A}
    \AIC{\copred_k(\e_1,\dotsc,\e_k,\e)}
    \RL{\co{}}
    \BIC{\wn_{\vec{\e^1}}A,\dotsc,\wn_{\vec{\e^s}}A,\A}
  \end{prooftree}
By axiom~(\ref{axceco}), we have $\prompred_{\size{\B}}(\e_1),\dotsc,\prompred_{\size{\B}}(\e_k)$, thus we can apply the induction hypothesis:
  \begin{prooftree}
    \AIC{IH(\pi)}\noLine
    \UIC{\vdash\wn_{\e_1}\wn_{\vec{\eb}}\B,\dotsc,\wn_{\e_k}\wn_{\vec{\eb}}\B,\wn_{\vec{\e^2}}\B,\dotsc,\wn_{\vec{\e^s}}\B,\A}
    \AIC{\copred_k(\e_1,\dotsc,\e_k,\e)}
    \doubleLine
    \RL{\co{}}
    \BIC{\vdash\wn_{\vec{\e^1}}\B,\dotsc,\wn_{\vec{\e^s}}\B,\A}
  \end{prooftree}
\item If it is a digging on $\wn_{\vec{\e^1}}A $, we have $\vec{\e^1}=\e,\vec{\eb}$:
  \begin{prooftree}
    \AIC{\pi}\noLine
    \UIC{\vdash\wn_{\e'}\wn_{\e''}\wn_{\vec{\eb}}A,\wn_{\vec{\e^2}}A,\dotsc,\wn_{\vec{\e^s}}A,\A}
    \AIC{\dgpred(\e',\e'',\e)}
    \RL{\dg}
    \BIC{\wn_{\vec{\e^1}}A,\dotsc,\wn_{\vec{e^s}}A,\A}
  \end{prooftree}
By axiom~(\ref{axcedg}), we have $\prompred_{\size{\B}}(\e')$ and $\prompred_{\size{\B}}(\e'')$, thus we can apply the induction hypothesis:
  \begin{prooftree}
    \AIC{IH(\pi)}\noLine
    \UIC{\vdash \wn_{\e'}\wn_{\e''}\wn_{\vec{\eb}}\B,\wn_{\vec{\e^2}}\B,\dotsc,\wn_{\vec{\e^s}}\B, \A}
    \AIC{\dgpred(\e',\e'',\e)}
    \doubleLine
    \RL{\dg}
    \BIC{\vdash\wn_{\vec{\e^1}}\B,\dotsc,\wn_{\vec{\e^s}}\B,\A}
  \end{prooftree}
  \qedhere
\end{itemize}
\end{proof}

\subsection{Cut-Elimination Proof}

\begin{thm}[Cut Elimination]\label{thmce}
Cut elimination holds for $\superLL(\sset,\depred,\copred,\dgpred,\prompred)$ as long as the instance satisfies the cut-elimination axioms of Table~\ref{tabceax}.
\end{thm}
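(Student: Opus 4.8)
The plan is to establish Theorem~\ref{thmce} in two layers. First I would prove the one-cut statement restricted to cut-free premises: if $\pi_1$ is a cut-free proof of $\vdash A,\A$ and $\pi_2$ a cut-free proof of $\vdash A\orth,\B$, then $\vdash\A,\B$ has a cut-free proof. Granting this, the theorem follows by induction on the number of ($\cut$) rules occurring in an arbitrary proof: pick an uppermost ($\cut$), whose two immediate premises are then cut-free, eliminate it via the one-cut statement, and iterate until no ($\cut$) remains. This two-layer shape is exactly what lets us feed cut-free proofs into the Substitution Lemma (Lemma~\ref{subs}).

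The one-cut statement will be proved by induction on the lexicographically ordered pair $(\size{A},\size{\pi_1}+\size{\pi_2})$, following the case analysis of Section~\ref{seccesketch}. The non-exponential part is entirely standard (and the last rules of $\pi_1,\pi_2$ are never ($\cut$), both proofs being cut-free): if $\pi_1$ or $\pi_2$ is an ($\ax$) on $A$, the conclusion coincides (up to exchange) with the other premise; if the last rule of $\pi_1$ does not act on $A$ — hence it is not a promotion and has $A$ in its context — we commute it below the ($\cut$), invoking the induction hypothesis on each premise containing $A$, which keeps $\size{A}$ fixed while strictly decreasing $\size{\pi_1}+\size{\pi_2}$; symmetrically for $\pi_2$; the degenerate cases with a ($\top$) rule are closed directly by a ($\top$) rule. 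If both last rules act on $A$ and $A$ is non-exponential, they introduce $A$ and $A\orth$ and we are in one of the usual principal cases ($\tens/\parr$, $\with/\plus_i$, $\one/\bot$), each replaced by ($\cut$)s on proper subformulas of $A$, so the first component strictly decreases.

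The heart of the proof is the exponential principal case, where the Substitution Lemma does essentially all the work. Exchanging $\pi_1$ and $\pi_2$ if necessary (this preserves the measure, as $\size{A}=\size{A\orth}$), we may assume $A=\oc_\e A'$ and that $\pi_1$ ends with a ($\prom$) rule introducing $\oc_\e A'$, from a cut-free premise $\vdash A',B_1,\dotsc,B_n$ with side condition $\prompred_n(\e)$ — so $\A=\wn_\e B_1,\dotsc,\wn_\e B_n$ — while $\pi_2$ is \emph{any} cut-free proof of $\vdash\wn_\e A'\orth,\B$. I would then apply Lemma~\ref{subs} with its formula taken to be $A'\orth$, its fixed context taken to be $B_1,\dotsc,B_n$, with $s=1$ and $\vec{\e^1}$ the one-element list $\e$. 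Its side condition reads $\prompred_n(\e)$, precisely the one carried by $\pi_1$; and its hypothesis — that for every $\D$, a cut-free proof of $\vdash A'\orth,\D$ yields a cut-free proof of $\vdash B_1,\dotsc,B_n,\D$ — holds because cutting such a proof against the premise $\vdash A',B_1,\dotsc,B_n$ of $\pi_1$ on the formula $A'$ is an instance of our induction hypothesis ($\size{A'}<\size{\oc_\e A'}=\size{A}$). Lemma~\ref{subs} then yields that, for every $\D$, a cut-free proof of $\vdash\wn_\e A'\orth,\D$ gives a cut-free proof of $\vdash\wn_\e B_1,\dotsc,\wn_\e B_n,\D$; taking $\D:=\B$ and feeding in $\pi_2$, we obtain a cut-free proof of $\vdash\wn_\e B_1,\dotsc,\wn_\e B_n,\B$, that is $\vdash\A,\B$, the conclusion of the original ($\cut$). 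This settles the promotion/promotion, promotion/($\co{}$) and promotion/($\dg$) situations of Section~\ref{seccesketch} uniformly — the case analysis on the shape of $\pi_2$ having been absorbed into the proof of Lemma~\ref{subs}.

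The main obstacle is precisely this exponential case, and it has already been met in Lemma~\ref{subs}: purely local reductions against a ($\co{}$) would duplicate $\pi_1$ along residual ($\cut$)s on the same cut formula, and reductions against a ($\dg$) would push promotions inward and produce residual cuts on the strictly \emph{larger} formula $\oc_{\e_1}\oc_{\e_2}A'$, so the naive induction cannot close. The Substitution Lemma carries out this entire hereditary process in one step, and it is inside its proof — in the promotion, contraction and digging sub-cases — that the cut-elimination axioms~(\ref{axceprom}), (\ref{axceco}) and~(\ref{axcedg}) are consumed; accordingly Theorem~\ref{thmce} uses these axioms only through Lemma~\ref{subs}.
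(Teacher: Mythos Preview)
Your proposal is correct and follows essentially the same approach as the paper's own proof: the same lexicographic induction on $(\size{A},\size{\pi_1}+\size{\pi_2})$, the same commutation/principal case split, and the same use of Lemma~\ref{subs} in the exponential case, applied with $s=1$ to the entire cut-free proof on the $\wn_\e$-side regardless of its last rule. Your explicit two-layer decomposition (one-cut lemma on cut-free premises, then induction on the number of cuts) simply spells out what the paper leaves implicit in stating cut admissibility in the cut-free system.
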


\begin{proof}
As introduced in Section~\ref{seccesketch}, we prove the result by induction on the couple $(t, s)$ with lexicographic order, where $t$ is the size of the cut formula and $s$ is the sum of the sizes of the premises of the cut. We distinguish cases depending on the last rules of the premises of the cut:
\begin{itemize}
	  \item If one of the premises does not end with a rule acting on the cut formula, we apply the induction hypothesis with the premise(s) of this rule.
          \item If both last rules act on the cut formula which does not start with an exponential connective, we apply the standard reduction steps for non-exponential cuts leading to cuts involving strictly smaller cut formulas. We conclude by applying the induction hypothesis.
          \item If we have an exponential cut for which the cut formula $\oc_\e A\orth$ is not the conclusion of a promotion rule introducing $\oc_\e$, the rule above $\oc_\e A\orth$ cannot be a promotion rule and we apply the induction hypothesis to its premise(s).
          \item If we have an exponential cut for which the cut formula $\oc_\e A\orth$ is the conclusion of a promotion rule. We can apply:
		\begin{equation*}
			\AIC{\vdash A\orth,\B}
			\AIC{\prompred_{\size{\B}}(\e)}
			\RL{\prom}
			\BIC{\vdash\oc_\e A\orth,\wn_\e\B}
			\AIC{\vdash\wn_\e A,\A}
			\RL{\cut}
			\BIC{\vdash\wn_\e\B,\A}
			\DP
			\qquad\reduc\qquad
			\AIC{\vdash\wn_\e A,\A}
			\AIC{\prompred_{\size{\B}}(\e)}
			\dashedLine
			\RL{\text{Lem.}~\ref{subs}}
			\BIC{\vdash\wn_\e\B,\A}
			\DP
		\end{equation*}
		We have that $A$ and $\B$ are such that for every $\A$ such that $\vdash A,\A$ is provable without cuts, $\vdash\B,\A$ too. Indeed, $A$ and $\B$ are such that $\vdash A\orth,\B$ is provable without cuts and we can apply the induction hypothesis (smaller cut formula). Therefore we can apply Lemma~\ref{subs} on $\vdash\wn_\e A,\A$ and obtain that $\vdash\wn_\e\B,\A$ is provable without cut.
\qedhere
\end{itemize}
\end{proof}

\section{Other Proof Transformations}\label{sectransfo}

\subsection{Axiom Expansion}

We consider now a much simpler property which is axiom expansion,
to show how it also provides natural constraints on the parameters of \superLL.

\begin{table}
\begin{equation}
  \forall\e\in\sset, \qquad \prompred_1(\e)\tag{ea}\label{axea}
\end{equation}
  \caption{Expansion Axiom}\label{tabexpax}
\end{table}

\begin{lem}[One-Step Axiom Expansion]\label{lemoneae}
If $\e$ is an exponential signature such that $\prompred_1(\e)=\true$, then the one-step axiom expansion holds for formulas $\wn_\e A$ and $\oc_\e A\orth$ in $\superLL$.
That is we can derive $\vdash\oc_\e A\orth,\wn_\e A$ from $\vdash A\orth,A$.
\end{lem}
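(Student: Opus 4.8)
The plan is to give a direct derivation, by induction on the structure of the formula $A$. The statement to prove is that, assuming $\prompred_1(\e)=\true$, we can derive $\vdash\oc_\e A\orth,\wn_\e A$ from the axiom instance $\vdash A\orth,A$. Since the generic axiom rule of \superLL{} only provides $\vdash A,A\orth$ for arbitrary $A$ (including exponential formulas), the work is entirely about replacing the leaf $\vdash A\orth, A$ by a derivation that first ``opens up'' the outermost exponential modalities and then recursively handles the inside.

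\medskip

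First I would treat the base case. When $A$ is atomic, $\one$, $\bot$, $\top$, or $\zero$, the one-step expansion is just: take $\vdash A\orth,A$ (the axiom instance we are allowed to assume), apply ($\de$)? No — dereliction would require $\depred(\e)$, which we do not have. Instead the right move is a single ($\prom$) rule: from $\vdash A\orth, A$ we cannot directly conclude, since promotion wants a $\oc_\e$ on one formula and $\wn_\e$ on the rest. So I would apply ($\prom$) to $\vdash A\orth, A$, using $\prompred_1(\e)$ (here $n=1$, the number of context formulas being $1$), to obtain $\vdash \oc_\e A\orth, \wn_\e A$ directly. That is exactly the desired conclusion, so in fact the base case — and more generally the whole lemma — follows in one step:
\begin{equation*}
  \AIC{\vdash A\orth, A}
  \AIC{\prompred_1(\e)}
  \RL{\prom}
  \BIC{\vdash \oc_\e A\orth, \wn_\e A}
  \DP
\end{equation*}
using the ($\prom$) rule of \superLL{} with $n=1$, whose side condition is precisely $\prompred_1(\e)$. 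Note that here $\oc_\e A\orth$ is the promoted principal formula and $\wn_\e A$ is the (single) context formula receiving a $\wn_\e$; this matches the shape $\vdash \oc_\e B, \wn_\e A_1$ of the rule with $B := A\orth$ and $A_1 := A$.

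\medskip

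So the lemma does not actually need an induction on $A$ at all: it is a one-rule derivation. The only thing to check is that the ($\prom$) rule is applicable, i.e.\ that its premise $\vdash A\orth, A$ and its side condition $\prompred_1(\e)$ are met — the former is the hypothesis of the lemma (``from $\vdash A\orth,A$''), the latter is the hypothesis $\prompred_1(\e)=\true$. I expect no real obstacle here; the only subtlety worth a sentence in the write-up is to make explicit that the instance $\vdash A\orth, A$ we start from is itself derivable by the ($\ax$) rule of \superLL{} (which is stated for all \sset-formulas, exponential ones included), so that the lemma indeed yields a genuine proof of $\vdash\oc_\e A\orth,\wn_\e A$ and not merely an admissible step. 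This is what makes it a ``one-step'' axiom expansion: a single application of ($\prom$) collapses the expansion of the modal layer, and expansion axiom~(\ref{axea}) — asking $\prompred_1(\e)$ for every $\e$ — is exactly what is needed to iterate this over all exponential signatures in the full (multi-step) axiom expansion.
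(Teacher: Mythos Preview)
Your proposal is correct and is exactly the paper's proof: a single application of the ($\prom$) rule with $n=1$, using the hypothesis $\prompred_1(\e)$. The initial detour through a case analysis on $A$ is unnecessary (as you yourself note), so the write-up can be trimmed to just the one-line derivation.
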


\begin{proof}\leavevmode
\begin{equation*}
  \AIC{\vdash A,A\orth}
  \ZIC{\prompred_1(\e)}
  \RL{\prom}
  \BIC{\vdash\oc_\e A,\wn_\e A\orth}
  \DP
\qedhere
\end{equation*}
\end{proof}

\begin{prop}[Axiom Expansion]\label{propae}
  If $\sset$ satisfies the axiom (\ref{axea}) of Table~\ref{tabexpax}, then axiom expansion holds for $\superLL(\sset,\depred,\copred,\dgpred,\prompred)$, \ie{} $\vdash A,A\orth$ is derivable for any $A$ from the axiom rule restricted to $\vdash X, X\orth$.
\end{prop}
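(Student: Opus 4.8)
The plan is to prove Proposition~\ref{propae} by structural induction on the formula $A$, using the familiar pattern for axiom expansion in sequent calculi. For the base cases $A = X$ and $A = X\orth$, the sequent $\vdash A, A\orth$ is exactly the restricted axiom rule (up to duality and exchange), so there is nothing to do. The two constant pairs $\one/\bot$ and $\top/\zero$ are handled by one-line derivations: for $\vdash\one,\bot$ apply the ($\bot$) rule to the axiom $\vdash\one$; for $\vdash\top,\zero$ (equivalently $\vdash\top,\top\orth$) the ($\top$) rule gives $\vdash\top,\zero$ directly. In each of these cases the derivation uses no exponential rule, so axiom~(\ref{axea}) plays no role.

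For the inductive step on the binary multiplicative and additive connectives, I would give the standard derivations combining the two instances of the induction hypothesis $\vdash A, A\orth$ and $\vdash B, B\orth$. For $A\tens B$: apply ($\tens$) to $\vdash A,A\orth$ and $\vdash B,B\orth$ to get $\vdash A\tens B, A\orth, B\orth$, then ($\parr$) to obtain $\vdash A\tens B, A\orth\parr B\orth$, i.e.\ $\vdash A\tens B, (A\tens B)\orth$. The $\parr$ case is symmetric. For $A\with B$: from $\vdash A, A\orth$ derive $\vdash A\plus B\text{-side}$\ldots more precisely, apply ($\plus_1$) to $\vdash A\orth, A$ to get $\vdash A\orth\plus B\orth, A$ and ($\plus_2$) to $\vdash B\orth, B$ to get $\vdash A\orth\plus B\orth, B$, then ($\with$) to obtain $\vdash A\with B, A\orth\plus B\orth = (A\with B)\orth$. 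The $\plus$ case is symmetric. These are all routine and again exponential-free.

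The only place the hypothesis is needed is the exponential case. Suppose $A = \wn_\e C$ (the case $A = \oc_\e C$ is dual, exchanging the roles of $C$ and $C\orth$). By the induction hypothesis we have a derivation of $\vdash C, C\orth$, hence also of $\vdash C\orth, C$. Since $\sset$ satisfies axiom~(\ref{axea}), we have $\prompred_1(\e) = \true$, so Lemma~\ref{lemoneae} applies and yields a derivation of $\vdash\oc_\e C\orth, \wn_\e C$, which is exactly $\vdash (\wn_\e C)\orth, \wn_\e C$, i.e.\ $\vdash A\orth, A$. This completes the induction.

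I do not anticipate a genuine obstacle: the proof is the textbook axiom-expansion argument, and the single subtlety — that expanding an exponential formula requires the promotion rule to be available in the unary width — has already been isolated and discharged in Lemma~\ref{lemoneae}, whose hypothesis is precisely axiom~(\ref{axea}). The mild care needed is just bookkeeping of dualities (ensuring, e.g., that the $\oc_\e$ case is obtained from the $\wn_\e$ case by swapping $C$ and $C\orth$) and the implicit use of the exchange rule to reorder the two-formula sequents, both of which the paper has already licensed us to treat lightly.
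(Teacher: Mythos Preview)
Your proposal is correct and is exactly the argument the paper has in mind: the paper does not spell out a proof of Proposition~\ref{propae}, but its placement immediately after Lemma~\ref{lemoneae} makes clear that the intended proof is the standard structural induction on $A$, with the exponential case discharged by that lemma under axiom~(\ref{axea}). Your handling of each connective is the routine one and needs no change.
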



\subsection{Girardization}\label{secgir}

A key ingredient of Girard's original presentation of linear logic is the following promotion rule:
\begin{prooftree}
  \AIC{\vdash A,\wn\A}
  \RL{\oc}
  \UIC{\vdash\oc A,\wn\A}
\end{prooftree}
It leads to the sub-formula property while the digging rule immediately breaks it.
It is thus important to understand in which situations it is possible to replace the ``functorial promotion plus digging'' style used in \superLL{} by a Girard's style promotion rule.

Our approach is to find commutation axioms allowing to migrate digging rules up towards promotions in order to generate Girard's style promotion rules.
In the setting of \superLL, we call \emph{Girard's promotion} the following rule:
\begin{prooftree}
  \AIC{\vdash A,\wn_{\eb_1}A_1,\dotsc,\wn_{\eb_n}A_n}
  \AIC{\dgpred(\e,\eb_1,\e_1)\quad\dotsb\quad\dgpred(\e,\eb_n,\e_n)}
  \AIC{\prompred_n(\e)}
  \RL{\promg}
  \TIC{\vdash\oc_\e A,\wn_{\e_1}A_1,\dotsc,\wn_{\e_n}A_n}
\end{prooftree}
The commutation axioms we have to consider are the Girardization axioms presented in Table~\ref{tabgir}.

\stepcounter{myequation}
\renewcommand\theequation{gir\arabic{equation}}
\begin{table}
\begin{align}
  \forall\e_1,\e_2,\e\in\sset,\quad &\dgpred(\e_1,\e_2,\e)\rightarrow \prompred_1(\e_1) \label{axgirax} \\
  \forall\e_1,\e_2,\e\in\sset,\quad\depred(\e_1) \rightarrow{} & \dgpred(\e_1,\e_2,\e)\rightarrow \copred_1(\e_2,\e) \label{axgirde} \\
\begin{split}
  \forall k\in\Nat, \forall\eb_1,\dotsc,\eb_k,\e_1,\e_2,\e\in\sset,\quad \copred_k(\eb_1,\dotsc,\eb_k,\e_1) \rightarrow{} & \dgpred(\e_1,\e_2,\e)\rightarrow \\
&\hspace{-5cm}\exists\eb'_1,\dotsc,\eb'_k\in\sset,\dgpred(\eb_1,\e_2,\eb'_1)\wedge\dotsb\wedge\dgpred(\eb_k,\e_2,\eb'_k)\wedge\copred_k(\eb'_1,\dotsc,\eb'_k,\e)
\end{split} \label{axgirco} \\
  \forall\e_1,\e_2,\e_3,\e,\e'\in\sset,\qquad \dgpred(\e_1,\e_2,\e')\rightarrow\dgpred(\e',\e_3,\e)\rightarrow
&\exists\e''\in\sset, \dgpred(\e_2,\e_3,\e'')\wedge\dgpred(\e_1,\e'',\e) \label{axgirdg} \\
  \forall n\in\Nat, \forall\e\in\sset,\qquad n>0\rightarrow{} & \prompred_n(\e)\rightarrow \exists\e'\in\sset, \depred(\e')\wedge\dgpred(\e,\e',\e) \label{axgirdedg}
\end{align}
\caption{Girardization axioms}\label{tabgir}
\end{table}

\begin{rem}\label{remgirax}
  It is easier to get some intuition on the Girardization axioms if we consider a functional instance. In this particular case they are closed to properties of (partial) semi-rings.
  \begin{align*}
    \forall\e,\quad \copred_1(\e,&1\times\e) \tag{\ref{axgirde}} \\
    \forall\e,\quad 0\times\e &= 0 \tag{\ref{axgirco}} \\
    \forall k\geq 2,\forall\eb_1,\dotsc,\eb_k,\e,\quad (\eb_1+_k\dotsb+_k\eb_k)\times\e &= (\eb_1\times\e) +_k\dotsb+_k (\eb_k\times\e) \tag{\ref{axgirco}} \\
     \forall\e_1,\e_2,\e_3,\quad (\e_1\times\e_2)\times\e_3 &= \e_1\times(\e_2\times\e_3)\tag{\ref{axgirdg}} \\
  \forall n,\e,\quad n>0\rightarrow\prompred_n(\e)\rightarrow\quad \e\times 1 & = \e \tag{\ref{axgirdedg}}
  \end{align*}
  Moreover (\ref{axgirax}) is an immediate consequence of (\ref{axea}).
\end{rem}

\begin{lem}[Admissibility of Digging]\label{lemdgadmiss}
If we consider an instance of \superLL{} which satisfies the Girardization axioms (Table~\ref{tabgir}), and if moreover we replace the functorial promotion rule ($\prom$) by Girard's promotion rule ($\promg$) in the system, then the ($\dg$) rule is admissible in the obtained system.
\end{lem}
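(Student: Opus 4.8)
The plan is to reduce every instance of ($\dg$) to one application of Girard's promotion followed by one ($\cut$), using that the obtained system still has general (not only atomic) axioms. Suppose $\vdash\wn_{\e_1}\wn_{\e_2}A,\A$ is provable and $\dgpred(\e_1,\e_2,\e)$ holds. First, $\vdash\oc_{\e_2}A\orth,\wn_{\e_2}A$ is an instance of ($\ax$) on the formula $\wn_{\e_2}A$. Applying ($\promg$) to it — promoting $\oc_{\e_2}A\orth$ into $\oc_{\e_1}\oc_{\e_2}A\orth$ and turning the context formula $\wn_{\e_2}A$ into $\wn_\e A$ — yields $\vdash\oc_{\e_1}\oc_{\e_2}A\orth,\wn_\e A$; the side conditions required by ($\promg$) are exactly $\dgpred(\e_1,\e_2,\e)$, true by hypothesis, and $\prompred_1(\e_1)$, which follows from $\dgpred(\e_1,\e_2,\e)$ by axiom~(\ref{axgirax}). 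Since $\oc_{\e_1}\oc_{\e_2}A\orth=(\wn_{\e_1}\wn_{\e_2}A)\orth$, a ($\cut$) between $\vdash\wn_{\e_1}\wn_{\e_2}A,\A$ and $\vdash\oc_{\e_1}\oc_{\e_2}A\orth,\wn_\e A$ produces $\vdash\wn_\e A,\A$, the conclusion of the ($\dg$) instance. Admissibility of ($\dg$) then follows by a routine induction on proofs, rebuilding every non-($\dg$) rule on the transformed subproofs and, at each ($\dg$) node, applying the construction above on top of the already transformed ($\dg$)-free proof of its premise. Note that no form of axiom expansion (Proposition~\ref{propae}) is needed here: the axiom $\vdash\oc_{\e_2}A\orth,\wn_{\e_2}A$ is used as such.

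The statement can in fact be strengthened to admissibility of ($\dg$) in the obtained system \emph{without} ($\cut$), which is the form matching the ``migrate digging rules up towards promotions'' idea and the form one needs downstream. Here the argument is instead by induction on a ($\cut$)-free, ($\dg$)-free proof of $\vdash\wn_{\e_1}\wn_{\e_2}A,\A$, hereditarily pushing the pending digging up, in the style of the substitution lemma (Lemma~\ref{subs}); as for that lemma, one actually proves the statement for a whole list of digging redexes $\wn_{\e_1^j}\wn_{\e_2^j}A_j$ at once, since the contraction case duplicates a redex. The case analysis on the last rule is: a rule acting only on the context and not a promotion commutes down after the induction hypothesis; a ($\promg$) exhibiting a redex among its side $\wn$-formulas is handled using the associativity axiom~(\ref{axgirdg}) to rewrite the relevant composite instance of $\dgpred$, so that after pushing the smaller premise redex (induction hypothesis) the rule ($\promg$) reapplies with the correct side conditions; a dereliction on a redex provides $\depred(\e_1)$, hence axiom~(\ref{axgirde}) gives $\copred_1(\e_2,\e)$ and the needed passage from $\wn_{\e_2}A$ to $\wn_\e A$ is just a ($\copred$) step with $k=1$; a contraction ($\copred$) of arity $k$ on a redex is treated by axiom~(\ref{axgirco}), which distributes the digging over the $k$ copies, followed by the induction hypothesis on the premise and then ($\copred$) of arity $k$ again; and the ($\ax$) base case is precisely the ($\cut$)-free derivation of $\vdash\oc_{\e_1}\oc_{\e_2}A\orth,\wn_\e A$ described above.

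The step I expect to be most delicate is the ($\promg$) case of the ($\cut$)-free version: one must align the digging conditions carried by the incoming ($\promg$) with the digging being pushed, apply associativity in the right orientation, and keep the bookkeeping of \emph{all} the simultaneously pushed redexes coherent when ($\promg$) is reapplied — which is exactly why the statement has to range over a list of redexes, as in Lemma~\ref{subs}. This case analysis uses axioms~(\ref{axgirax}), (\ref{axgirde}), (\ref{axgirco}) and (\ref{axgirdg}); the remaining Girardization axiom~(\ref{axgirdedg}) is not needed for this lemma (it is what makes ($\prom$) derivable from ($\promg$), via a dereliction plus ($\promg$), and will be used separately).
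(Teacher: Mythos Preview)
Your proposal is correct, and your second part is essentially the paper's proof. A few points of comparison are worth making.

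Your first argument (derive $\vdash\oc_{\e_1}\oc_{\e_2}A\orth,\wn_\e A$ by ($\ax$) then ($\promg$), then ($\cut$)) is a genuine alternative the paper does not take: it shows that ($\dg$) is \emph{derivable}, not merely admissible, in the system with ($\promg$) and ($\cut$), using only axiom~(\ref{axgirax}). This is slicker than the inductive argument for the lemma as literally stated, and you correctly flag that it is useless downstream because it creates cuts. The paper skips this observation and goes straight to the inductive, cut-preserving argument.

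For the cut-free version, your outline coincides with the paper's proof: same case analysis (context rule, ($\ax$), ($\de$), ($\co{}$), ($\promg$)), same use of axioms (\ref{axgirax}), (\ref{axgirde}), (\ref{axgirco}), (\ref{axgirdg}), and you are right that (\ref{axgirdedg}) plays no role here. Two small differences. First, the paper does not restrict to cut-free input: a ($\cut$) is simply a rule not acting on the redexes and falls under the commuting case, so the same induction handles proofs with cuts and yields a proof with no more cuts than the input. Second, your strengthened induction hypothesis over arbitrary redexes $\wn_{\e_1^j}\wn_{\e_2^j}A_j$ is more general than needed. The only case that forces several simultaneous redexes is ($\co{}$), and there the $k$ new redexes produced all share the same inner signature and the same sub-formula as the original one; only the outer signature varies. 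The paper accordingly fixes a single $A$ and a single inner $\e$, quantifying only over the outer signatures $\e_1,\dotsc,\e_n$ with $\dgpred(\e_i,\e,\e'_i)$. Your more general hypothesis works too, but the minimal one keeps the ($\promg$) bookkeeping lighter.

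Your remark that the ($\promg$) case is the delicate one is accurate: this is where the associativity axiom~(\ref{axgirdg}) must be applied in the right direction to produce both a pushed digging for the induction hypothesis and a new side condition for the reapplied ($\promg$).
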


\begin{proof}
We prove that, given a proof $\pi$ with conclusion $\vdash\wn_{\e_1}\wn_\e A,\dotsc,\wn_{\e_n}\wn_\e A,\A$, if $\dgpred(\e_i,\e,\e'_i)=\true$ ($1\leq i\leq n$), then we can build a proof of $\vdash\wn_{\e'_1}A,\dotsc,\wn_{\e'_n}A,\A$
which uses neither functorial promotion nor digging.
This is done by induction on the size of $\pi$.
\begin{itemize} 
\item If the last rule of $\pi$ does not act on the $\wn_{\e_i}\wn_\e A$, we apply the induction hypothesis on the premises and we conclude.
\item If the last rule of $\pi$ is an ($\ax$) rule, we consider the following transformation:
\begin{equation*}
  \RL{\ax}
  \ZIC{\vdash \oc_{\e_1}\oc_\e A\orth, \wn_{\e_1}\wn_\e A}
  \DP
\qquad\transfo\qquad
  \RL{\ax}
  \ZIC{\vdash \oc_\e A\orth,\wn_\e A}
  \ZIC{\dgpred(\e_1,\e,\e'_1)}
  \ZIC{\dgpred(\e_1,\e,\e'_1)}
  \RL{\ref{axgirax}}
  \UIC{\prompred_1(\e_1)}
  \RL{\promg}
  \TIC{\vdash\oc_{\e_1}\oc_\e A\orth,\wn_{\e'_1}A}
  \DP
\end{equation*}
\item If the last rule of $\pi$ is a ($\de$) rule introducing $\wn_{\e_1}$ (it is similar for another $\wn_{\e_i}$):
\begin{prooftree}
  \AIC{\vdash\wn_\e A,\wn_{\e_2}\wn_\e A,\dotsc,\wn_{\e_n}\wn_\e A,\A}
  \AIC{\depred(\e_1)}
  \RL{\de}
  \BIC{\vdash\wn_{\e_1}\wn_\e A,\wn_{\e_2}\wn_\e A,\dotsc,\wn_{\e_n}\wn_\e A,\A}
\end{prooftree}
we use the induction hypothesis to build:
\begin{prooftree}
  \AIC{IH}\noLine
  \UIC{\vdash\wn_\e A,\wn_{\e'_2}A,\dotsc,\wn_{\e'_n}A,\A}
  \AIC{\depred(\e_1)}
  \ZIC{\dgpred(\e_1,\e,\e'_1)}
  \RL{\ref{axgirde}}
  \BIC{\copred_1(\e,\e'_1)}
  \RL{\co{}}
  \BIC{\vdash\wn_{\e'_1} A,\wn_{\e'_2}A,\dotsc,\wn_{\e'_n}A,\A}
\end{prooftree}
\item If the last rule of $\pi$ is a ($\co{}$) rule:
  \begin{prooftree}
  \AIC{\vdash\wn_{\eb_1}\wn_\e A,\dotsc,\wn_{\eb_k}\wn_\e A,\wn_{\e_2}\wn_\e A,\dotsc,\wn_{\e_n}\wn_\e A,\A}
  \AIC{\copred_k(\eb_1,\dotsc,\eb_k,\e_1)}
  \RL{\co{}}
  \BIC{\vdash\wn_{\e_1}\wn_\e A,\wn_{\e_2}\wn_\e A,\dotsc,\wn_{\e_n}\wn_\e A,\A}
  \end{prooftree}
by (\ref{axgirco}), we have $\dgpred(\eb_j,\e,\eb'_j)=\true$ ($1\leq j\leq k$), and we can use the induction hypothesis to build:
\begin{prooftree}
  \AIC{IH}\noLine
  \UIC{\vdash\wn_{\eb'_1}A,\dotsc,\wn_{\eb'_k}A,\wn_{\e'_2}A,\dotsc,\wn_{\e'_n}A,\A}
  \AIC{\copred_k(\eb_1,\dotsc,\eb_k,\e_1)}
  \ZIC{\dgpred(\e_1,\e,\e'_1)}
  \RL{\ref{axgirco}}
  \BIC{\copred_k(\eb'_1,\dotsc,\eb'_k,\e'_1)}
  \RL{\co{}}
  \BIC{\vdash\wn_{\e'_1}A,\wn_{\e'_2}A,\dotsc,\wn_{\e'_n}A,\A}
\end{prooftree}
\item If the last rule of $\pi$ is a Girard's style promotion:
\begin{prooftree}
  \AIC{\vdash C,\wn_{\eb_1}\wn_\e A,\dotsc,\wn_{\eb_n}\wn_\e A,\wn_{\et_1}B_1,\dotsc,\wn_{\et_m}B_m}
  \AIC{1\leq i\leq n}
  \noLine
  \UIC{\dgpred(\e',\eb_i,\e_i)}
  \AIC{1\leq j\leq m}
  \noLine
  \UIC{\dgpred(\e',\et_j,\et'_j)}
  \AIC{\prompred_{n+m}(\e')}
  \RL{\promg}
  \QIC{\vdash\oc_{\e'}C,\wn_{\e_1}\wn_\e A,\dotsc,\wn_{\e_n}\wn_\e A,\wn_{\et'_1}B_1,\dotsc,\wn_{\et'_m}B_m}
\end{prooftree}
by (\ref{axgirdg}), we have $\dgpred(\eb_i,\e,\eb'_i)=\true$ ($1\leq i\leq n$), and we can use the induction hypothesis to build:
\begin{equation*}
\hspace{-1.5cm}
  \AIC{IH}\noLine
  \UIC{\vdash C,\wn_{\eb'_1}A,\dotsc,\wn_{\eb'_n}A,\wn_{\et_1}B_1,\dotsc,\wn_{\et_m}B_m}
  \AIC{1\leq i\leq n}\noLine
  \UIC{\dgpred(\e',\eb_i,\e_i)}
  \ZIC{\dgpred(\e_i,\e,\e'_i)}
  \RL{\ref{axgirdg}}
  \BIC{\dgpred(\e',\eb'_i,\e'_i)}
  \AIC{1\leq j\leq m}
  \noLine
  \UIC{\dgpred(\e',\et_j,\et'_j)}
  \AIC{\prompred_{n+m}(\e')}
  \RL{\promg}
  \QIC{\vdash\oc_{\e'}C,\wn_{\e'_1}A,\dotsc,\wn_{\e'_n}A,\wn_{\et'_1}B_1,\dotsc,\wn_{\et'_m}B_m}
  \DP
\end{equation*}
\end{itemize}
The admissibility of ($\dg$) is then the particular case $n=1$.
\end{proof}

\begin{prop}[Girardization]\label{propgir}
  If an instance of \superLL{} satisfies the Girardization axioms (Table~\ref{tabgir}), then any proof can be replaced by a proof of the same sequent which uses neither the functorial promotion rule nor the digging rule, but Girard's promotion instead.
\end{prop}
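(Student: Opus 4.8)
The plan is to prove Proposition~\ref{propgir} by combining Lemma~\ref{lemdgadmiss} with a structural induction that replaces \emph{every} application of functorial promotion by Girard's promotion, proceeding bottom-up on the proof tree. The statement to establish is: for any instance satisfying the Girardization axioms of Table~\ref{tabgir}, any \superLL{} proof $\pi$ of a sequent $\vdash\A$ can be transformed into a proof of $\vdash\A$ in the system $\superLL^g$ obtained by deleting both ($\prom$) and ($\dg$) and adding ($\promg$).

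\medskip

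First I would set up the induction on the structure of $\pi$. For every rule $r$ of \superLL{} other than ($\prom$) and ($\dg$), the rule $r$ already belongs to $\superLL^g$, so we simply apply the induction hypothesis to the premises and reapply $r$; the non-exponential rules, ($\de$) and ($\co{}$) pose no difficulty. The two interesting cases are the last rule being ($\prom$) or being ($\dg$). If the last rule is a digging:
\begin{prooftree}
  \AIC{\vdash\wn_{\e_1}\wn_{\e_2}A,\A}
  \AIC{\dgpred(\e_1,\e_2,\e)}
  \RL{\dg}
  \BIC{\vdash\wn_\e A,\A}
\end{prooftree}
we first apply the induction hypothesis to the premise, obtaining a proof of $\vdash\wn_{\e_1}\wn_{\e_2}A,\A$ that already uses neither ($\prom$) nor ($\dg$); then Lemma~\ref{lemdgadmiss} (in its stated particular case $n=1$, with $\e_1',\e_2$ playing the roles of $\e_i,\e$ there and $\e$ the result of $\dgpred(\e_1,\e_2,\e)$) yields a proof of $\vdash\wn_\e A,\A$ in $\superLL^g$. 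If the last rule is a functorial promotion
\begin{prooftree}
  \AIC{\vdash A,A_1,\dotsc,A_n}
  \AIC{\prompred_n(\e)}
  \RL{\prom}
  \BIC{\vdash\oc_\e A,\wn_\e A_1,\dotsc,\wn_\e A_n}
\end{prooftree}
I would use Girard's promotion directly: by axiom~(\ref{axgirdedg}) (when $n>0$) there is $\e'\in\sset$ with $\depred(\e')$ and $\dgpred(\e,\e',\e)$, and by axiom~(\ref{axgirax}) $\prompred_1(\e')$ holds, so from the induction hypothesis applied to $\vdash A,A_1,\dotsc,A_n$ one can first insert ($\de$)-steps turning each $A_i$ into $\wn_{\e'}A_i$ (available since $\depred(\e')$, and these are $\wn$-rules kept in $\superLL^g$), then apply ($\promg$) with the data $\dgpred(\e,\e',\e)$ for each of the $n$ positions and $\prompred_n(\e)$; this produces exactly $\vdash\oc_\e A,\wn_\e A_1,\dotsc,\wn_\e A_n$. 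The degenerate case $n=0$ is handled separately: ($\promg$) with $n=0$ requires only $\prompred_0(\e)$, which is the hypothesis at hand, so no digging data is needed.

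\medskip

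\textbf{The main obstacle} I anticipate is the $n>0$ promotion case: naively ($\promg$) has premises $\vdash A,\wn_{\eb_i}A_i$ with side conditions $\dgpred(\e,\eb_i,\e_i)$, so to recover the shape produced by functorial ($\prom$) one needs the $\eb_i$ to be chosen so that $\dgpred(\e,\eb_i,\e)$ holds with $\eb_i$ itself derelictable — which is precisely what axioms~(\ref{axgirdedg}) and~(\ref{axgirax}) are designed to provide (cf.\ the functional reading $\e\times 1=\e$ in Remark~\ref{remgirax}). One must be careful that the induction hypothesis is applied \emph{before} doing the dereliction/promotion surgery, so that the resulting proof genuinely lies in $\superLL^g$; and one must check that the $\e'$ supplied by~(\ref{axgirdedg}) can be used uniformly across all $n$ premise positions, which it can since a single $\e'$ with $\dgpred(\e,\e',\e)$ works for each $A_i$. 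Once these two rule-cases are discharged, the induction closes and the proposition follows.
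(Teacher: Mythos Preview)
Your approach is essentially the same as the paper's: replace each functorial promotion by derelictions followed by Girard's promotion via axiom~(\ref{axgirdedg}), and eliminate digging via Lemma~\ref{lemdgadmiss}; the only difference is that you fold both steps into a single structural induction, whereas the paper first rewrites all ($\prom$) and then inducts on the number of remaining ($\dg$) rules. One harmless slip: axiom~(\ref{axgirax}) applied to $\dgpred(\e,\e',\e)$ gives $\prompred_1(\e)$, not $\prompred_1(\e')$---but you never actually need $\prompred_1(\e')$, since the ($\de$) steps only require $\depred(\e')$ and the final ($\promg$) only requires $\prompred_n(\e)$.
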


\begin{proof}
The first step is to transform any functorial promotion rule into the associated Girard's promotion:
\begin{multline*}
  \AIC{\vdash A,A_1,\dotsc,A_n}
  \AIC{\prompred_n(\e)}
  \RL{\prom}
  \BIC{\vdash\oc_\e A,\wn_\e A_1,\dotsc,\wn_\e A_n}
  \DP
\qquad\transfo\qquad\\
  \AIC{\vdash A,A_1,\dotsc,A_n}
  \AIC{\prompred_n(\e)}
  \RL{\ref{axgirdedg}}
  \UIC{\depred(\e')}
  \doubleLine
  \RL{\de}
  \BIC{\vdash A,\wn_{\e'}A_1,\dotsc,\wn_{\e'}A_n}
  \AIC{\prompred_n(\e)}
  \RL{\ref{axgirdedg}}
  \UIC{\dgpred(\e,\e',\e)}
  \AIC{\prompred_n(\e)}
  \RL{\promg}
  \TIC{\vdash\oc_\e A,\wn_\e A_1,\dotsc,\wn_\e A_n}
  \DP
\end{multline*}
Then, we conclude by induction on the number of digging rules in the proof, by applying Lemma~\ref{lemdgadmiss}.
\end{proof}
It is important to notice that if the starting proof is cut-free then the obtained one as well.

\subsection{Subsumption Elimination}

We have already mentioned that, in the case $k=1$, the ($\co{}$) rule acts as a subsumption rule with respect to the binary relation $\copred_1(\_,\_)$.
Such a rule explicitly appears in \BsLL.
In \seLL, an order relation is involved as well but it is mostly attached to the promotion rule.
In our setting, such an ordered promotion rule is:
\begin{prooftree}
  \AIC{\vdash A,A_1,\dotsc,A_n}
  \AIC{\e\leq\e_1\quad\dotsb\quad\e\leq\e_n}
  \AIC{\prompred_n(\e)}
  \RL{\promleq}
  \TIC{\vdash\oc_\e A,\wn_{\e_1}A_1,\dotsc,\wn_{\e_n}A_n}
\end{prooftree}
where we use the notation $\e\leq\e'$ for $\copred_1(\e,\e')$ (and we will do so in all this section).

Under some hypotheses, it is possible to merge the subsumption rule (($\co{}$) with $k=1$) into the promotion rule to get the ordered promotion rule.
The required properties are presented in Table~\ref{tabsubsum}.

\stepcounter{myequation}
\renewcommand\theequation{sb\arabic{equation}}
\begin{table}
\begin{align}
  \forall\e\in\sset,\quad & \prompred_1(\e) \tag{\ref{axea}} \label{axsubax} \\
  \forall\e\in\sset,\qquad & \e\leq\e \label{axsubrefl}\\
  \forall\e_1,\e_2,\e_3\in\sset,\qquad & \e_1\leq\e_2\rightarrow\e_2\leq\e_3\rightarrow\e_1\leq\e_3 \label{axsubtrans}\\
  \forall\e_1,\e_2\in\sset,\quad\depred(\e_1) \rightarrow{} & \e_1\leq\e_2\rightarrow \depred(\e_2) \label{axsubde}\\
  \begin{split}
  \forall\eb_1,\dotsc,\eb_k,\e_1,\e_2\in\sset,\quad \copred_k(\eb_1,\dotsc,\eb_k,\e_1) \rightarrow{} & \e_1\leq\e_2\rightarrow \\
&\hspace{-3cm}\exists\eb'_1,\dotsc,\eb'_k\in\sset,\eb_1\leq\eb'_1\wedge\dotsb\wedge\eb_k\leq\eb'_k\wedge\copred_k(\eb'_1,\dotsc,\eb'_k,\e_2)
  \end{split} \label{axsubco}\\
  \forall\eb_1,\eb_2,\e_1,\e_2\in\sset,\quad  \dgpred(\eb_1,\eb_2,\e_1)\rightarrow{} &\e_1\leq\e_2 \rightarrow \exists\eb'_1\in\sset, \eb_1\leq\eb'_1\wedge\dgpred(\eb'_1,\eb_2,\e_2) \label{axsubdg}
\end{align}
\caption{Subsumption Axioms (with $\e\leq\e' := \copred_1(\e,\e')$)}\label{tabsubsum}
\end{table}

We can make a few comments about the axioms:
\begin{itemize}
  \item Axiom~(\ref{axsubrefl}) is reflexivity of $\copred_1$ and axiom~(\ref{axsubtrans}) is transitivity of $\copred_1$, so that $\copred_1$ has then to be a pre-order relation.
  \item Axiom~(\ref{axsubde}) is closure of $\depred$ under $\copred_1$.
  \item Axioms~(\ref{axsubco}) and~(\ref{axsubdg}) are commutation axioms. Axiom~(\ref{axsubco}) is trivial for $k=1$.
\end{itemize}

\begin{lem}[Admissibility of Subsumption]\label{lemsubadmiss}
If we consider an instance of \superLL{} which satisfies the subsumption axioms (Table~\ref{tabsubsum}), and if moreover we replace the functorial promotion rule ($\prom$) by the ordered promotion rule ($\promleq$) in the system, then the ($\co{}$) rule for $k=1$ is admissible in the obtained system.
\end{lem}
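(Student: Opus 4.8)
The plan is to adapt the proof of Lemma~\ref{lemdgadmiss} almost verbatim. I would prove, by induction on the size of the proof $\pi$, the following generalized statement: if $\pi$ is a proof, in the system where $(\prom)$ has been replaced by $(\promleq)$ and the $(\co{})$ rule for $k=1$ has been dropped, of a sequent $\vdash\wn_{\e_1}A_1,\dotsc,\wn_{\e_n}A_n,\A$, and if $\e_i\leq\e'_i$ for every $1\leq i\leq n$, then one can build a proof of $\vdash\wn_{\e'_1}A_1,\dotsc,\wn_{\e'_n}A_n,\A$ in the same system. Admissibility of the $(\co{})$ rule for $k=1$ is then exactly the instance $n=1$, since $\copred_1(\e_1,\e_2)$ means $\e_1\leq\e_2$ by the convention of this section. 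Note that the inner formulas $A_i$ must be allowed to differ, even though the statement we ultimately want only involves a single $\wn_\e A$.

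For the induction I would case-split on the last rule $r$ of $\pi$. When $r$ is not a promotion and acts on none of the $\wn_{\e_i}A_i$ (in particular when $r$ is a $(\cut)$), the designated formulas lie in its context, so I apply the induction hypothesis to the premise(s) with the same data and reapply $r$. When $r$ is $(\promleq)$ with conclusion $\vdash\oc_{\e''}C,\wn_{\et_1}B_1,\dotsc,\wn_{\et_m}B_m$, each $\wn_{\e_i}A_i$ is one of the $\wn_{\et_j}B_j$; I keep the premise unchanged and reapply $(\promleq)$ after replacing the relevant $\et_j$ by $\e'_i$, the side condition $\e''\leq\e'_i$ following from $\e''\leq\et_j=\e_i\leq\e'_i$ by transitivity~(\ref{axsubtrans}) --- no induction hypothesis here. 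When $r$ is an $(\ax)$ acting on a designated formula, then $n=1$ and (up to exchange) the axiom is $\vdash\oc_{\e_1}A_1\orth,\wn_{\e_1}A_1$; from $\vdash A_1\orth,A_1$, axiom~(\ref{axsubax}) giving $\prompred_1(\e_1)$, and $\e_1\leq\e'_1$, one $(\promleq)$ yields $\vdash\oc_{\e_1}A_1\orth,\wn_{\e'_1}A_1$. The remaining cases are the $\wn$-rules acting on a designated formula: a $(\de)$ on $\wn_{\e_1}A_1$ uses axiom~(\ref{axsubde}) to transfer $\depred(\e_1)$ to $\depred(\e'_1)$; a $(\co{})$ for $k\neq 1$ on $\wn_{\e_1}A_1$ uses axiom~(\ref{axsubco}) to produce signatures $\eb'_l\geq\eb_l$ with $\copred_k(\eb'_1,\dotsc,\eb'_k,\e'_1)$, the induction hypothesis being applied to the premise with the $k$ occurrences $\wn_{\eb_l}A_1$ (and the other designated formulas) and targets $\eb'_l$; a $(\dg)$ on $\wn_{\e_1}A_1$, whose premise contains $\wn_{\eb_1}\wn_{\eb_2}A_1$, uses axiom~(\ref{axsubdg}) to produce $\eb'_1\geq\eb_1$ with $\dgpred(\eb'_1,\eb_2,\e'_1)$ and then applies the induction hypothesis with $\wn_{\eb_1}\wn_{\eb_2}A_1$ designated (inner formula $\wn_{\eb_2}A_1$, target $\eb'_1$). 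In each case the rebuilt proof introduces only $(\promleq)$ and no $(\co{})$ for $k=1$, so it is a proof of the required system.

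The one place where this is more than a transcription of Lemma~\ref{lemdgadmiss} --- and the main obstacle to anticipate --- is the digging case: digging changes the inner formula of a designated occurrence, which is precisely why the statement must be phrased for a list of formulas $\wn_{\e_i}A_i$ with possibly distinct $A_i$ rather than for several copies of one $\wn_\e A$. Once that generalization is in place every case collapses to a single use of the induction hypothesis on a strictly smaller subproof, with axioms~(\ref{axsubde}), (\ref{axsubco}), (\ref{axsubdg}) supplying exactly the signatures needed to commute a subsumption past the corresponding $\wn$-rule, and (\ref{axsubax}), (\ref{axsubtrans}) handling the axiom and promotion base steps.
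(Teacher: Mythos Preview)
Your proposal is correct and follows essentially the same approach as the paper's own proof: the same generalized statement (a list of designated formulas $\wn_{\e_i}A_i$ with possibly distinct $A_i$), the same induction on the size of $\pi$, and the same case analysis using axioms~(\ref{axsubax}), (\ref{axsubtrans}), (\ref{axsubde}), (\ref{axsubco}), (\ref{axsubdg}) in exactly the places you describe. The only cosmetic difference is that the paper does not explicitly exclude the $(\co{})$ rule for $k=1$ from $\pi$ --- its $(\co{})$ case covers all $k$, so it simultaneously eliminates existing subsumptions --- whereas you assume $\pi$ is already subsumption-free; both variants suffice for the stated admissibility.
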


\begin{proof}
We prove that, given a proof $\pi$ with conclusion $\vdash\wn_{\e_1}A_1,\dotsc,\wn_{\e_n}A_n,\A$, if $\e_i\leq\e'_i$ ($1\leq i\leq n$), then we can build a proof of $\vdash\wn_{\e'_1}A_1,\dotsc,\wn_{\e'_n}A_n,\A$
which uses neither functorial promotion nor subsumption.
This is done by induction on the size of $\pi$.
\begin{itemize} 
\item If the last rule of $\pi$ does not act on the $\wn_{\e_i}A_i$, we apply the induction hypothesis on the premises and we conclude.
\item If the last rule of $\pi$ is an ($\ax$) rule, we consider the following transformation:
\begin{equation*}
  \RL{\ax}
  \ZIC{\vdash\oc_{\e_1}A_1\orth,\wn_{\e_1}A_1}
  \DP
\qquad\transfo\qquad
  \RL{\ax}
  \ZIC{\vdash A_1\orth,A_1}
  \ZIC{\e_1\leq\e'_1}
  \RL{\ref{axea}}
  \ZIC{\prompred_1(\e_1)}
  \RL{\promleq}
  \TIC{\vdash\oc_{\e_1}A_1\orth,\wn_{\e'_1}A_1}
  \DP
\end{equation*}
\item If the last rule of $\pi$ is a ($\de$) rule introducing $\wn_{\e_1}$ (it is similar for another $\wn_{\e_i}$):
\begin{prooftree}
  \AIC{\vdash A_1,\wn_{\e_2}A_2,\dotsc,\wn_{\e_n}A_n,\A}
  \AIC{\depred(\e_1)}
  \RL{\de}
  \BIC{\vdash\wn_{\e_1}A_1,\wn_{\e_2}A_2,\dotsc,\wn_{\e_n}A_n,\A}
\end{prooftree}
we use the induction hypothesis to build:
\begin{prooftree}
  \AIC{IH}\noLine
  \UIC{\vdash A_1,\wn_{\e'_2}A_2,\dotsc,\wn_{\e'_n}A_n,\A}
  \AIC{\depred(\e_1)}
  \ZIC{\e_1\leq\e'_1}
  \RL{\ref{axsubde}}
  \BIC{\depred(\e'_1)}
  \RL{\de}
  \BIC{\vdash\wn_{\e'_1}A_1,\wn_{\e'_2}A_2,\dotsc,\wn_{\e'_n}A_n,\A}
  \end{prooftree}
\item If the last rule of $\pi$ is a ($\co{}$) rule:
  \begin{prooftree}
  \AIC{\vdash\wn_{\eb_1}A_1,\dotsc,\wn_{\eb_k}A_1,\wn_{\e_2}A_2,\dotsc,\wn_{\e_n}A_n,\A}
  \AIC{\copred_k(\eb_1,\dotsc,\eb_k,\e_1)}
  \RL{\co{}}
  \BIC{\vdash\wn_{\e_1}A_1,\wn_{\e_2}A_2,\dotsc,\wn_{\e_n}A_n,\A}
  \end{prooftree}
by (\ref{axsubco}), we have $\eb_j\leq\eb'_j$ ($1\leq j\leq k$), and we can use the induction hypothesis to build:
  \begin{prooftree}
  \AIC{IH}\noLine
  \UIC{\vdash\wn_{\eb'_1}A_1,\dotsc,\wn_{\eb'_k}A_1,\wn_{\e'_2}A_2,\dotsc,\wn_{\e'_n}A_n,\A}
  \AIC{\copred_k(\eb_1,\dotsc,\eb_k,\e_1)}
  \ZIC{\e_1\leq\e'_1}
  \RL{\ref{axsubco}}
  \BIC{\copred_k(\eb'_1,\dotsc,\eb'_k,\e'_1)}
  \BIC{\vdash\wn_{\e'_1}A_1,\wn_{\e'_2}A_2,\dotsc,\wn_{\e'_n}A_n,\A}
  \end{prooftree}
\item If the last rule of $\pi$ is a ($\dg$) rule:
\begin{prooftree}
  \AIC{\vdash\wn_{\eb}\wn_{\eb'}A_1,\wn_{\e_2}A_2,\dotsc,\wn_{\e_n}A_n,\A}
  \AIC{\dgpred(\eb,\eb',\e_1)}
  \RL{\dg}
  \BIC{\vdash\wn_{\e_1}A_1,\wn_{\e_2}A_2,\dotsc,\wn_{\e_n}A_n,\A}
\end{prooftree}
by (\ref{axsubdg}), we have $\eb\leq\eb''$, and we can use the induction hypothesis to build:
\begin{prooftree}
  \AIC{IH}\noLine
  \UIC{\vdash\wn_{\eb''}\wn_{\eb'}A_1,\wn_{\e'_2}A_2,\dotsc,\wn_{\e'_n}A_n,\A}
  \AIC{\dgpred(\eb,\eb',\e_1)}
  \ZIC{\e_1\leq\e'_1}
  \RL{\ref{axsubdg}}
  \BIC{\dgpred(\eb'',\eb',\e'_1)}
  \RL{\dg}
  \BIC{\vdash\wn_{\e'_1}A_1,\wn_{\e'_2}A_2,\dotsc,\wn_{\e'_n}A_n,\A}
  \end{prooftree}
\item If the last rule of $\pi$ is an ordered promotion:
\begin{prooftree}
  \AIC{\vdash C,A_1,\dotsc,A_n,B_1,\dotsc,B_m}
  \AIC{1\leq i\leq n}
  \noLine
  \UIC{\e\leq\e_i}
  \AIC{1\leq j\leq m}
  \noLine
  \UIC{\e\leq\eb_j}
  \AIC{\prompred_{n+m}(\e)}
  \RL{\promleq}
  \QIC{\vdash\oc_\e C,\wn_{\e_1}A_1,\dotsc,\wn_{\e_n}A_n,\wn_{\eb_1}B_1,\dotsc,\wn_{\eb_m}B_m}
\end{prooftree}
we can build:
\begin{equation*}
\hspace{-1.5cm}
  \AIC{\vdash C,A_1,\dotsc,A_n,B_1,\dotsc,B_m}
  \AIC{1\leq i\leq n}\noLine
  \UIC{\e\leq\e_i}
  \ZIC{\e_i\leq\e'_i}
  \RL{\ref{axsubtrans}}
  \BIC{\e\leq\e'_i}
  \AIC{1\leq j\leq m}
  \noLine
  \UIC{\e\leq\eb_j}
  \AIC{\prompred_{n+m}(\e)}
  \RL{\promleq}
  \QIC{\vdash\oc_\e C,\wn_{\e'_1}A_1,\dotsc,\wn_{\e'_n}A_n,\wn_{\eb_1}B_1,\dotsc,\wn_{\eb_m}B_m}
  \DP\qedhere
\end{equation*}
\end{itemize}
\end{proof}

\begin{prop}[Subsumption Elimination]
\label{subselim}
  If an instance of \superLL{} satisfies the subsumption axioms (Table~\ref{tabsubsum}), then any proof can be replaced by a proof of the same sequent which uses neither the functorial promotion rule nor the subsumption rule, but the ordered promotion instead.
\end{prop}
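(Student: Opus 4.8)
The plan is to follow the same two–stage pattern as the proof of Proposition~\ref{propgir}, using Lemma~\ref{lemsubadmiss} in place of Lemma~\ref{lemdgadmiss}. \textbf{Stage 1: turning functorial promotions into ordered promotions.} First I would rewrite every ($\prom$) rule occurring in the given proof as an instance of ($\promleq$). This step is immediate and uses only reflexivity: a functorial promotion of width $n$ on a signature $\e$ --- passing from $\vdash A,A_1,\dotsc,A_n$ (with $\prompred_n(\e)$) to $\vdash\oc_\e A,\wn_\e A_1,\dotsc,\wn_\e A_n$ --- is literally the instance of ($\promleq$) in which each side formula keeps the signature $\e$, the side conditions $\e\leq\e$ being discharged by axiom~(\ref{axsubrefl}). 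After this rewriting the proof lives entirely in the system obtained from \superLL{} by replacing ($\prom$) with ($\promleq$) --- which is exactly the system to which Lemma~\ref{lemsubadmiss}, and crucially its conclusion, apply --- and it no longer contains any functorial promotion. The only rules still to be removed are the subsumption rules, that is the instances of ($\co{}$) with $k=1$.

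\textbf{Stage 2: removing subsumptions one by one.} Then I would argue by induction on the number $N$ of subsumption rules occurring in the proof. If $N=0$ we are done. If $N>0$, pick any such rule $r$: it has a premise $\vdash\wn_{\e_1}A_1,\A$ derived by a subproof $\pi_0$, conclusion $\vdash\wn_{\e_2}A_1,\A$, and side condition $\copred_1(\e_1,\e_2)$, i.e.\ $\e_1\leq\e_2$. Since $\pi_0$ is itself a proof in the system with ordered promotion, Lemma~\ref{lemsubadmiss} applied to $\pi_0$ with $n=1$ and the datum $\e_1\leq\e_2$ yields a proof $\pi_0'$ of the very same sequent $\vdash\wn_{\e_2}A_1,\A$ that uses neither functorial promotion nor subsumption. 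Replacing in the proof the subtree consisting of $\pi_0$ followed by $r$ by the proof $\pi_0'$ produces a proof of the original end sequent which still contains no functorial promotion (Stage~1 removed all of them, and neither $\pi_0'$ nor the untouched part of the proof reintroduces any) and strictly fewer subsumption rules ($r$ has disappeared, any subsumption that sat inside $\pi_0$ has disappeared too, and no new one has been created). The induction hypothesis concludes.

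\textbf{Where the difficulty lies.} As in Proposition~\ref{propgir}, this outer argument is pure bookkeeping: all the genuine work --- commuting a subsumption rule upward past a ($\de$), a ($\co{}$), a ($\dg$) or an ordered promotion, which is where the commutation axioms~(\ref{axsubco}) and~(\ref{axsubdg}), transitivity~(\ref{axsubtrans}) and the closure property~(\ref{axsubde}) are used --- has already been carried out inside Lemma~\ref{lemsubadmiss}. Two points deserve emphasis. First, Lemma~\ref{lemsubadmiss} speaks about the system in which functorial promotion has been replaced by ordered promotion, which is why Stage~1 must be performed globally, on the whole proof, before Stage~2 begins. Second, none of the transformations involved (the local rewriting of Stage~1, nor the rewriting performed inside the proof of Lemma~\ref{lemsubadmiss}) ever introduces a ($\cut$) rule, so the statement specialises to cut-free proofs: a cut-free proof is turned into a cut-free proof using the ordered promotion and no subsumption.
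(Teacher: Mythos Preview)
Your proposal is correct and follows essentially the same two-stage approach as the paper's proof: first rewrite every functorial promotion as an ordered promotion via reflexivity~(\ref{axsubrefl}), then eliminate the remaining subsumption rules by induction on their number using Lemma~\ref{lemsubadmiss}. Your write-up is more detailed (in particular the observation that cut-freeness is preserved also appears in the paper, just after the proposition), but the argument is the same.
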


\begin{proof}
The first step is to transform any functorial promotion rule into the associated ordered promotion:
\begin{equation*}
  \AIC{\vdash A,A_1,\dotsc,A_n}
  \AIC{\prompred_n(\e)}
  \RL{\prom}
  \BIC{\vdash\oc_\e A,\wn_\e A_1,\dotsc,\wn_\e A_n}
  \DP
\qquad\transfo\qquad
  \AIC{\vdash A,A_1,\dotsc,A_n}
  \RL{\ref{axsubrefl}}
  \ZIC{\e\leq\e}
  \AIC{\prompred_n(\e)}
  \RL{\promleq}
  \TIC{\vdash\oc_\e A,\wn_\e A_1,\dotsc,\wn_\e A_n}
  \DP
\end{equation*}
We conclude by induction on the number of subsumption rules in the proof, by applying Lemma~\ref{lemsubadmiss}.
\end{proof}

Again if the starting proof is cut-free then the obtained one as well.

\section{Sub-Systems}\label{secsubsyst}

Since \superLL{} depends on various parameters, it covers many possible systems through the choice of instances.

In the previous sections, we have seen some (mostly independent) sets of axioms which allow us to do proof manipulations leading to alternative rules for the system.
These proof transformations are the key tool to show how particular instances of \superLL{} are equivalent to known systems from the literature.

We now focus on specific choices of \sset, \depred, \copred, \dgpred{} and \prompred{} which give back known systems from Section~\ref{secmiscll}.
In each case we provide the appropriate values of the parameters to get the desired system. Moreover we check in each case that the cut-elimination axioms (Table~\ref{tabceax}) and the expansion axiom (\ref{axea}) are satisfied.

\begin{rem}\label{remco}
If $\e$ is an exponential signature, requiring $\copred_0(\e)$ and $\copred_2(\e, \e, \e)$ to be \true{}, or for all $k$ in $\Nat$, $\copred_k(\e,\dotsc,\e,\e)=\true$, leads to equivalent systems since the $k$-ary ($\co{}$) rule becomes derivable:
\begin{equation*}
\begin{array}{c@{\qquad\;}c@{\qquad\;}c}
  k=0 & k=1 & k\geq 2 \\[1ex]
  \AIC{\vdash\A}
  \AIC{\copred_0(\e)}
  \RL{\co{}}
  \BIC{\vdash\wn_\e A,\A}
  \DP
  &
  \AIC{\vdash\wn_{\e}A,\A}
  \DP
  &
  \AIC{\vdash\overbrace{\wn_{\e}A,\dotsc,\wn_{\e}A}^k,\A}
  \AIC{\copred_2(\e,\e,\e)}
  \RL{\co{}}
  \BIC{\vdash\overbrace{\wn_{\e}A,\dotsc,\wn_{\e}A}^{k-1},\A}
  \noLine
  \UIC{\vdots}
  \noLine
  \UIC{\vdash\wn_\e A,\wn_\e A,\A}
  \AIC{\copred_2(\e,\e,\e)}
  \RL{\co{}}
  \BIC{\vdash\wn_\e A,\A}
  \DP
\end{array}
\end{equation*}
\end{rem}

\subsection{\LL{} with Functorial Promotion}\label{secllfuncinst}

The definition of \superLL{} is based on a functorial version of the promotion rule.
It is thus not very surprising that the easiest system to find back inside \superLL{} is the ``functorial promotion + digging'' presentation of \LL.
We consider the instance given by (when describing instances, we list the values which make the predicates $\true$, all other combinations are $\false$):
\begin{equation*}
  \begin{array}{|c|c|c|c|c|}
    \hline
    \sset & \depred & \copred & \dgpred & \prompred \\
    \hline
    \{\sgte\} & \depred(\sgte) & \copred_0(\sgte)\qquad \copred_2(\sgte, \sgte, \sgte) & \dgpred(\sgte,\sgte,\sgte) & \forall n\in\Nat,\;\prompred_n(\sgte) \\
    \hline
  \end{array}
\end{equation*}

\begin{lem}[\LL{} with functorial promotion and digging]\label{lemllfuncinst}
This instance $\superLL(\sset,\depred,\copred,\dgpred,\prompred)$ is \LL{} based on digging and functorial promotion, and it satisfies the cut-elimination axioms and the expansion axiom.
\end{lem}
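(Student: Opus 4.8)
The statement has two parts: (i) this instance is equivalent to \LL{} with functorial promotion and digging, and (ii) the instance satisfies the cut-elimination axioms of Table~\ref{tabceax} and the expansion axiom (\ref{axea}). I would handle (ii) first, since it is almost immediate, and then (i).

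\emph{Verification of the axioms.} Since $\sset=\{\sgte\}$ is a singleton, the remarks following Table~\ref{tabceax} already tell us that axioms (\ref{axceco}) and (\ref{axcedg}) hold automatically. For axiom (\ref{axceprom}) I would use the other remark: here $\{n\in\Nat \mid \prompred_n(\sgte)\}=\Nat$ is the full set, which trivially satisfies (\ref{axceprom}) (indeed the remark explicitly notes the full set works). For the expansion axiom (\ref{axea}) we need $\prompred_1(\sgte)$, and since $\prompred_n(\sgte)$ holds for all $n$, in particular $n=1$, this is immediate.

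\emph{Equivalence with \LL{} (functorial promotion $+$ digging).} By the definition of equivalence I must show each rule of one system is admissible in the other. Matching the \superLL{} exponential rules against the instance: the ($\prom$) rule with $\prompred_n(\sgte)$ always true and $\sset$ a singleton is exactly the ($\ocf$) rule $\vdash A,\A \Rightarrow \vdash \oc A,\wn\A$; the ($\de$) rule with $\depred(\sgte)$ is exactly ($\llde$); the ($\dg$) rule with $\dgpred(\sgte,\sgte,\sgte)$ is exactly ($\lldg$), i.e.\ $\vdash\wn\wn A,\A \Rightarrow \vdash\wn A,\A$. The only discrepancy is on contraction/weakening: the target system has a single binary ($\llco$) rule and a ($\llwk$) rule, whereas our instance has the $k$-ary ($\co{}$) family with $\copred_0(\sgte)$ and $\copred_2(\sgte,\sgte,\sgte)$ true (and, implicitly, $\copred_1$ false and $\copred_k$ false for $k\geq 3$). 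Here I would invoke Remark~\ref{remco}: requiring $\copred_0(\sgte)$ and $\copred_2(\sgte,\sgte,\sgte)$ makes the $k$-ary ($\co{}$) rule derivable for every $k$, so in our instance the ($\co{}$) rules are interderivable with exactly $\{$($\llwk$), ($\llco$)$\}$ (the $k=0$ case is weakening, $k=2$ is contraction, $k=1$ is the identity and needs nothing, higher $k$ follow by iterated contraction). One should also note, for the converse direction, that the single instance of ($\co{}$) present for $k=1$ in \superLL{} is trivial since $\copred_1$ is subdiagonal here (in fact empty), so no rule of \LL{} is missing. Putting these correspondences together gives mutual admissibility of all rules, hence equivalence.

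\emph{Expected main obstacle.} There is no deep obstacle; the only point requiring care is the bookkeeping around the ($\co{}$)-versus-($\llwk$)/($\llco$) mismatch, i.e.\ correctly citing Remark~\ref{remco} in both directions and observing that the degenerate $k=1$ case of ($\co{}$) with $\copred_1$ empty (subdiagonal) contributes nothing, so nothing in \LL{} is lost. Everything else is a direct syntactic identification of rule schemes, and the axiom checks reduce to the singleton-$\sset$ and full-$\prompred_n$ remarks already stated after Table~\ref{tabceax}.
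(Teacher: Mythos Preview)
Your proposal is correct and matches the paper's approach: a direct rule-by-rule identification for $(\ocf)$, $(\lldg)$, $(\llde)$, with the $k=0$ and $k=2$ cases of $(\co{})$ giving $(\llwk)$ and $(\llco)$, together with the remark that full $\prompred$ (and singleton $\sset$) makes the axioms immediate. One simplification: since in this instance only $\copred_0$ and $\copred_2$ hold, the $(\co{})$ rules present are \emph{literally} $(\llwk)$ and $(\llco)$, so there is no need to invoke Remark~\ref{remco} or worry about $k=1$ or $k\geq 3$ at all.
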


\begin{proof}
Concerning ($\ocf$), ($\lldg$) and ($\llde$), we have a one-to-one correspondence between the rules of the two systems.
Concerning contraction, the ($\llwk$) and ($\llco$) are exactly cases $k=0$ and $k=2$ of the ($\co{}$) rule.
%
As already remarked in Section~\ref{secceax}, the cut-elimination axioms are satisfied, and the same for the expansion axiom, since $\prompred$ is full.
\end{proof}

\subsection{\ELL}

We consider the instance of \superLL{} given by:
\begin{equation*}
  \begin{array}{|c|c|c|c|c|}
    \hline
    \sset & \depred & \copred & \dgpred & \prompred \\
    \hline
    \{\sgte\} & & \copred_0(\sgte) \qquad \copred_2(\sgte, \sgte, \sgte) & & \forall n\in\Nat,\;\prompred_n(\sgte) \\
    \hline
  \end{array}
\end{equation*}
$\depred$ and $\dgpred$ are the empty (always $\false$) relations. $(\prompred_n)_{n\in\Nat}$ are full.

\begin{lem}
This instance of \superLL{} satisfies the cut-elimination axioms and the expansion axiom. Using notations ${\oc}:={\oc_\sgte}$ and ${\wn}:={\wn_\sgte}$ this instance of \superLL{} is exactly \ELL.
\end{lem}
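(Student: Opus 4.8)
The plan is to verify the two assertions of the lemma separately: first that this instance satisfies the cut-elimination axioms of Table~\ref{tabceax} and the expansion axiom (\ref{axea}), and then that it coincides with \ELL.

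First I would check the axioms. Since $\sset=\{\sgte\}$ is a singleton, the remarks following Table~\ref{tabceax} already tell us that axioms (\ref{axceco}) and (\ref{axcedg}) hold automatically (in fact $\dgpred$ is empty here, which makes (\ref{axcedg}) vacuously true, and the case $k=0$ of (\ref{axceco}) is always valid while $k=2$ reduces to an implication whose conclusion $\prompred_n(\sgte)$ holds by fullness). For axiom (\ref{axceprom}), since $(\prompred_n)_{n\in\Nat}$ is full, both hypotheses and the conclusion are \true{}, so it holds; equivalently, by the last remark after Table~\ref{tabceax}, fullness of $(\prompred_n)_{n}$ implies all cut-elimination axioms. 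The expansion axiom (\ref{axea}) requires $\prompred_1(\sgte)$, which again holds by fullness.

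Next I would identify the rules. The ($\prom$) rule of this instance, instantiated at $\e=\sgte$ with $\prompred_n(\sgte)=\true$ for all $n$, gives exactly $\vdash A,A_1,\dotsc,A_n$ yields $\vdash\oc A,\wn A_1,\dotsc,\wn A_n$, which is the functorial promotion rule ($\ocf$) of \ELL{} (writing $\oc:=\oc_\sgte$, $\wn:=\wn_\sgte$). Since $\depred$ and $\dgpred$ are empty, the ($\de$) and ($\dg$) rules are unavailable, matching \ELL's absence of dereliction and digging. For the ($\co{}$) rule: only $\copred_0(\sgte)$ and $\copred_2(\sgte,\sgte,\sgte)$ hold, so the only available instances are $k=0$, which is exactly weakening ($\llwk$), and $k=2$, which is exactly contraction ($\llco$). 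By Remark~\ref{remco}, having $\copred_0(\sgte)$ and $\copred_2(\sgte,\sgte,\sgte)$ true makes the $k$-ary ($\co{}$) rule derivable for every $k$, so adding those instances does not change the provable sequents; conversely the $k=0$ and $k=2$ cases are literally $\llwk$ and $\llco$. Hence the two systems have the same rules up to the notational identification, and in particular the same provable sequents, so this instance of \superLL{} is \ELL.

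I do not expect any real obstacle here: both parts are essentially bookkeeping, relying on the singleton-\sset{} remarks after Table~\ref{tabceax} and on Remark~\ref{remco} for the $\copred$ side. The only point requiring a little care is making explicit that the extra $k\geq 2$, $k\neq 2$ instances of ($\co{}$) that are formally present in \superLL{} but absent from \ELL{} do not enlarge the set of provable sequents, which is precisely what Remark~\ref{remco} provides (the $k$-ary contraction being derived by iterating the binary one, and weakening being the $k=0$ case).
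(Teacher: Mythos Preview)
Your proposal is correct and follows essentially the same approach as the paper's proof: check the axioms (trivially, via fullness of $(\prompred_n)_n$) and then match the exponential rules one by one.

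There is, however, a small confusion in your last paragraph that is worth clearing up. You worry about ``the extra $k\geq 2$, $k\neq 2$ instances of ($\co{}$) that are formally present in \superLL{} but absent from \ELL{}'' and invoke Remark~\ref{remco} to neutralise them. But in \emph{this} instance $\copred_k$ is \false{} for every $k\notin\{0,2\}$, so those ($\co{}$) rules are simply not available here: the rule sets of the instance and of \ELL{} coincide on the nose, and no appeal to Remark~\ref{remco} is needed. The remark is relevant in the opposite situation (e.g.\ the \LL{} instance of Section~\ref{secllfuncinst} versus the one with all $\copred_k$ full), not here.
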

\begin{proof}
The rules of this instance are exactly the rules of \ELL:
\begin{align*}
\AIC{\vdash\A}
\RL{\llwk}
\UIC{\vdash\wn A,\A}
\DP
&\qquad \leftrightsquigarrow \qquad
\AIC{\vdash\A}
\ZIC{\copred_0(\sgte)}
\RL{\co{}}
\BIC{\vdash\A,\wn_\sgte A}
\DP
\\[2ex]
\AIC{\vdash\wn A,\wn A,\A}
\RL{\llco}
\UIC{\vdash\wn A,\A}
\DP
&\qquad \leftrightsquigarrow \qquad
\AIC{\vdash\wn_\sgte A,\wn_\sgte A,\A}
\ZIC{\copred_2(\sgte,\sgte,\sgte)}
\RL{\co{}}
\BIC{\vdash\wn_\sgte A,\A}
\DP
\\[2ex]
\AIC{\vdash A, A_1, \dotsc, A_n}
\RL{\ocf}
\UIC{\vdash\oc A,\wn A_1, \dotsc, \wn A_n}
\DP
&\qquad \leftrightsquigarrow \qquad
\AIC{\vdash A, A_1, \dotsc, A_n}
\ZIC{\prompred_n(\sgte)}
\RL{\prompred}
\BIC{\vdash\oc_\sgte A,\wn_\sgte A_1, \dotsc, \wn_\sgte A_n}
\DP
\qedhere
\end{align*}
\end{proof}

\subsection{\SLL}

We consider the instance of \superLL{} given by:
\begin{equation*}
  \begin{array}{|c|c|c|c|c|}
    \hline
    \sset & \depred & \copred & \dgpred & \prompred \\
    \hline
    \{\sgte,\epg\} & \depred(\epg) & \forall k\in\Nat,\;\copred_k(\epg,\dotsc,\epg,\sgte) & & \forall n\in\Nat,\;\prompred_n(\sgte) \qquad \forall n\in\Nat,\;\prompred_n(\epg) \\
    \hline
  \end{array}
\end{equation*}
This is a rather non-standard presentation of \SLL.
However using notations $\oc A := \oc_\sgte A$, $\wn A := \wn_\sgte A$, $\flat A := \wn_\epg A$ and $\sharp A := \oc_\epg A$ draws a bridge with presentations inspired by the proof-net syntax, as we can find in the literature~\cite{stratll}.

\begin{lem}[Properties]\label{lemsllprop}
This instance of \superLL{} satisfies the cut-elimination axioms and the expansion axiom.
\end{lem}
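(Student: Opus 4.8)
The plan is to verify each of the three cut-elimination axioms of Table~\ref{tabceax} together with the expansion axiom~(\ref{axea}) for this instance, using the fact that the non-trivial structure only involves the signature $\sgte$ (via $\copred_k$) and the signature $\epg$ (via $\depred$ and the promotions), while $\dgpred$ is everywhere false.

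\textbf{Expansion axiom~(\ref{axea}).} We must check $\prompred_1(\e)$ for every $\e\in\sset=\{\sgte,\epg\}$. Since $(\prompred_n)_{n\in\Nat}$ is full on both $\sgte$ and $\epg$ by definition of the instance, in particular $\prompred_1(\sgte)$ and $\prompred_1(\epg)$ hold. So~(\ref{axea}) is immediate.

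\textbf{Axiom~(\ref{axceprom}).} For each $\e$ the set $\{n\mid\prompred_n(\e)\}$ is all of $\Nat$ (both for $\sgte$ and $\epg$), so the implication $m>0\rightarrow\prompred_m(\e)\rightarrow\prompred_n(\e)\rightarrow\prompred_{m+n-1}(\e)$ holds trivially, as already noted in Section~\ref{secceax} (the full set $\Nat$ satisfies~(\ref{axceprom})).

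\textbf{Axiom~(\ref{axcedg}).} Since $\dgpred$ is the empty relation, the hypothesis $\dgpred(\e_1,\e_2,\e)$ is never satisfied, so the implication is vacuously true.

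\textbf{Axiom~(\ref{axceco}).} This is the only axiom requiring a genuine (if short) case analysis, and I expect it to be the main point of the proof. We assume $\copred_k(\e_1,\dotsc,\e_k,\e)$ and $\prompred_n(\e)$ and must derive $\prompred_n(\e_1)\wedge\dotsb\wedge\prompred_n(\e_k)$. By the definition of the instance, the only tuples making $\copred_k$ true are $\copred_k(\epg,\dotsc,\epg,\sgte)$, so necessarily $\e=\sgte$ and $\e_1=\dotsb=\e_k=\epg$. Then we need $\prompred_n(\epg)$ for the conclusion; but $(\prompred_n)_{n\in\Nat}$ is full on $\epg$, so this holds (and the case $k=0$ is vacuous anyway). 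This completes the verification; note that although $\prompred$ is not full on all of $\sset$ in a way that would make the generic remark of Section~\ref{secceax} directly applicable, it is full on exactly those signatures ($\sgte$ and $\epg$) that can occur, which is what makes the argument go through.
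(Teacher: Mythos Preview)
Your verification is correct, and in fact the paper does not spell out a proof for this lemma at all: it is stated without proof, presumably because it is immediate from the observation in Section~\ref{secceax} that if the relations $(\prompred_n)_{n\in\Nat}$ are full then all the cut-elimination axioms hold.

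One small confusion in your final remark: you write that ``$\prompred$ is not full on all of $\sset$ in a way that would make the generic remark of Section~\ref{secceax} directly applicable''. But $\sset=\{\sgte,\epg\}$ and the instance explicitly stipulates $\prompred_n(\sgte)$ and $\prompred_n(\epg)$ for every $n$, so $\prompred$ \emph{is} full on all of $\sset$, and the generic remark applies directly. This makes your case analysis for~(\ref{axceco}) superfluous (though still correct): since $\prompred_n(\e_i)$ holds for any $\e_i\in\sset$ whatsoever, there is nothing to check. Your more detailed argument would be needed in an instance where $\prompred$ is only partially defined, but that is not the situation here.
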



\begin{lem}[\SLL{} to \superLL]\label{lemsllsuper}
If we translate ${\oc}\mapsto{\oc_\sgte}$ and ${\wn}\mapsto{\wn_\sgte}$, we can translate
  proofs (resp.\ cut-free proofs) of \SLL{} into
  proofs (resp.\ cut-free proofs) of $\superLL(\sset, \depred, \copred, \dgpred, \prompred)$.
\end{lem}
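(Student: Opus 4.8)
The plan is to translate \SLL{} proofs into this instance of \superLL{} by a straightforward induction on the structure of the \SLL{} derivation, translating each \SLL{} rule into a rule (or short derivation) of \superLL. The non-exponential rules are shared between the two systems, so they translate to themselves. It thus remains to handle the two exponential rules of \SLL: functorial promotion $(\ocf)$ and the multiplexing rules $(\mpx{k})$ for $k\in\Nat$.

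First I would treat $(\ocf)$. An \SLL{} instance $\vdash A,\A \reduc \vdash\oc A,\wn\A$ becomes, under the translation ${\oc}\mapsto{\oc_\sgte}$ and ${\wn}\mapsto{\wn_\sgte}$, exactly an instance of the $(\prom)$ rule of \superLL{} with signature $\sgte$: if $\size{\wn\A}=n$ we invoke $\prompred_n(\sgte)$, which holds since $(\prompred_n)_{n\in\Nat}$ is full at $\sgte$. Next I would treat $(\mpx{k})$, which sends $\vdash \overbrace{A,\dotsc,A}^k,\A$ to $\vdash\wn A,\A$, i.e.\ to $\vdash\wn_\sgte A,\A$ after translation. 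Here we use the auxiliary signature $\epg$: from the (translated) premise $\vdash\overbrace{A,\dotsc,A}^k,\A$ we apply the $(\de)$ rule $k$ times (using $\depred(\epg)$) to obtain $\vdash\overbrace{\wn_\epg A,\dotsc,\wn_\epg A}^k,\A$, and then a single $(\co{})$ rule with witness $\copred_k(\epg,\dotsc,\epg,\sgte)$ (which holds for every $k$ by definition of the instance) yields $\vdash\wn_\sgte A,\A$. For $k=0$ this is just a $(\co{})$ rule with $\copred_0(\epg,\sgte)$, matching weakening.

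The cut rule translates to the cut rule, so proofs go to proofs; and since none of the translation steps above introduces a cut, cut-free \SLL{} proofs are sent to cut-free \superLL{} proofs, as required by the parenthetical claim. The cut-elimination axioms and the expansion axiom for this instance are already provided by Lemma~\ref{lemsllprop}, so nothing extra is needed there.

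I do not expect a serious obstacle: the only subtlety is bookkeeping the correspondence between $\wn_\epg$ (the ``$\flat$'' modality, produced by dereliction on $\epg$) and $\wn_\sgte$ (the genuine $\wn$ of \SLL, produced by multiplexing via $\copred_k$), and checking that the exponent $k$ in $(\mpx{k})$ matches the arity $k$ of the $(\copred_k)$ rule invoked. The mild asymmetry — that \SLL{} has no rule directly corresponding to a $(\de)$ on $\sgte$ or to a $(\co{})$ on $\epg$ — is harmless for this direction of the translation, since we are only asked to embed \SLL{} into \superLL, not to prove the converse; the reverse simulation (translating this instance of \superLL{} back to \SLL) would be where one has to argue that the extra $\wn_\epg$/$\oc_\epg$ formulas can always be eliminated, but that is a separate statement.
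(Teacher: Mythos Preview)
Your proposal is correct and follows essentially the same approach as the paper: translate $(\ocf)$ directly via $(\prom)$ with $\prompred_n(\sgte)$, and simulate $(\mpx{k})$ by $k$ applications of $(\de)$ with $\depred(\epg)$ followed by one $(\co{})$ with $\copred_k(\epg,\dotsc,\epg,\sgte)$. One minor notational slip: $\copred_0$ has arity one, so the $k=0$ case uses $\copred_0(\sgte)$, not ``$\copred_0(\epg,\sgte)$''; and the reference to Lemma~\ref{lemsllprop} is unnecessary here, since the translation does not rely on the cut-elimination or expansion axioms.
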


\begin{proof}
  \begin{align*}
    \AIC{\vdash\overbrace{A,\dotsc,A}^k,\A}
    \RL{\mpx{k}}
    \UIC{\vdash\wn A,\A}
    \DP
    &\qquad\transfo\qquad
    \AIC{\vdash\overbrace{A,\dotsc,A}^k,\A}
    \ZIC{\depred(\epg)}
    \RL{\de}
    \doubleLine
    \BIC{\vdash\wn_\epg A,\dotsc,\wn_\epg A,\A}
    \ZIC{\copred_k(\epg,\dotsc,\epg,\sgte)}
    \RL{\co{}}
    \BIC{\vdash\wn_\sgte A,\A}
    \DP
    \\[2ex]
    \AIC{\vdash A,A_1,\dotsc,A_n}
    \RL{\ocf}
    \UIC{\vdash\oc A,\wn A_1,\dotsc,\wn A_n}
    \DP
    &\qquad\transfo\qquad
    \AIC{\vdash A,A_1,\dotsc,A_n}
    \ZIC{\prompred_n(\sgte)}
    \RL{\prom}
    \BIC{\vdash\oc_\sgte A,\wn_\sgte A_1,\dotsc,\wn_\sgte A_n}
    \DP
      \qedhere
  \end{align*}
\end{proof}

\begin{lem}[\superLL{} to \SLL]\label{lemsupersll}
If we translate ${\oc_\sgte}\mapsto{\oc}$, ${\wn_\sgte}\mapsto{\wn}$, ${\oc_\epg}\mapsto\emptyset$, and ${\wn_\epg}\mapsto\emptyset$ (\ie{} we erase all $\oc_\epg$ and $\wn_\epg$), we can translate proofs (resp.\ cut-free proofs) of $\superLL(\sset, \depred, \copred, \dgpred, \prompred)$ into proofs (resp.\ cut-free proofs) of \SLL.
\end{lem}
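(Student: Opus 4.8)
The plan is to define a translation by induction on the structure of \superLL{} proofs, erasing every occurrence of $\oc_\epg$ and $\wn_\epg$ (that is, removing those formulas from sequents entirely) and keeping $\oc_\sgte$, $\wn_\sgte$ as $\oc$, $\wn$, while showing that each rule of \superLL{} with the given parameters maps to a derivable rule of \SLL{} (and a \cut-free source proof maps to a \cut-free target proof). The non-exponential rules and the ($\ax$) rule are handled trivially: if the active formula is not exponential the rule is copied verbatim, and an ($\ax$) on $\oc_\epg A\orth,\wn_\epg A$ becomes the empty sequent $\vdash$, which is problematic, so I first need to observe that in this instance $\prompred_1(\epg)$ holds and hence axiom expansion (Proposition~\ref{propae}) lets us assume w.l.o.g.\ that ($\ax$) rules only appear on atoms $X,X\orth$, which survive the translation unchanged.

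Next I would go through the exponential rules of \superLL{} one by one, distinguishing whether the principal signature is $\sgte$ or $\epg$. For the promotion rule ($\prom$) on $\oc_\sgte$: since $\prompred_n(\sgte)$ is full, the context formulas are all $\wn_\sgte A_i$ or $\wn_\epg A_i$; after erasing the $\wn_\epg$ ones we get exactly an ($\ocf$) rule of \SLL. For ($\prom$) on $\oc_\epg$: the whole conclusion $\oc_\epg A,\wn_\epg A_1,\dotsc,\wn_\epg A_n$ is erased, so the translated proof is just the translation of the premise with the premise's erased context — I need to check that erasure of $A$ from $\vdash A,A_1,\dots,A_n$ still yields a provable \SLL{} sequent, which it does since the premise's translation has conclusion $\vdash\dots$ obtained by the induction hypothesis (here $A$, a $\{\sgte,\epg\}$-formula, is not itself erased unless it is $\oc_\epg$-headed, but even then it disappears consistently). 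For ($\de$): the only signature with $\depred$ true is $\epg$, and $\wn_\epg A$ is erased, so ($\de$) translates to nothing — the translated premise proof already has the right conclusion. For ($\co{}$): the only true instances are $\copred_k(\epg,\dots,\epg,\sgte)$, so the rule goes from $\vdash\wn_\epg A,\dots,\wn_\epg A,\A$ to $\vdash\wn_\sgte A,\A$; after erasure the premise becomes $\vdash\A'$ (the $\wn_\epg A$'s gone) and we must conclude $\vdash\wn A,\A'$ — but wait, this is exactly where I must be careful. Combined with the ($\de$) that must have produced those $\wn_\epg A$'s somewhere above, the pattern $\de$ then $\co{}$ on $k$ copies corresponds precisely to the multiplexing rule $\mpx{k}$ of \SLL. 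So the right move is to translate the \emph{combined} effect: I need to show that the erasure of a ($\co{}$) on $\sgte$ with $k$ premise-copies of $\wn_\epg A$ is absorbed by tracking that each such $\wn_\epg A$ was introduced by a ($\de$), and the net rule realised in \SLL{} is $\mpx{k}$.

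The cleanest way to make this precise, and the step I expect to be the main obstacle, is to strengthen the induction hypothesis: rather than translating sequent-to-sequent, I would prove that for every \superLL{} proof of $\vdash\Gamma$, the erasure $\vdash\Gamma^-$ is provable in \SLL, where a $\wn_\epg$-headed formula simply does not appear in $\Gamma^-$ and a $\oc_\epg$-headed formula also does not — but this forces me to check that no rule ever needs a $\wn_\epg$ or $\oc_\epg$ formula to be present in the target. The subtle point is the ($\co{}$)-on-$\sgte$ case, where I must reconstruct an $\mpx{k}$ from the premise $\vdash A,\dots,A,\A^-$ — but the premise of $\co{}$ has $\wn_\epg A$'s, not $A$'s, and their erasure removes them entirely, giving $\vdash\A^-$ with \emph{no} copies of $A$; so I cannot directly fire $\mpx{k}$. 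The resolution: I should not erase $\wn_\epg A$ to nothing but rather translate $\wn_\epg A$ as the bare formula $A$ (a ``floating'' occurrence) so that ($\de$) on $\epg$ becomes the identity on sequents $\vdash A,\dots$ and ($\co{}$) on $\sgte$ with $\copred_k(\epg,\dots,\epg,\sgte)$ becomes exactly $\mpx{k}$ turning $\vdash\overbrace{A,\dots,A}^k,\A$ into $\vdash\wn A,\A$; meanwhile $\oc_\epg A$ is genuinely erased and ($\prom$) on $\epg$ must then be checked to be sound, i.e.\ $\vdash A,A_1,\dots,A_n$ translated (with the $A_i$ floating) already proves the erased conclusion after dropping $A$, which holds because in \SLL{} one can always weaken-in... no, \SLL{} has no weakening, so this last point requires that ($\prom$) on $\epg$ in this instance never occurs with a nonempty context that matters, or that we prove the premise's translation already has the needed conclusion up to the floating occurrences. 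I would resolve this by checking directly that in the given instance the only way a $\oc_\epg$ appears is via ($\prom$) on $\epg$ whose premise, after translation, yields precisely $\vdash A, (A_i)_i^-$, and erasing $\oc_\epg A$ means we want $\vdash (A_i)_i^-$; since we cannot weaken, I conclude that such a promotion with a nonempty residual context is simply never usable to prove an $\epg$-free sequent, so by an easy argument these $\oc_\epg$-promotions can be pushed out or the proof restricted, completing the translation. Filling in this last soundness check for ($\prom$) on $\epg$ in the absence of \SLL-weakening is the delicate part; everything else is routine rule-by-rule bookkeeping.
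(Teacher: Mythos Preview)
Your core difficulty stems from misreading the translation. The statement says to erase the \emph{connectives} $\oc_\epg$ and $\wn_\epg$, not the formulas they head: $\wn_\epg A$ translates to (the translation of) $A$, and so does $\oc_\epg A$. You eventually arrive at this reading for $\wn_\epg$ midway through, but you keep treating $\oc_\epg A$ as removed from the sequent entirely, and this asymmetry is what manufactures the ``delicate part'' at the end.

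With the correct reading, every exponential rule becomes a one-line check. The ($\prom$) rule on $\epg$ has premise $\vdash A,A_1,\dotsc,A_n$ and conclusion $\vdash\oc_\epg A,\wn_\epg A_1,\dotsc,\wn_\epg A_n$; both translate to the \emph{same} sequent, so the rule simply vanishes --- no weakening is needed, and no argument about ``pushing out'' $\epg$-promotions is required. The ($\de$) rule on $\epg$ likewise vanishes. The ($\co{}$) rule with $\copred_k(\epg,\dotsc,\epg,\sgte)$ has its premise translate to $\vdash A,\dotsc,A,\A$ and its conclusion to $\vdash\wn A,\A$, which is exactly ($\mpx{k}$). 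The ($\prom$) rule on $\sgte$ (whose context formulas are all $\wn_\sgte$-headed, not a mix, by the shape of the functorial rule) becomes ($\ocf$). The axiom $\vdash\oc_\epg A\orth,\wn_\epg A$ translates to $\vdash A\orth,A$, still an axiom, so no detour through axiom expansion is needed either. The paper's proof is precisely this four-case rule-by-rule table.
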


\begin{proof}
\allowdisplaybreaks
  \begin{align*}
    \AIC{\vdash A,\A}
    \ZIC{\depred(\epg)}
    \RL{\de}
    \BIC{\vdash\wn_\epg A,\A}
    \DP
    &\qquad\transfo\qquad
    \AIC{\vdash A,\A}
    \DP
    \\[2ex]
    \AIC{\vdash\overbrace{\wn_\epg A,\dotsc,\wn_\epg A}^k,\A}
    \ZIC{\copred_k(\epg,\dotsc,\epg,\sgte)}
    \RL{\co{}}
    \BIC{\vdash\wn_\sgte A,\A}
    \DP
    &\qquad\transfo\qquad
    \AIC{\vdash\overbrace{A,\dotsc,A}^k,\A}
    \RL{\mpx{k}}
    \UIC{\vdash\wn A,\A}
    \DP
    \\[2ex]
    \AIC{\vdash A,A_1,\dotsc,A_n}
    \ZIC{\prompred_n(\sgte)}
    \RL{\prom}
    \BIC{\vdash\oc_\sgte A,\wn_\sgte A_1,\dotsc,\wn_\sgte A_n}
    \DP
    &\qquad\transfo\qquad
    \AIC{\vdash A,A_1,\dotsc,A_n}
    \RL{\ocf}
    \UIC{\vdash\oc A,\wn A_1,\dotsc,\wn A_n}
    \DP
    \\[2ex]
    \AIC{\vdash A,A_1,\dotsc,A_n}
    \ZIC{\prompred_n(\epg)}
    \RL{\prom}
    \BIC{\vdash\oc_\epg A,\wn_\epg A_1,\dotsc,\wn_\epg A_n}
    \DP
    &\qquad\transfo\qquad
    \AIC{\vdash A,A_1,\dotsc,A_n}
    \DP
      \qedhere
  \end{align*}
\end{proof}

\begin{prop}[Cut Elimination for \SLL]
  Cut elimination holds for \SLL.
\end{prop}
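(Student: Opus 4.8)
The plan is to combine the three lemmas just established with the generic cut-elimination theorem for \superLL{} (Theorem~\ref{thmce}). First I would invoke Lemma~\ref{lemsllprop}, which tells us that the specific instance $\superLL(\sset,\depred,\copred,\dgpred,\prompred)$ chosen for \SLL{} (with $\sset=\{\sgte,\epg\}$) satisfies the cut-elimination axioms of Table~\ref{tabceax}. Consequently, Theorem~\ref{thmce} applies to this instance: every proof in it can be transformed into a cut-free proof of the same sequent.

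Next I would transport a cut elimination along the two translations. Given a proof $\pi$ of a sequent $\vdash\A$ in \SLL, apply Lemma~\ref{lemsllsuper} to obtain a proof $\pi'$ of the translated sequent in the \superLL{} instance. By the paragraph above, $\pi'$ can be turned into a cut-free proof $\pi''$ of the same sequent. Then apply Lemma~\ref{lemsupersll} to $\pi''$: since that translation sends cut-free proofs to cut-free proofs, we get a cut-free proof of the image sequent in \SLL. The only thing to check is that the round trip of translations on sequents is the identity on \SLL{} sequents: the \SLL{}-to-\superLL{} translation replaces $\oc,\wn$ by $\oc_\sgte,\wn_\sgte$ and introduces no $\oc_\epg,\wn_\epg$, while the \superLL{}-to-\SLL{} translation replaces $\oc_\sgte,\wn_\sgte$ by $\oc,\wn$ and erases $\oc_\epg,\wn_\epg$ (of which there are none in the image). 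So the composite is the identity on formulas and hence on sequents, and we recover a cut-free \SLL{} proof of the original $\vdash\A$.

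I do not expect a genuine obstacle here, since all the work has been done in the lemmas; the proof is a short diagram chase. The only mild subtlety worth spelling out is the bookkeeping of which translation preserves cut-freeness in which direction: Lemma~\ref{lemsupersll} is the one we rely on for that (its erasure of $\epg$-modalities clearly cannot create cuts, and each displayed transformation maps a rule instance to a rule instance or to nothing), whereas Lemma~\ref{lemsllsuper} is used only on the (possibly cut-containing) input. I would therefore present the argument as: take any \SLL{} proof, translate up via Lemma~\ref{lemsllsuper}, eliminate cuts via Lemma~\ref{lemsllprop} and Theorem~\ref{thmce}, translate back down via Lemma~\ref{lemsupersll}, and observe the sequent is unchanged.

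\begin{proof}
By Lemma~\ref{lemsllprop}, the chosen instance $\superLL(\sset, \depred, \copred, \dgpred, \prompred)$ satisfies the cut-elimination axioms of Table~\ref{tabceax}, so Theorem~\ref{thmce} gives cut elimination for this instance. Now let $\pi$ be a proof of $\vdash\A$ in \SLL. By Lemma~\ref{lemsllsuper}, we obtain a proof of the translated sequent in $\superLL(\sset, \depred, \copred, \dgpred, \prompred)$; by cut elimination for this instance, it can be turned into a cut-free proof of the same sequent. Applying Lemma~\ref{lemsupersll} to this cut-free proof yields a cut-free proof in \SLL{} of the image sequent. Finally, the \SLL{}-to-\superLL{} translation maps $\oc,\wn$ to $\oc_\sgte,\wn_\sgte$ and creates no $\oc_\epg,\wn_\epg$, while the \superLL{}-to-\SLL{} translation maps $\oc_\sgte,\wn_\sgte$ back to $\oc,\wn$ and erases $\oc_\epg,\wn_\epg$; hence the composite acts as the identity on \SLL{} formulas, and the sequent we obtain is again $\vdash\A$. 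Thus every provable sequent of \SLL{} has a cut-free proof.
\end{proof}
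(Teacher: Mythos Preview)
Your proposal is correct and follows exactly the same approach as the paper: translate an \SLL{} proof into the \superLL{} instance via Lemma~\ref{lemsllsuper}, apply Theorem~\ref{thmce} (justified by Lemma~\ref{lemsllprop}), and translate back via Lemma~\ref{lemsupersll}. The only difference is that you spell out explicitly why the round trip of translations is the identity on \SLL{} sequents, a point the paper leaves implicit.
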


\begin{proof}
  We apply Lemma~\ref{lemsllsuper}, Theorem~\ref{thmce} (using Lemma~\ref{lemsllprop}), and Lemma~\ref{lemsupersll}.
\end{proof}

\subsection{\LL}

We consider the following instance:
\begin{equation*}
  \begin{array}{|c|c|c|c|c|}
    \hline
    \sset & \depred & \copred & \dgpred & \prompred \\
    \hline
    \{\sgte\} & \depred(\sgte) & \forall k\in\Nat,\;\copred_k(\sgte,\dotsc,\sgte,\sgte) & \dgpred(\sgte,\sgte,\sgte) & \forall n\in\Nat,\;\prompred_n(\sgte) \\
    \hline
  \end{array}
\end{equation*}
All relations are the full (\ie{} always true) relations.
This makes axioms easy to check (in particular the cut-elimination axioms and the expansion axiom).
As mentioned in Remark~\ref{remco}, we could also restrict to $\copred_k(\sgte,\dotsc,\sgte,\sgte)=\true$ only for $k=0$ and $k=2$, it would not modify the expressiveness of the system. However the Girardization axioms of Table~\ref{tabgir} would not hold.

\begin{lem}[\LL]
The instance, with $\sset=\{\sgte\}$ and full relations, satisfies the Girardization axioms and the induced instance of \superLL{} is equivalent to \LL.
\end{lem}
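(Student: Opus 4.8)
The plan is to establish two things: first that the displayed instance satisfies the Girardization axioms of Table~\ref{tabgir}, and second that the resulting system (after applying Proposition~\ref{propgir} to replace functorial promotion plus digging by Girard's promotion) coincides with ordinary \LL{} up to the erasure of all signature annotations. Since $\sset=\{\sgte\}$ is a singleton and every relation is the full relation, checking the Girardization axioms is essentially bookkeeping: (\ref{axgirax}) holds because $\prompred_1(\sgte)$ is true; (\ref{axgirde}) holds because $\copred_1(\sgte,\sgte)$ is true; for (\ref{axgirco}) and (\ref{axgirdg}) the required witnesses are just $\sgte$ everywhere and the relevant predicates are all true by fullness; and (\ref{axgirdedg}) holds by taking $\e'=\sgte$, using $\depred(\sgte)$ and $\dgpred(\sgte,\sgte,\sgte)$. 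I would simply note that all these follow immediately from the fact that $\sset$ is a singleton and all relations are full, referring back to the intuition given in Remark~\ref{remgirax}.

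Next I would identify the rules. In the singleton case, Girard's promotion $\promg$ specializes — since $\dgpred(\sgte,\eb_i,\e_i)$ forces $\e_i=\sgte$ (there is nothing else) and $\prompred_n(\sgte)$ always holds — exactly to Girard's original $\oc$-rule $\vdash A,\wn\A \;\Rightarrow\; \vdash\oc A,\wn\A$. The ($\de$) rule becomes dereliction ($\llde$) since $\depred(\sgte)$ holds. The ($\co{}$) rule for all $k$ becomes, via Remark~\ref{remco} (here the full relations certainly include $\copred_0(\sgte)$ and $\copred_2(\sgte,\sgte,\sgte)$), exactly weakening ($\llwk$) and contraction ($\llco$), the higher arities being derivable. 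So after Girardization the exponential rules are precisely $\oc$, $\llde$, $\llwk$, $\llco$, which are the exponential rules of \LL{} from Table~\ref{tabllrules}; the non-exponential rules are shared by definition.

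For the equivalence of the provable sequents I would argue in both directions. In one direction, the forgetful map of the Remark in Section~\ref{secsuperll} sends any \superLL{}-proof to an \LL{}-proof of the erased sequent, and since $\sset=\{\sgte\}$ the erasure is a bijection on formulas, so every sequent provable in the instance is provable in \LL. Conversely, every \LL{} rule instance is directly an instance of the corresponding rule of the displayed \superLL{} instance ($\de$ for $\llde$, $\co{}$ with $k=0,2$ for $\llwk,\llco$, and the digging-free Girard's promotion $\promg$ — or equivalently $\prom$ plus $\dg$ — for $\oc$), so every \LL{}-provable sequent is provable in the instance; and Proposition~\ref{propgir} guarantees the two presentations of the instance (functorial-plus-digging versus Girard) prove the same sequents, cut-free provability being preserved as noted after its proof. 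Hence the systems are equivalent.

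The main obstacle, such as it is, is not a mathematical difficulty but making the correspondence of rules fully precise: one has to be a little careful that Girard's promotion $\promg$ in the singleton case really does reduce to the $\oc$-rule with exactly the \LL{} context shape $\wn\A$ (all side premises $\dgpred(\sgte,\eb_i,\e_i)$ and $\prompred_n(\sgte)$ being vacuously satisfiable), and that the higher-arity ($\co{}$) rules are genuinely derivable rather than merely admissible so that cut-free proofs translate back and forth — which is exactly what Remark~\ref{remco} provides. Everything else is routine.
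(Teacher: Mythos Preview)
Your proposal is correct and follows the paper's approach: check the Girardization axioms (trivial since all relations are full), apply Proposition~\ref{propgir} to pass to Girard's promotion, and identify the resulting rules with those of \LL{} via Remark~\ref{remco}; for the converse the paper gives exactly the $\prom$-plus-$\dg$ simulation of ($\oc$) that you mention. Your additional forgetful-map observation for the \superLL-to-\LL{} direction is a valid shortcut the paper does not take, but it is inessential once Girardization is in hand.
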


\begin{proof}
From $\superLL(\sset, \depred, \copred, \dgpred, \prompred)$ to \LL,
since relations are full, the axioms are easily satisfied and we can apply Proposition~\ref{propgir}.
We conclude as in Remark~\ref{remco} for the contraction rules.

From \LL{} to $\superLL(\sset, \depred, \copred, \dgpred, \prompred)$, we use:
\begin{equation*}
\AIC{\vdash A, \wn A_1, \dotsc, \wn A_n}
\RL{\oc}
\UIC{\vdash \oc A, \wn A_1, \dotsc, \wn A_n}
\DP
\qquad\transfo\qquad
\AIC{\vdash A, \wn_\sgte A_1, \dotsc, \wn_\sgte A_n}
\ZIC{\prompred_n(\sgte)}
\RL{\prom}
\BIC{\vdash \oc_\sgte A, \wn_\sgte\wn_\sgte A_1, \dotsc, \wn_\sgte\wn_\sgte A_n}
\ZIC{\dgpred(\sgte,\sgte,\sgte)}
\doubleLine
\RL{\dg}
\BIC{\vdash \oc_\sgte A, \wn_\sgte A_1, \dotsc, \wn_\sgte A_n}
\DP
\qedhere
\end{equation*}
\end{proof}

\subsection{\LLL}

We consider the instance of \superLL{} given by:
\begin{equation*}
  \begin{array}{|c|c|c|c|c|}
    \hline
    \sset & \depred & \copred & \dgpred & \prompred \\
    \hline
    \{\sgte,\epg\}  & & \copred_0(\sgte) & & \prompred_1(\sgte) \\
    & & \copred_1(\sgte,\sgte) \quad \copred_1(\epg,\epg) \quad \copred_1(\epg,\sgte) & & \forall n\in\Nat,\;\prompred_n(\epg) \\
    & & \copred_2(\sgte, \sgte, \sgte) & & \\
    \hline
  \end{array}
\end{equation*}
A key point is $\copred_1(\sgte,\epg)=\false$.

\begin{lem}[Properties]\label{lemlllprop}
This instance of \superLL{} satisfies the cut-elimination axioms, the expansion axiom and the subsumption axioms.
\end{lem}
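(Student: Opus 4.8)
The plan is to verify each of the three cut-elimination axioms of Table~\ref{tabceax}, then the expansion axiom~(\ref{axea}), then the six subsumption axioms of Table~\ref{tabsubsum}, by direct inspection of the parameter values, using the convention that any predicate instance not explicitly listed is \false. Throughout, recall $\sset=\{\sgte,\epg\}$, that $\depred$ and $\dgpred$ are empty, and that $\copred_1(\_,\_)$ (the relation we write $\e\leq\e'$) holds exactly for the three pairs $(\sgte,\sgte)$, $(\epg,\epg)$, $(\epg,\sgte)$; in particular $\sgte\leq\epg$ is \false.

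First, for the cut-elimination axioms: axiom~(\ref{axcedg}) is vacuous since $\dgpred$ is empty. For axiom~(\ref{axceco}), the case $k=0$ is always valid, and for $k>0$ we must check each case where $\copred_k(\e_1,\dotsc,\e_k,\e)$ holds that $\prompred_n(\e)$ implies $\prompred_n(\e_i)$ for every $i$. The only instances with $k>0$ are $\copred_1(\sgte,\sgte)$, $\copred_1(\epg,\epg)$, $\copred_1(\epg,\sgte)$ and $\copred_2(\sgte,\sgte,\sgte)$. In all of them the conclusion signature $\e$ is $\sgte$ or $\epg$, and the premise signatures are either equal to it or are $\epg$ with $\e=\sgte$; the only nontrivial point is $\copred_1(\epg,\sgte)$, where we need $\prompred_n(\sgte)\to\prompred_n(\epg)$: since $\prompred_n(\epg)$ is full this holds. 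For axiom~(\ref{axceprom}), for each signature we inspect the set $S_\e=\{n\mid\prompred_n(\e)\}$. We have $S_\epg=\Nat$, which satisfies the closure property, and $S_\sgte=\{1\}$, which also satisfies it (the conclusion $\prompred_{m+n-1}(\sgte)$ forces $m+n-1=1$, and the hypotheses $m>0$, $\prompred_m(\sgte)$, $\prompred_n(\sgte)$ force $m=n=1$, so $m+n-1=1$). The expansion axiom~(\ref{axea}) requires $\prompred_1(\e)$ for every $\e$: this is $\prompred_1(\sgte)$, listed, and $\prompred_1(\epg)$, which follows from $(\prompred_n(\epg))_n$ being full.

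Finally the subsumption axioms, with $\e\leq\e':=\copred_1(\e,\e')$. Reflexivity~(\ref{axsubrefl}): $\sgte\leq\sgte$ and $\epg\leq\epg$ are both listed. Transitivity~(\ref{axsubtrans}): the only composable pairs among $\{(\sgte,\sgte),(\epg,\epg),(\epg,\sgte)\}$ yield $(\epg,\sgte)$ or an already-present pair, so it holds. Axiom~(\ref{axsubde}) is vacuous since $\depred$ is empty. Axiom~(\ref{axsubdg}) is vacuous since $\dgpred$ is empty. Axiom~(\ref{axsubco}): for $k=1$ it is trivial; for $k\neq 1$ the only relevant instances are $\copred_0(\sgte)$ and $\copred_2(\sgte,\sgte,\sgte)$, and in each case the only $\e_2$ with $\sgte\leq\e_2$ is $\e_2=\sgte$ itself, so we may take $\eb'_j:=\eb_j$ and the required $\copred_k(\eb'_1,\dotsc,\eb'_k,\e_2)$ is exactly the hypothesis. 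Thus all subsumption axioms hold. I do not expect a genuine obstacle here; the only place demanding a moment's care is the interaction $\copred_1(\epg,\sgte)$ in axiom~(\ref{axceco}) and in transitivity, and the fact that $S_\sgte=\{1\}$ does satisfy~(\ref{axceprom}) only because $1$ is neither $0$ nor $2$ — it is precisely the degenerate singleton $\{1\}$ mentioned after Table~\ref{tabceax}.
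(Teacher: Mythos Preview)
Your proof is correct and follows essentially the same approach as the paper's own proof, just with more detail: the paper simply states that the cut-elimination axioms ``come easily'' and then checks the subsumption axioms with the same case analysis you give (emptiness of $\depred$ and $\dgpred$ for (\ref{axsubde}) and (\ref{axsubdg}), order-relation structure of $\copred_1$ for (\ref{axsubrefl}) and (\ref{axsubtrans}), and the observation that any non-trivial $\copred_k$ instance forces all signatures to be $\sgte$ for (\ref{axsubco})). Your explicit treatment of (\ref{axceprom}) via $S_\sgte=\{1\}$ and of the $\copred_1(\epg,\sgte)$ case in (\ref{axceco}) spells out exactly what the paper leaves implicit.
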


\begin{proof}
The cut-elimination axioms come easily.
Axiom~(\ref{axsubax}) is immediate.
Axioms~(\ref{axsubrefl}) and~(\ref{axsubtrans}) are satisfied since $\copred_1(\_,\_)$ is an order relation.
Axioms~(\ref{axsubde}) and~(\ref{axsubdg}) are satisfied because \depred{} and \dgpred{} are empty.
Axiom~(\ref{axsubco}) is satisfied since $\copred_k(\e_1,\dotsc,\e_k,\e)=\true$ entails $\e_1=\dotsb=\e_k=\e=\sgte$ or $k=1$ (in which case (\ref{axsubco}) is trivial).
\end{proof}

\begin{lem}[\LLL{} to \superLL]\label{lemlllsuper}
If we translate ${\oc}\mapsto{\oc_\sgte}, {\wn}\mapsto{\wn_\sgte}, {\pg} \mapsto {\oc_{\epg}} $ and $ {\copg} \mapsto {\wn_{\epg}} $, we can translate proofs (resp.\ cut-free proofs) of \LLL{} into proofs (resp.\ cut-free proofs) of $\superLL(\sset, \depred, \copred, \dgpred, \prompred)$.
\end{lem}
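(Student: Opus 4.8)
The plan is to proceed by induction on the structure of an \LLL{} proof, translating each rule of \LLL{} into a derivation of the corresponding sequent in $\superLL(\sset,\depred,\copred,\dgpred,\prompred)$ under the translation ${\oc}\mapsto{\oc_\sgte}$, ${\wn}\mapsto{\wn_\sgte}$, ${\pg}\mapsto{\oc_\epg}$, ${\copg}\mapsto{\wn_\epg}$. The non-exponential rules are shared between the two systems and translate trivially, so the work is entirely in the four exponential rules of \LLL{}: unary functorial promotion $(\ocu)$, $\pg$-promotion $(\pg)$, weakening $(\llwk)$ and contraction $(\llco)$ on $\wn$-formulas.

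First I would handle $(\ocu)$, which takes $\vdash A,B$ to $\vdash\oc A,\wn B$: this is exactly the $(\prom)$ rule of \superLL{} with $n=1$ and $\e=\sgte$, using $\prompred_1(\sgte)=\true$. Next, the weakening rule $(\llwk)$ introducing $\wn A$ is the case $k=0$ of $(\co{})$ at $\sgte$, available since $\copred_0(\sgte)=\true$; and $(\llco)$, contracting two $\wn A$ into one, is the case $k=2$ at $\sgte$, available since $\copred_2(\sgte,\sgte,\sgte)=\true$. The delicate case is the $\pg$-promotion rule
\begin{prooftree}
  \AIC{\vdash A,\A,\B}
  \RL{\pg}
  \UIC{\vdash \pg A,\copg\A,\wn\B}
\end{prooftree}
which under the translation must produce $\vdash\oc_\epg A,\wn_\epg\A,\wn_\sgte\B$ from $\vdash A,\A,\B$. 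Since $\prompred_n(\epg)=\true$ for all $n$, I would first apply $(\prom)$ at $\e=\epg$ with $n=\size{\A}+\size{\B}$ to get $\vdash\oc_\epg A,\wn_\epg\A,\wn_\epg\B$, and then turn each $\wn_\epg B_i$ into $\wn_\sgte B_i$ by a use of the subsumption rule, i.e.\ $(\co{})$ with $k=1$, which is legal because $\copred_1(\epg,\sgte)=\true$. This yields the desired sequent, and since no cut was introduced, cut-free proofs map to cut-free proofs.

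The main obstacle — or rather the point requiring care rather than real difficulty — is checking that these are the \emph{only} ways the exponential connectives of the target instance can appear, so that the image really lands inside proofs of the translated sequent and nothing in the target instance is ``missing'' on the source side; but here the direction is from \LLL{} into \superLL, so I only need each \LLL{} rule to be derivable, which the above establishes. I would also note in passing the asymmetry flagged before the lemma, namely $\copred_1(\sgte,\epg)=\false$: this is not needed for the present translation (it matters for the converse direction, ensuring no illegitimate promotion of $\sgte$-boxes into $\epg$-boxes), so I would simply record it and move on. The proof is then a short case analysis with one derivation per exponential rule of \LLL.
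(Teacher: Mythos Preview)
Your proposal is correct and follows essentially the same approach as the paper: a rule-by-rule translation where $(\ocu)$, $(\llwk)$, $(\llco)$ become the $(\prom)$ rule at $\sgte$ with $n=1$, $(\co{})$ with $k=0$, and $(\co{})$ with $k=2$ respectively, while the $\pg$-promotion is simulated by a $(\prom)$ at $\epg$ followed by repeated subsumption via $\copred_1(\epg,\sgte)$ on the $\wn\B$ part. The paper's proof is exactly this case analysis, presented as four derivation schemas.
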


\begin{proof}
\allowdisplaybreaks
  \begin{align*}
    \AIC{\vdash \A}
	\RL{\llwk}
	\UIC{\vdash \wn A, \A}
	\DP
	&\qquad \transfo \qquad
	\AIC{\vdash \A}
	\ZIC{\copred_0(\sgte)}
	\RL{\co{}}
	\BIC{\vdash\wn_\sgte A,\A}
	\DP
    \\[2ex]
    \AIC{\vdash \wn A, \wn A, \A}
	\RL{\llco}
	\UIC{\vdash \wn A, \A}
	\DP
	&\qquad \transfo \qquad
	\AIC{\vdash \wn_{\sgte} A, \wn_{\sgte} A, \A}
	\ZIC{\copred_2(\sgte,\sgte,\sgte)}
	\RL{\co{}}
	\BIC{\vdash \wn_{\sgte} A, \A}
	\DP
	\\[2ex]
    \AIC{\vdash A , B}
    \RL{\ocu}
    \UIC{\vdash \oc A , \wn B}
    \DP
    &\qquad\transfo\qquad
    \AIC{\vdash A, B}
    \ZIC{\prompred_1(\sgte)}
    \RL{\prom}
    \BIC{\vdash \oc_{\sgte} A , \wn_{\sgte} B}
    \DP
    \\[2ex]
    \AIC{\vdash A , A_1,\dotsc, A_n,B_1,\dotsc,B_m}
    \RL{\pg}
    \UIC{\vdash \pg A , \copg A_1,\dotsc,\copg A_n, \wn B_1, \dotsc, \wn B_m}
    \DP
    &\qquad\transfo\qquad\\
    & \quad\qquad
    \AIC{\vdash A,A_1,\dotsc,A_n,B_1,\dotsc,B_m}
    \ZIC{\prompred_{n+m}(\epg)}
    \BIC{\vdash \oc_{\epg} A , \wn_{\epg} A_1,\dotsc, \wn_{\epg} A_n, \wn_{\epg} B_1,\dotsc, \wn_{\epg} B_m}
    \ZIC{\copred_1(\epg, \sgte)}
    \doubleLine
    \RL{\co{}}
    \BIC{\vdash \oc_\epg A , \wn_\epg A_1,\dotsc, \wn_\epg A_n, \wn_\sgte B_1, \dotsc,\wn_\sgte B_m}
    \DP
  \end{align*}
      \qedhere
\end{proof}

\begin{lem}[\superLL{} to \LLL]\label{lemsuperlll}
If we translate ${\oc_\sgte}\mapsto{\oc}$, ${\wn_\sgte}\mapsto{\wn}$, ${\oc_\epg}\mapsto\pg$, and ${\wn_\epg}\mapsto\copg$, we can translate proofs (resp.\ cut-free proofs) of $\superLL(\sset, \depred, \copred, \dgpred, \prompred)$ into proofs (resp.\ cut-free proofs) of \LLL.
\end{lem}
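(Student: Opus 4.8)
The plan is to first normalise the given proof by subsumption elimination and then translate it rule by rule. By Lemma~\ref{lemlllprop} this instance satisfies the subsumption axioms of Table~\ref{tabsubsum}, so Proposition~\ref{subselim} applies: any proof of $\superLL(\sset,\depred,\copred,\dgpred,\prompred)$ can be replaced by a proof of the same sequent that uses neither the functorial promotion rule ($\prom$) nor the ($\co{}$) rule for $k=1$, but the ordered promotion rule ($\promleq$) instead, and a cut-free proof is turned into a cut-free proof. Since $\depred$ and $\dgpred$ are empty here, the ($\de$) and ($\dg$) rules never occur, so it is enough to translate proofs built from the non-exponential rules, ($\cut$), ($\co{}$) with $k\in\{0,2\}$, and ($\promleq$).

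The non-promotion rules are translated transparently: the non-exponential rules and ($\cut$) are kept unchanged, and the only available ($\co{}$) instances are $\copred_0(\sgte)$ and $\copred_2(\sgte,\sgte,\sgte)$, which, under ${\wn_\sgte}\mapsto{\wn}$, are exactly the ($\llwk$) and ($\llco$) rules of \LLL. For ($\promleq$) we must see which instances can actually occur in the normalised proof. A promotion introducing $\oc_\sgte$ needs some $\prompred_n(\sgte)$, hence $n=1$, and needs $\copred_1(\sgte,\e_1)$; since $\copred_1(\sgte,\epg)=\false$ (the key point of the instance), necessarily $\e_1=\sgte$, so the rule derives $\vdash\oc_\sgte A,\wn_\sgte A_1$ from $\vdash A,A_1$, which translates to the unary functorial promotion ($\ocu$) of \LLL. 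A promotion introducing $\oc_\epg$ uses some $\prompred_n(\epg)$, available for every $n$, together with $\copred_1(\epg,\e_i)$ for each context formula; these hold precisely when $\e_i\in\{\sgte,\epg\}$, so each context exponential is either $\copg$ or $\wn$. After grouping the $\oc_\epg$-context formulas together (implicit exchange), such a rule is exactly an instance of the \LLL{} rule ($\pg$), deriving $\vdash\pg A,\copg\A,\wn\B$ from $\vdash A,\A,\B$.

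The main point requiring care is this case analysis of the ordered promotion rule, and in particular the use of $\copred_1(\sgte,\epg)=\false$ to ensure that a promotion of $\oc_\sgte$ stays unary with a $\wn_\sgte$-context, so that it matches ($\ocu$) and does not produce a context shape with no counterpart in \LLL; note that subsumption elimination is precisely what inverts the translation of ($\pg$) used in Lemma~\ref{lemlllsuper}. The rest is a direct, rule-by-rule replacement; since subsumption elimination preserves cut-freeness and so does each replacement, cut-free proofs are mapped to cut-free proofs.
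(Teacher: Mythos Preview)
Your proof is correct and follows the same approach as the paper: apply Proposition~\ref{subselim} (via Lemma~\ref{lemlllprop}) to eliminate subsumption and functorial promotion in favour of the ordered promotion rule, then translate rule by rule. Your case analysis of ($\promleq$) is in fact more explicit than the paper's, which simply displays the two relevant instances of ($\promleq$) and refers to the \ELL{} case for the remaining $\wn$-rules; your observation that $\prompred_n(\sgte)$ forces $n=1$ and that $\copred_1(\sgte,\epg)=\false$ then forces the unique context formula to carry $\wn_\sgte$ is exactly the justification the paper leaves implicit.
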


\begin{proof}
To prove this result we use Proposition~\ref{subselim} with Lemma~\ref{lemlllprop}.
Then from a proof containing only the ordered promotion rule (and no subsumption rule), we can deduce our translation:
  \begin{align*}
    \AIC{\vdash A, B}
    \ZIC{\sgte \leq \sgte}
    \ZIC{\prompred_1(\sgte)}
    \RL{\promleq}
    \TIC{\vdash \oc_{\sgte} A, \wn_{\sgte} B}
    \DP
    &\qquad\transfo\qquad
    \AIC{\vdash A, B}
    \RL{\ocu}
    \UIC{\vdash \oc A, \wn B}
    \DP
    \\[2ex]
    \AIC{\vdash A, A_1, \dotsc, A_n,B_1,\dotsc,B_m}
    \ZIC{\epg\leq\epg}
    \ZIC{\epg\leq\sgte}
    \ZIC{\prompred_{n+m}(\epg)}
    \RL{\promleq}
    \QIC{\vdash\oc_\epg A,\wn_\epg A_1,\dotsc,\wn_\epg A_n, \wn_\sgte B_1, \dotsc, \wn_\sgte B_m}
    \DP
    &\qquad\transfo\qquad\\ & 
    \AIC{\vdash A, A_1, \dotsc, A_n,B_1,\dotsc,B_m}
    \RL{\pg}
    \UIC{\vdash \pg A, \copg A_1, \dotsc, \copg A_n, \wn B_1, \dotsc, \wn B_m}
    \DP
  \end{align*}
  For the other rules we can refer to \ELL.
\end{proof}

\subsection{Shifting Operators}

We consider the instance given by:
\begin{equation*}
  \begin{array}{|c|c|c|c|c|}
    \hline
    \sset & \depred & \copred & \dgpred & \prompred \\
    \hline
    \{\sgte,\epg\} & \depred(\sgte) & \copred_0(\sgte) \quad \copred_1(\sgte,\sgte) \quad \copred_2(\sgte,\sgte,\sgte) & \dgpred(\sgte,\sgte,\sgte) & \forall n\in\Nat,\;\prompred_n(\sgte) \\
    & \depred(\epg) & \;\copred_1(\epg,\epg)\qquad & \dgpred(\epg,\epg,\epg) & \forall n\in\Nat,\;\prompred_n(\epg) \\
    \hline
  \end{array}
\end{equation*}

\begin{lem}[\LL{} with shifting operators]
This instance is equivalent to \LL{} with shifting operators and satisfies the cut-elimination axioms, the expansion axiom and the Girardization axioms.
\end{lem}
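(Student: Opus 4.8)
The plan is to dispatch the four assertions separately: three of them (the cut-elimination, expansion and Girardization axioms) reduce to direct inspection of the finite, essentially diagonal parameter relations, and the equivalence with \LL{} with shifting operators is obtained from the Girardization machinery of Section~\ref{secgir}.

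First I would note that $\prompred$ is the full relation on $\sset=\{\sgte,\epg\}$ (both $\prompred_n(\sgte)$ and $\prompred_n(\epg)$ hold for every $n$), so the cut-elimination axioms of Table~\ref{tabceax} hold by the observation at the end of Section~\ref{secceax}, the expansion axiom~(\ref{axea}) is immediate, and (\ref{axgirax}) then follows as in Remark~\ref{remgirax}. The other Girardization axioms I would verify by a short case analysis based on the facts that $\dgpred$ is true only on the two triples $(\sgte,\sgte,\sgte)$ and $(\epg,\epg,\epg)$, that $\depred$ holds at $\sgte$ and at $\epg$, that every true instance of $\copred_k$ has all of its arguments equal, and that $\copred_1$ is reflexive on $\{\sgte,\epg\}$. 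Concretely, a hypothesis $\dgpred(\e_1,\e_2,\e)$ forces $\e_1=\e_2=\e$; for~(\ref{axgirde}) the conclusion $\copred_1(\e_2,\e)$ is then $\copred_1(\sgte,\sgte)$ or $\copred_1(\epg,\epg)$; for~(\ref{axgirco}) a true $\copred_k(\eb_1,\dotsc,\eb_k,\e_1)$ has $\eb_1=\dotsb=\eb_k=\e_1\in\{\sgte,\epg\}$, and taking $\eb'_i:=\eb_i$ works since $\dgpred(\e_1,\e_1,\e_1)$ and $\copred_k(\e_1,\dotsc,\e_1,\e_1)$ hold; for~(\ref{axgirdg}) the double digging hypothesis forces all signatures equal and one takes $\e'':=\e_2$; and~(\ref{axgirdedg}) is witnessed by $\e':=\e$ using $\depred(\e)$ and $\dgpred(\e,\e,\e)$. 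I do not expect any difficulty here: it is entirely mechanical.

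The substantive step is the equivalence. Both systems manipulate $\{\sgte,\epg\}$-formulas, with $\oc_\epg$ playing the role of the up-shift modality and $\wn_\epg$ that of the down-shift, so only the rules have to be matched. For the direction from \LL{} with shifting operators into this instance, I would simulate the two dereliction rules (($\llde$) and the $\wn_\epg$-dereliction) by ($\de$) with $\depred(\sgte)$, resp.\ $\depred(\epg)$; the rules ($\llwk$) and ($\llco$) by the cases $k=0$ and $k=2$ of ($\co{}$) with $\copred_0(\sgte)$, resp.\ $\copred_2(\sgte,\sgte,\sgte)$; and each of the two Girard-style promotions (for $\oc_\sgte$ and for $\oc_\epg$) by a functorial promotion ($\prom$) followed by the appropriate number of digging steps ($\dg$) --- using $\prompred_n(\sgte)$ and $\dgpred(\sgte,\sgte,\sgte)$, resp.\ $\prompred_n(\epg)$ and $\dgpred(\epg,\epg,\epg)$ --- exactly as done for ($\oc$) in the \LL{} case treated earlier in Section~\ref{secsubsyst}; all these transformations visibly preserve cut-freeness. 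For the converse direction, the Girardization axioms being satisfied I would invoke Proposition~\ref{propgir} to rewrite any proof of this instance, preserving cut-freeness, into one using Girard's promotion ($\promg$) instead of ($\prom$) and ($\dg$); in such a proof, ($\de$) is ($\llde$) or the $\wn_\epg$-dereliction according to the signature introduced, ($\co{}$) for $k=0$ is ($\llwk$) and for $k=2$ is ($\llco$) while the $k=1$ case is a subdiagonal, hence trivial, subsumption that can be erased, and ($\promg$) introducing $\oc_\sgte$ --- whose side conditions $\dgpred(\sgte,\eb_i,\e_i)$ force $\eb_i=\e_i=\sgte$ --- is literally the ($\oc$) rule, while ($\promg$) introducing $\oc_\epg$ forces $\eb_i=\e_i=\epg$ and is the corresponding up-shift promotion rule. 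Hence the two systems prove the same sequents (and likewise with cut-free proofs).

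The one point deserving a moment's care is checking that Girardization introduces no spurious $\epg$-rule: this is fine because the only $\copred_k$ fact mentioning $\epg$ is $\copred_1(\epg,\epg)$, which is trivial, so after Girardization and removal of trivial subsumptions the $\epg$-exponential rules reduce to promotion and dereliction, with no weakening or contraction --- matching exactly the exponential rules for the shift modalities.
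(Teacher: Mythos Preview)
Your proposal is correct and follows essentially the same approach as the paper: verify the axioms by exploiting that the two signatures do not interact (all the parameter relations are ``diagonal''), then use Proposition~\ref{propgir} to pass to Girard's promotion and match the rules of the two systems. The paper's proof is much terser---it compresses the axiom checks into the single remark that $\sgte$ and $\epg$ do not interact, and displays only the $\epg$-rule correspondences (relying implicitly on the \LL{} case of Section~\ref{secsubsyst} for $\sgte$)---but the content is the same as what you spell out.
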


\begin{proof}
Girardization axioms are satisfied because signatures \sgte{} and \epg{} do not interact.
We can apply Proposition~\ref{propgir}.
Then we consider the following correspondence:
\begin{align*}
\AIC{\vdash A, \A}
\ZIC{\depred(\epg)}
\RL{\depred}
\BIC{\vdash \wn_{\epg} A, \A}
\DP
&\qquad \leftrightsquigarrow \qquad
\AIC{\vdash A, \A}
\RL{\shneg}
\UIC{\vdash \shneg A, \A}
\DP
\\[2ex]
\AIC{\vdash A, \wn_{\epg} A_1, \dotsc, \wn_{\epg} A_n}
\ZIC{\dgpred(\epg, \epg, \epg)}
\ZIC{\prompred_n(\epg)}
\RL{\promg}
\TIC{\vdash \oc_{\epg} A, \wn_{\epg} A_1, \dotsc, \wn_{\epg} A_n}
\DP
&\qquad \leftrightsquigarrow \qquad
\AIC{\vdash A, \shneg A_1, \dotsc, \shneg A_n}
\RL{\shpos}
\UIC{\vdash \shpos A, \shneg A_1, \dotsc, \shneg A_n}
\DP
\qedhere
\end{align*}
\end{proof}

\subsection{\seLL}

An instance of \seLL{} is determined by: a pre-ordered set $(\sset,{\leqse})$, and two subsets $\sset_W$ and $\sset_C$ of \sset{} which are upward closed with respect to $\leqse$.
From these data, we can define an associated instance of \superLL{} built on the same set of exponential signatures by considering:
\begin{equation*}
  \begin{array}{|c|c|c|c|c|}
    \hline
    \sset & \depred & \copred & \dgpred & \prompred \\
    \hline
    \sset & \depred(\e) & \copred_0(\e)\text{ if }\e\in\sset_W & \dgpred(\e,\e',\e')\text{ if }\e\leqse\e' & \forall n\in\Nat,\;\prompred_n(\e) \\
    & & \copred_1(\e,\e) & & \\
    & & \copred_2(\e,\e,\e)\text{ if }\e\in\sset_C & & \\
    \hline
  \end{array}
\end{equation*}
All exponential signatures are universally quantified: $\depred(\e)$ above, for example, means $\forall\e\in\sset,\;\depred(\e)$.

\begin{lem}[Properties]\label{lemseLLprop}
$\superLL(\sset, \depred, \copred, \dgpred, \prompred)$ satisfies the cut-elimination axioms, the expansion axiom and the Girardization axioms.
\end{lem}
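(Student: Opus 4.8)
The plan is to verify the three families of axioms separately, relying on two structural features of this instance: first, the predicates $\prompred$ and $\depred$ are \emph{full} (always \true); second, by construction $\dgpred(\e_1,\e_2,\e)$ holds precisely when $\e_1\leqse\e_2$ and $\e=\e_2$, while $\copred_1$ is the diagonal relation, $\copred_0$ picks out $\sset_W$, $\copred_2$ picks out the diagonal over $\sset_C$, and $\copred_k$ is empty for $k\geq 3$. I would unfold each axiom using exactly this bookkeeping.

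Since $(\prompred_n)_{n\in\Nat}$ is full, the cut-elimination axioms~(\ref{axceprom}), (\ref{axceco}), (\ref{axcedg}) hold immediately, as already observed after Table~\ref{tabceax}, and the expansion axiom~(\ref{axea}) is exactly the statement that $\prompred_1$ is full. The Girardization axiom~(\ref{axgirax}) is likewise immediate from fullness of $\prompred$, and for~(\ref{axgirdedg}) I would take $\e':=\e$: then $\depred(\e)$ holds by fullness and $\dgpred(\e,\e,\e)$ holds by reflexivity of $\leqse$. For~(\ref{axgirde}), unfolding $\dgpred(\e_1,\e_2,\e)$ forces $\e=\e_2$, whence $\copred_1(\e_2,\e)=\copred_1(\e_2,\e_2)$ holds by diagonality (the hypothesis $\depred(\e_1)$ is not even needed). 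For~(\ref{axgirdg}), from $\dgpred(\e_1,\e_2,\e')$ and $\dgpred(\e',\e_3,\e)$ one reads off $\e'=\e_2$, $\e=\e_3$, $\e_1\leqse\e_2$, $\e_2\leqse\e_3$; then $\e'':=\e_3$ works, since $\dgpred(\e_2,\e_3,\e_3)$ holds and $\dgpred(\e_1,\e_3,\e_3)$ holds by transitivity of $\leqse$.

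The only axiom requiring a genuine case split is~(\ref{axgirco}), and it is vacuous for $k\geq 3$ since $\copred_k$ is then empty; I would treat $k\in\{0,1,2\}$, writing $\dgpred(\e_1,\e_2,\e)$ so that $\e=\e_2$ and $\e_1\leqse\e_2$. For $k=0$, $\copred_0(\e_1)$ means $\e_1\in\sset_W$, and since $\e_1\leqse\e_2=\e$ and $\sset_W$ is upward closed we get $\e\in\sset_W$, i.e.\ $\copred_0(\e)$, with no witnesses to produce. For $k=1$, $\copred_1(\eb_1,\e_1)$ forces $\eb_1=\e_1$; choosing $\eb'_1:=\e_2$ gives both $\dgpred(\eb_1,\e_2,\e_2)$ (as $\eb_1=\e_1\leqse\e_2$) and $\copred_1(\e_2,\e)=\copred_1(\e_2,\e_2)$. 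For $k=2$, $\copred_2(\eb_1,\eb_2,\e_1)$ forces $\eb_1=\eb_2=\e_1$ with $\e_1\in\sset_C$; choosing $\eb'_1=\eb'_2:=\e_2$ gives $\dgpred(\eb_i,\e_2,\e_2)$ for $i=1,2$ and $\copred_2(\e_2,\e_2,\e)=\copred_2(\e_2,\e_2,\e_2)$, which holds because $\sset_C$ is upward closed.

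I expect no real obstacle: the only places where the \seLL{} hypotheses are used essentially are the two appeals to upward closure of $\sset_W$ and $\sset_C$ in the $k=0$ and $k=2$ cases of~(\ref{axgirco}), and the pre-order laws (reflexivity, transitivity) of $\leqse$ for~(\ref{axgirdg}) and~(\ref{axgirdedg}); all remaining obligations are forced by $\prompred$ and $\depred$ being full. The main care to take is simply tracking the precise shape of $\dgpred$ and of $\copred_k$ for this instance when unfolding each hypothesis.
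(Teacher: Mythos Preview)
Your proposal is correct and follows exactly the paper's approach: the paper's proof is a terse sketch that isolates the characterization $\dgpred(\e_1,\e_2,\e_3)\iff \e_1\leqse\e_2\wedge\e_2=\e_3$ as the key property and then only spells out~(\ref{axgirco}) (with the same witnesses $\eb'_i:=\e_2$ and the same appeals to upward closure of $\sset_W$, $\sset_C$), while you have simply filled in the remaining Girardization axioms explicitly. The choices you make for~(\ref{axgirde}), (\ref{axgirdg}), (\ref{axgirdedg}) are the natural ones and match what the paper leaves implicit.
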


\begin{proof}
%
Concerning the Girardization axioms, the key property is the definition of $\dgpred$: $\dgpred(\e_1,\e_2,\e_3)\iff \e_1\leqse\e_2 \wedge \e_2=\e_3$.
Let us focus on~(\ref{axgirco}).
For $k=1$, we choose $\eb'_1:=\e_2$.
For $k=0$ and $k=2$, we rely on the upward closure of $\sset_W$ and $\sset_C$ (by taking $\eb'_1,\eb'_2:=\e_2$ for $k=2$).
\end{proof}

\begin{lem}[\seLL{} to \superLL]\label{lemseLLsuper}
We can translate proofs (resp.\ cut-free proofs) of $\seLL(\sset, {\leqse}, \sset_W, \sset_C)$ into proofs (resp.\ cut-free proofs) of $\superLL(\sset, \depred, \copred, \dgpred, \prompred)$.
\end{lem}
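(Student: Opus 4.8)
The plan is to argue by induction on the structure of the given proof $\pi$ in $\seLL(\sset, {\leqse}, \sset_W, \sset_C)$, translating each rule of $\seLL$ into a (possibly composite) derivation of the corresponding sequent in the associated instance of $\superLL$. Since the two systems share all non-exponential rules and work on the same $\sset$-formulas, those cases are trivial: a non-exponential rule translates to itself. It remains to handle the four exponential rules of $\seLL$: the parameterized promotion $\oc_\e$, dereliction $\llde[\e]$, weakening $\llwk[e]$, and contraction $\llco[\e]$.

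Dereliction, weakening, and contraction are the easy exponential cases. The $\seLL$ rule $(\llde[\e])$ is exactly the $\superLL$ rule $(\de)$ since $\depred(\e)$ holds for every $\e\in\sset$ in this instance. The rule $(\llwk[e])$, which has side condition $\e\in\sset_W$, translates to the $(\co{})$ rule for $k=0$, whose side condition $\copred_0(\e)$ is precisely $\e\in\sset_W$ by definition of the associated instance. Similarly $(\llco[\e])$ with side condition $\e\in\sset_C$ translates to the $(\co{})$ rule for $k=2$, whose side condition $\copred_2(\e,\e,\e)$ is $\e\in\sset_C$. In each of these three cases the premise sequent is unchanged by the translation, so the induction hypothesis applies directly to the subproof(s).

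The interesting case is the $\seLL$ promotion rule $(\oc_\e)$, which from $\vdash A,\wn_{\e_1}B_1,\dotsc,\wn_{\e_n}B_n$ with $\e\leqse\e_i$ for all $i$ concludes $\vdash \oc_\e A,\wn_{\e_1}B_1,\dotsc,\wn_{\e_n}B_n$. Here the $\superLL$ functorial promotion $(\prom)$ is too rigid — it would force all the $\wn$-subscripts to become $\e$. The idea is to first apply $(\prom)$ with $\prompred_n(\e)$ (which holds, the $\prompred$ predicates being full) to obtain $\vdash\oc_\e A,\wn_\e B_1,\dotsc,\wn_\e B_n$, and then for each $i$ use the $(\dg)$ rule with $\dgpred(\e,\e_i,\e_i)$ — valid precisely because $\e\leqse\e_i$, by the definition of $\dgpred$ in the associated instance — to turn $\wn_\e B_i$ into $\wn_{\e_i} B_i$. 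Wait: $(\dg)$ goes from $\wn_{\e'}\wn_{\e''}A$ to $\wn_\e A$ when $\dgpred(\e',\e'',\e)$, so it does not directly reindex a single $\wn$. The cleaner route, already available in the excerpt, is to invoke Girardization: by Lemma~\ref{lemseLLprop} this instance satisfies the Girardization axioms, so by Proposition~\ref{propgir} we may as well work with Girard's promotion $(\promg)$ instead of $(\prom)$ plus $(\dg)$. Then the $\seLL$ rule $(\oc_\e)$ translates to an instance of $(\promg)$: from $\vdash A,\wn_{\e_1}B_1,\dotsc,\wn_{\e_n}B_n$, using $\prompred_n(\e)$ and, for each $i$, $\dgpred(\e,\e_i,\e_i)$ (which holds iff $\e\leqse\e_i\wedge\e_i=\e_i$, i.e.\ iff $\e\leqse\e_i$), we derive $\vdash\oc_\e A,\wn_{\e_1}B_1,\dotsc,\wn_{\e_n}B_n$ directly.

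The main obstacle is getting the promotion case right: one must choose the correct presentation of $\superLL$ on the target side (functorial-plus-digging vs.\ Girard-style) so that the $\seLL$ side conditions $\e\leqse\e_i$ line up exactly with the available $\dgpred$ facts. Using Proposition~\ref{propgir} to pass to Girard's promotion makes this immediate, and it also preserves cut-freeness, so the ``resp.\ cut-free'' clause of the statement comes for free. All the remaining bookkeeping (that the translation commutes with the non-exponential rules, that weakening/contraction/dereliction match up) is routine.
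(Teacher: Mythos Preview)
Your handling of dereliction, weakening, and contraction is correct and matches the paper. The issue is the promotion case, where a miscomputation led you to abandon the direct route that actually works (and is what the paper does).

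You write that applying $(\prom)$ to the premise $\vdash A,\wn_{\e_1}B_1,\dotsc,\wn_{\e_n}B_n$ yields $\vdash\oc_\e A,\wn_\e B_1,\dotsc,\wn_\e B_n$. It does not: functorial promotion adds a $\wn_\e$ in front of \emph{each context formula as it stands}, so the result is $\vdash\oc_\e A,\wn_\e\wn_{\e_1}B_1,\dotsc,\wn_\e\wn_{\e_n}B_n$. Now each formula $\wn_\e\wn_{\e_i}B_i$ has exactly the shape $(\dg)$ expects, and $\dgpred(\e,\e_i,\e_i)$ holds precisely because $\e\leqse\e_i$; one $(\dg)$ per coordinate gives $\vdash\oc_\e A,\wn_{\e_1}B_1,\dotsc,\wn_{\e_n}B_n$. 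That is the whole translation for $(\oc_\e)$, and it is the paper's proof.

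Your fallback via Girardization is not wrong in outcome but is mis-justified. Proposition~\ref{propgir} goes the \emph{other} way: it eliminates $(\prom)$ and $(\dg)$ in favour of $(\promg)$. To land in $\superLL$ (which has $(\prom)$ and $(\dg)$, not $(\promg)$), what you need is the converse fact that $(\promg)$ is \emph{derivable} in $\superLL$; that derivation is exactly ``apply $(\prom)$, then apply $(\dg)$ $n$ times'', i.e.\ the direct argument above, and it requires no Girardization axioms at all. So invoking Lemma~\ref{lemseLLprop} and Proposition~\ref{propgir} here is unnecessary and points in the wrong direction; once you unfold your use of $(\promg)$ you recover the paper's translation verbatim.
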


\begin{proof}
We can apply the following translations:
\begin{align*}
\AIC{\vdash A, \A}
\RL{\llde[\e]}
\UIC{\vdash \wn_\e A, \A}
\DP
&\qquad \transfo \qquad
\AIC{\vdash A, \A}
\ZIC{\depred(\e)}
\RL{\de}
\BIC{\vdash \wn_\e A, \A}
\DP
\\[2ex]
\AIC{\vdash \A}
\AIC{\e\in\sset_W}
\RL{\llwk[\e]}
\BIC{\vdash \wn_\e A, \A}
\DP
&\qquad \transfo \qquad
\AIC{\vdash \A}
\AIC{\e\in\sset_W}
\UIC{\copred_0(\e)}
\RL{\co{}}
\BIC{\vdash \wn_\e A, \A}
\DP
\\[2ex]
\AIC{\vdash \wn_\e A, \wn_\e A, \A}
\AIC{\e\in\sset_C}
\RL{\llco[\e]}
\BIC{\vdash \wn_\e A, \A}
\DP
&\qquad \transfo \qquad
\AIC{\vdash \wn_\e A, \wn_\e A, \A}
\AIC{\e\in\sset_C}
\UIC{\copred_2(\e,\e,\e)}
\RL{\co{}}
\BIC{\vdash \wn_\e A, \A}
\DP
\\[2ex]
\AIC{\vdash A, \wn_{\e_1}A_1, \dotsc, \wn_{\e_n}A_n}
\AIC{\e\leqse\e_1\quad\dotsb\quad\e\leqse\e_n}
\RL{\oc_\e}
\BIC{\vdash \oc_\e A, \wn_{\e_1}A_1, \dotsc, \wn_{\e_n}A_n}
\DP
&\qquad \transfo \qquad \\ & \hspace{-6pt}
\AIC{\vdash A, \wn_{\e_1}A_1, \dotsc, \wn_{\e_n}A_n}
\ZIC{\prompred_n(\e)}
\RL{\prom}
\BIC{\vdash \oc_\e A, \wn_\e\wn_{\e_1}A_1, \dotsc, \wn_\e\wn_{\e_n}A_n}
\AIC{1\leq i\leq n}
\noLine
\UIC{\e\leqse\e_i}
\UIC{\dgpred(\e,\e_i,\e_i)}
\doubleLine
\RL{\dg}
\BIC{\vdash \oc_\e A, \wn_{\e_1}A_1, \dotsc,\wn_{\e_n}A_n}
\DP
\end{align*}
\end{proof}

\begin{lem}[\superLL{} to \seLL]\label{lemsuperseLL}
We can translate proofs (resp.\ cut-free proofs) of $\superLL(\sset, \depred, \copred, \dgpred, \prompred)$ into proofs (resp.\ cut-free proofs) of $\seLL(\sset, {\leqse}, \sset_W, \sset_C)$.
\end{lem}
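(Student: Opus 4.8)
The plan is to first remove every digging rule, which has no analogue in \seLL, and then match the remaining rules one by one. By Lemma~\ref{lemseLLprop} this instance satisfies the Girardization axioms, so Proposition~\ref{propgir} applies: the given proof can be replaced by a proof of the same sequent which uses neither the functorial promotion rule ($\prom$) nor the digging rule ($\dg$), but Girard's promotion ($\promg$) instead, and by the remark following that proposition this transformation preserves cut-freeness. So it is enough to translate proofs written in this Girard-style presentation.

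Next I would unfold the rules for this particular instance. Since $\dgpred(\e_1,\e_2,\e_3)$ holds exactly when $\e_1\leqse\e_2$ and $\e_2=\e_3$, and since every $\prompred_n(\e)$ holds, the rule
\begin{prooftree}
  \AIC{\vdash A,\wn_{\eb_1}A_1,\dotsc,\wn_{\eb_n}A_n}
  \AIC{\dgpred(\e,\eb_1,\e_1)\quad\dotsb\quad\dgpred(\e,\eb_n,\e_n)}
  \AIC{\prompred_n(\e)}
  \RL{\promg}
  \TIC{\vdash\oc_\e A,\wn_{\e_1}A_1,\dotsc,\wn_{\e_n}A_n}
\end{prooftree}
forces $\eb_i=\e_i$ with side condition $\e\leqse\e_i$, so it is literally the $\oc_\e$ rule of \seLL{} up to rewriting the side conditions; nothing else is needed for promotion.

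Then I would translate the $\wn$-rules. A ($\de$) rule comes with $\depred(\e)$, which always holds here, so it becomes the unconditional $\llde[\e]$ rule of \seLL. A ($\co{}$) rule with $k=0$ can only be applied when $\copred_0(\e)$, i.e.\ $\e\in\sset_W$, and then maps to $\llwk[\e]$; with $k=1$ it can only be applied when $\copred_1(\e,\e)$, in which case its premise and conclusion coincide and it is simply deleted; with $k=2$ it can only be applied when $\copred_2(\e,\e,\e)$, i.e.\ $\e\in\sset_C$, and then maps to $\llco[\e]$; and $\copred_k$ is never satisfied for $k\geq 3$, so no such instance can occur. All non-exponential rules, including ($\cut$), are common to both systems and are kept unchanged, so a cut-free proof is sent to a cut-free proof. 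This produces the desired $\seLL(\sset,{\leqse},\sset_W,\sset_C)$-proof of the same sequent.

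The hard part is really the elimination of digging: there is no way to derive $\vdash\wn_{\e_2}A,\A$ from $\vdash\wn_{\e_1}\wn_{\e_2}A,\A$ directly inside \seLL, so the Girardization step is essential; once it is in place, the rest is a routine rule-by-rule check that only relies on the explicit definition of the instance (in particular on $\copred_1$ being subdiagonal and on the side conditions attached to $\sset_W$ and $\sset_C$).
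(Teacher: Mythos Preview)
Your proof is correct and follows exactly the same approach as the paper: apply Girardization (via Lemma~\ref{lemseLLprop} and Proposition~\ref{propgir}) to eliminate ($\dg$) and ($\prom$) in favour of ($\promg$), then observe that ($\promg$) in this instance is literally the $\oc_\e$ rule of \seLL. You spell out the translation of the remaining $\wn$-rules more explicitly than the paper does, but this is only additional detail, not a different argument.
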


\begin{proof}
By Lemma~\ref{lemseLLprop} and Proposition~\ref{propgir}, we can translate the proofs of the current instance of \superLL{} into proofs without digging and functorial promotion but with Girard's promotion instead.
Such proofs correspond to \seLL{} proofs since we have:
\begin{equation*}
\AIC{\vdash A, \wn_{\e_1} A_1, \dotsc, \wn_{\e_n} A_n}
\AIC{1\leq i \leq n}
\noLine
\UIC{\e \leqse \e_i}
\UIC{\dg(\e, \e_i, \e_i)}
\ZIC{\prompred_n(\e)}
\RL{\promg}
\TIC{\vdash \oc_{\e} A, \wn_{\e_1} A_1, \dotsc, \wn_{\e_n} A_n}
\DP
\qquad \transfo \qquad
\AIC{\vdash A, \wn_{\e_1} A_1, \dotsc, \wn_{\e_n} A_n}
\AIC{1\leq i \leq n}
\noLine
\UIC{\e \leqse \e_i}
\RL{\oc_{\e}}
\BIC{\vdash \oc_{\e} A, \wn_{\e_1} A_1, \dotsc, \wn_{\e_n} A_n}
\DP
\qedhere
\end{equation*}
\end{proof}

\subsection{\texorpdfstring{\BsLL}{BSLL}}

We consider an ordered semi-ring $(\sset,{+},0,{\cdot},1,{\leqse})$.
From it we can define an instance of \superLL:
\begin{equation*}
  \begin{array}{|c|c|c|c|c|}
    \hline
    \sset & \depred & \copred & \dgpred & \prompred \\
    \hline
    \sset & \depred(1) & \copred_0(0) & \dgpred(\e_1,\e_2,\e_1\cdot\e_2) & \forall n\in\Nat,\;\prompred_n(\e) \\
    & & \copred_1(\e,\e')\text{ if }\e\leqse\e' & & \\
    & & \copred_2(\e_1,\e_2,\e_1+\e_2)& & \\
    \hline
  \end{array}
\end{equation*}

\begin{lem}[Properties]\label{lemBsLLprop}
$\superLL(\sset,\depred,\copred,\dgpred,\prompred)$ satisfies the cut-elimination axioms, the expansion axiom and the Girardization axioms.
\end{lem}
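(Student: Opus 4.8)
The plan is to reduce everything to two observations: that $\prompred$ is the full relation, and that this instance is \emph{functional}, so that Remark~\ref{remgirax} turns the Girardization axioms into standard (ordered) semi-ring identities.

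First I would deal with the easy half. Since $\prompred_n(\e)$ holds for every $n\in\Nat$ and every $\e\in\sset$, the remarks of Section~\ref{secceax} give at once the cut-elimination axioms (\ref{axceprom}), (\ref{axceco}) and (\ref{axcedg}), as well as the expansion axiom (\ref{axea}); and, as noted in Remark~\ref{remgirax}, the Girardization axiom (\ref{axgirax}) follows from (\ref{axea}). No semi-ring structure is needed for any of this.

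Next I would record that the instance is functional: $\depred$ is satisfied only by $1$, $\copred_0$ only by $0$, for $k\geq 2$ the predicate $\copred_k$ determines its last argument as $\e_1+\dotsb+\e_k$ (and is empty for $k\geq 3$), and $\dgpred$ determines its last argument as $\e_1\cdot\e_2$; the induced operations $1$, $0$, $+_k$ and $\times$ are thus exactly the total semi-ring operations. By Remark~\ref{remgirax} the remaining Girardization axioms then reduce to: $1\cdot\e=\e$ together with reflexivity of $\leqse$ (for (\ref{axgirde})); the annihilation law $0\cdot\e=0$ and right distributivity $(\eb_1+\dotsb+\eb_k)\cdot\e=\eb_1\cdot\e+\dotsb+\eb_k\cdot\e$ (for (\ref{axgirco}), cases $k=0$ and $k\geq 2$, the latter being vacuous for $k\geq 3$ since $\copred_k$ is then empty); associativity of $\cdot$ (for (\ref{axgirdg})); and $\e\cdot 1=\e$ (for (\ref{axgirdedg})) --- all valid in any ordered semi-ring.

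The only point not covered by the shortcut of Remark~\ref{remgirax} is the case $k=1$ of (\ref{axgirco}), since $\copred_1$ is intentionally non-functional (it encodes $\leqse$). Here I would argue directly: given $\copred_1(\eb_1,\e_1)$, \ie{} $\eb_1\leqse\e_1$, and $\dgpred(\e_1,\e_2,\e)$, \ie{} $\e=\e_1\cdot\e_2$, take $\eb'_1:=\eb_1\cdot\e_2$; then $\dgpred(\eb_1,\e_2,\eb'_1)$ holds by definition and $\copred_1(\eb'_1,\e)$ is the inequality $\eb_1\cdot\e_2\leqse\e_1\cdot\e_2$, which follows from $\eb_1\leqse\e_1$ by compatibility of $\leqse$ with right multiplication. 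This is the one place (within the Girardization part) where the semi-ring being \emph{ordered} is actually used, and hence the step to treat with a little care; everything else is routine unfolding of the definitions, so I do not expect a genuine obstacle.
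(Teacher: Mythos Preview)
Your proposal is correct and takes essentially the same approach as the paper, which simply says ``we mostly rely on Remark~\ref{remgirax}''. You spell out more detail than the paper does --- in particular the $k=1$ case of~(\ref{axgirco}), which is indeed the one spot not covered by the functional reformulation and the place where compatibility of $\leqse$ with multiplication is genuinely needed --- but the underlying idea is the same.
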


\begin{proof}
Concerning the Girardization axioms, we mostly rely on Remark~\ref{remgirax}.
\end{proof}

\begin{lem}[\BsLL{} to \superLL]\label{lemBsLLsuper}
We can translate proofs (resp.\ cut-free proofs) of $\BsLL(\sset,{+},0,{\cdot},1,{\leqse})$ into proofs (resp.\ cut-free proofs) of $\superLL(\sset, \depred, \copred, \dgpred, \prompred)$.
\end{lem}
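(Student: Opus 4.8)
The plan is to proceed exactly as in the proof of Lemma~\ref{lemseLLsuper}: we translate each exponential rule of \BsLL{} into a short derivation of \superLL{} using the parameters fixed in the table above, while every non-exponential rule is common to both systems and translates to itself. Since each of these local translations yields a cut-free fragment whenever its premises are cut-free, cut-free proofs are automatically mapped to cut-free proofs, which settles the ``resp.\ cut-free'' clause. It therefore suffices to treat the five exponential rules of \BsLL{}.

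First I would dispatch the four rules that correspond one-to-one with a \superLL{} rule. The rule ($\llde[1]$) becomes an application of ($\de$), valid because $\depred(1)$ holds. The rule ($\llwk[0]$) becomes ($\co{}$) with $k=0$, valid because $\copred_0(0)$ holds. The contraction rule ($\llco[{\_+\_}]$) of \BsLL, turning $\vdash\wn_{\e_1}A,\wn_{\e_2}A,\A$ into $\vdash\wn_{\e_1+\e_2}A,\A$, becomes ($\co{}$) with $k=2$, valid because $\copred_2(\e_1,\e_2,\e_1+\e_2)$ holds. Finally the subsumption rule (${\leqse}$), which from $\e_1\leqse\e_2$ sends $\vdash\wn_{\e_1}A,\A$ to $\vdash\wn_{\e_2}A,\A$, becomes ($\co{}$) with $k=1$, valid because by definition $\copred_1(\e_1,\e_2)$ is precisely $\e_1\leqse\e_2$.

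The only rule that is not a plain renaming is the promotion ($\oc_{\_\cdot\_}$), which from $\vdash A,\wn_{\e_1}B_1,\dotsc,\wn_{\e_n}B_n$ derives $\vdash\oc_\e A,\wn_{\e\cdot\e_1}B_1,\dotsc,\wn_{\e\cdot\e_n}B_n$. Here I would first apply the functorial promotion ($\prom$), with side condition $\prompred_n(\e)$ --- which holds since the $\prompred_n$ are full --- to obtain $\vdash\oc_\e A,\wn_\e\wn_{\e_1}B_1,\dotsc,\wn_\e\wn_{\e_n}B_n$, and then apply the digging rule ($\dg$) once for each index $i$, with side condition $\dgpred(\e,\e_i,\e\cdot\e_i)$, which holds by the very definition of $\dgpred$ in this instance, contracting each $\wn_\e\wn_{\e_i}B_i$ into $\wn_{\e\cdot\e_i}B_i$. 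This is the same pattern as the promotion case of Lemma~\ref{lemseLLsuper}.

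I do not expect any genuine obstacle: the instance has been tailored so that each \BsLL{} exponential constructor is read off directly from a \superLL{} parameter. The single point deserving attention is the bookkeeping in the digging step --- the outer modality must carry $\e$ and the inner one $\e_i$, so that $\dgpred(\e,\e_i,\e\cdot\e_i)$ yields exactly $\wn_{\e\cdot\e_i}$, consistently with the (not necessarily commutative) multiplication of the semi-ring used by \BsLL. A closing remark notes that none of these translations introduces a ($\cut$) rule, so the cut-free statement is immediate.
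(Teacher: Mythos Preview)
Your proposal is correct and follows essentially the same approach as the paper: the paper only spells out the promotion case (functorial promotion followed by $n$ applications of digging, exactly as you describe) and leaves the other exponential rules implicit, whereas you make all five translations explicit. The treatment of the promotion rule, the point about cut-freeness, and the bookkeeping on the order of arguments in $\dgpred(\e,\e_i,\e\cdot\e_i)$ all match.
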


\begin{proof}
We only give the translation for the promotion rule:
\begin{equation*}
\AIC{\vdash A, \wn_{\e_1}A_1, \dotsc, \wn_{\e_n}A_n}
\RL{\oc_{\_\cdot\_}}
\UIC{\vdash \oc_\e A, \wn_{\e\cdot\e_1}A_1, \dotsc, \wn_{\e\cdot\e_n}A_n}
\DP
\qquad \transfo \qquad
\AIC{\vdash A, \wn_{\e_1}A_1, \dotsc, \wn_{\e_n}A_n}
\ZIC{\prompred_n(\e)}
\RL{\prom}
\BIC{\vdash \oc_\e A, \wn_\e\wn_{\e_1}A_1, \dotsc, \wn_\e\wn_{\e_n}A_n}
\AIC{1\leq i\leq n}
\UIC{\dgpred(\e,\e_i,\e\cdot\e_i)}
\doubleLine
\RL{\dgpred}
\BIC{\vdash \oc_\e A, \wn_{\e\cdot\e_1}A_1, \dotsc, \wn_{\e\cdot\e_n}A_n}
\DP
\qedhere
\end{equation*}
\end{proof}

\begin{lem}[\superLL{} to \BsLL]\label{lemsuperBsLL}
We can translate proofs (resp.\ cut-free proofs) of $\superLL(\sset, \depred, \copred, \dgpred, \prompred)$ into proofs (resp.\ cut-free proofs) of $\BsLL(\sset,{+},0,{\cdot},1,{\leqse})$.
\end{lem}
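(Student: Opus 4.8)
The plan is to reduce immediately to proofs without digging. Since the chosen instance of \superLL{} is the one associated to the ordered semi-ring $(\sset,{+},0,{\cdot},1,{\leqse})$, it satisfies the Girardization axioms by Lemma~\ref{lemBsLLprop}, so Proposition~\ref{propgir} applies: any \superLL{} proof can be replaced by a proof of the same sequent using neither the functorial promotion rule ($\prom$) nor the digging rule ($\dg$), but Girard's promotion ($\promg$) instead, and this transformation preserves cut-freeness. Note that we do \emph{not} try to eliminate subsumption here (\BsLL{} keeps an explicit ${\leqse}$ rule), so Girardization is the only preliminary step needed. It then suffices to observe that a proof in this Girard-style shape is already a \BsLL{} proof, up to the names of the rules.

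To see this I would go through the exponential rules one by one (the non-exponential rules are literally shared by the two systems). A ($\de$) rule carries the side condition $\depred(\e)$, which in this instance holds only for $\e=1$; hence it is exactly $\llde[1]$. A ($\co{}$) rule of arity $k$ carries $\copred_k(\e_1,\dotsc,\e_k,\e)$: for $k=0$ this forces $\e=0$, giving $\llwk[0]$; for $k=1$ it says $\e_1\leqse\e_2$, giving \BsLL's subsumption rule $({\leqse})$; for $k=2$ it says $\e=\e_1+\e_2$, giving $\llco[{\_+\_}]$; and for $k\geq 3$ the predicate $\copred_k$ is empty, so no such rule occurs in the proof. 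Finally a ($\promg$) rule has premise $\vdash A,\wn_{\eb_1}A_1,\dotsc,\wn_{\eb_n}A_n$ and side conditions $\dgpred(\e,\eb_i,\e_i)$ for $1\leq i\leq n$ together with $\prompred_n(\e)$; here $\prompred_n$ is full and $\dgpred(\e,\eb_i,\e_i)$ holds exactly when $\e_i=\e\cdot\eb_i$, so the conclusion is forced to be $\vdash\oc_\e A,\wn_{\e\cdot\eb_1}A_1,\dotsc,\wn_{\e\cdot\eb_n}A_n$, which is precisely an instance of \BsLL's $\oc_{\_\cdot\_}$ rule (the degenerate case $n=0$ included).

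Combining these, the proof produced by Proposition~\ref{propgir} is, after relabelling rules, a \BsLL{} proof of the same sequent; and it is cut-free whenever the original \superLL{} proof is, since Girardization preserves cut-freeness and the relabelling introduces no cut. I expect the only mildly delicate point to be the verification that ($\promg$) and $\oc_{\_\cdot\_}$ coincide \emph{together with their side conditions}: one must check that the always-true $\prompred_n$ imposes no extra width constraint and that the output signatures in ($\promg$) are pinned down to $\e\cdot\eb_i$, so that no indexing mismatch remains. The rest (the arity-$k$ case analysis for $\copred$, the emptiness of $\depred$ away from $1$ and of $\copred_k$ for $k\geq 3$) is routine bookkeeping, and the Girardization axioms themselves are already discharged in Lemma~\ref{lemBsLLprop}.
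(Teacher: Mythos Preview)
Your proposal is correct and follows exactly the paper's approach: apply Lemma~\ref{lemBsLLprop} and Proposition~\ref{propgir} to eliminate ($\prom$) and ($\dg$) in favour of ($\promg$), then observe that ($\promg$) in this instance is precisely \BsLL's $\oc_{\_\cdot\_}$ rule. You have simply spelled out the rule-by-rule correspondence (for $\de$, $\co{}$ at each arity, and the case $n=0$) that the paper leaves implicit in its one-line proof.
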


\begin{proof}
By Lemma~\ref{lemBsLLprop} and Proposition~\ref{propgir}, we can translate all the proof of our instance of \superLL{} into proofs without digging and functorial promotion but with Girard's promotion instead, which in this case is exactly the promotion in \BsLL.
\end{proof}

\section{Conclusion}

We have presented \superLL{}, a parameterized extension of linear logic.
We have shown that, under some conditions, this system eliminates cuts (Theorem \ref{thmce}).
We have described many existing linear logic systems as instances of \superLL{} (Section \ref{secsubsyst}), so that cut elimination for these systems can be easily deduced.

Our general goal is not only to prove these theorems on paper, but also to formally prove them on a proof assistant.
In this context, it is particularly interesting to be able to factorize the code of many proofs into one.
This is still work in progress, but the objective is to use \superLL{} as new core system for the \Coq{} library \yalla~\cite{yalla}.
This would also allow users to design their own linear logic variant as an instance of \superLL{} and to rely on the provided cut-elimination proof.

However, not every linear logic system is an instance of \superLL.
For instance, Bounded Linear Logic (\BLL)~\cite{bll} is a system where signatures are polynomials with dependencies inside formulas.
Other systems constrain the exponential rules by global restrictions in the proofs which are not captured by \superLL{} (see for example \logsys{L$^3$} and \logsys{L$^4$}~\cite{l3,stratll}).

The work presented here focuses on the sequent calculus presentation of linear systems.
However a key syntactic contribution of Linear Logic is the introduction of the graphical syntax of proof-nets~\cite{ll}.
Defining a notion of proof-nets for \superLL{} should not be too difficult since the cut-elimination steps we deal with in the sequent calculus should be local enough.
It would be an important step towards the study of strong normalization for \superLL~\cite{snll}.

\bibliographystyle{eptcs}
\bibliography{superLL}
\end{document}